\definecolor{MyBlue}{cmyk}{1,0.13,0,0.63}
\definecolor{MyGreen}{cmyk}{0.91,0,0.88,0.52}
\newcommand{\mylinkcolor}{MyBlue}
\newcommand{\mycitecolor}{MyGreen}
\newcommand{\myurlcolor}{black}
\newtheorem{thm}{Theorem}[section]
\newtheorem*{thm*}{Theorem}
\newtheorem{cor}[thm]{Corollary}
\newtheorem{lemma}[thm]{Lemma}
\newtheorem{prop}[thm]{Proposition}
\theoremstyle{definition}
\newtheorem{defn}[thm]{Definition}
\newtheorem{assumption}[thm]{Assumption}
\theoremstyle{remark}
\newtheorem{remark}[thm]{Remark}
\newtheorem{example}[thm]{Example}
\numberwithin{equation}{section}
\newcommand{\End}{\ensuremath{\mathrm{End}}}
\newcommand{\wh}{\ensuremath{\widehat}}
\newcommand{\wt}{\ensuremath{\widetilde}}
\newcommand{\R}{\ensuremath{\mathbb{R}}}
\newcommand{\N}{\ensuremath{\mathbb{N}}}
\newcommand{\Z}{\ensuremath{\mathbb{Z}}}
\newcommand{\C}{\ensuremath{\mathbb{C}}}
\newcommand{\T}{\ensuremath{\mathbb{T}}}
\def\calT{\mathcal{T}}
\def\calC{\mathcal{C}}
\def\calK{\mathcal{K}}
\def\calB{\mathcal{B}}
\def\calH{\mathcal{H}}
\def\calD{\mathcal{D}}
\def\calA{\mathcal{A}}
\def\calE{\mathcal{E}}
\newcommand{\ol}{\overline}
\theoremstyle{definition}
\DeclareMathOperator{\Dom}{Dom}
\DeclareMathOperator{\Index}{Index}
\DeclareMathOperator{\Ker}{Ker}
\newcommand{\rst}[1]{\ensuremath{{\mathbin\upharpoonright}%
\raise-.5ex\hbox{$#1$}}}
\newcommand{\Rmnum}[1]{\expandafter\@slowromancap\romannumeral #1@}
\author{C. Bourne}
\address{Mathematical Sciences Institute, Australian National University, Canberra, ACT 0200, Australia. School of Mathematics and Applied Statistics, University of Wollongong, Wollongong, NSW 2522, Australia. 
Department Mathematik, Friedrich-Alexander-Universit\"{a}t Erlangen-N\"{u}rnberg, 
Cauerstra{\ss}e 11, 91058 Erlangen, Germany}
\email{bourne@math.fau.de}
\author{A. L. Carey}
\address{Mathematical Sciences Institute, Australian National University, Canberra, ACT 0200, Australia. School of Mathematics and Applied Statistics, University of Wollongong, Wollongong, NSW 2522, Australia}
\email{alan.carey@anu.edu.au}
\author{A. Rennie}
\address{School of Mathematics and Applied Statistics, University of Wollongong, Wollongong, NSW 2522, Australia}
\email{renniea@uow.edu.au}
\date{\today}
\begin{document}

\begin{abstract}
We study topological insulators, regarded as physical systems giving rise to 
topological invariants determined by symmetries both linear and anti-linear. 
Our perspective is that of noncommutative index theory of operator algebras. 
In particular we formulate the index problems using Kasparov theory, both 
complex and real. We show that the periodic table of topological insulators 
and superconductors can be realised as a real or complex index pairing of a 
Kasparov module capturing internal symmetries of the Hamiltonian with a 
spectral triple encoding the geometry of the sample's (possibly noncommutative) Brillouin zone. 
\end{abstract}

\title{A noncommutative framework for topological insulators}
\maketitle

Keywords: Topological insulators, $KK$-theory, spectral triple

Subject classification: Primary: 81R60, Secondary: 19K35, 81V70

\tableofcontents


\section{Introduction}
\subsection{Overview}
This paper is an investigation of the links between topological states of matter 
and real Kasparov theory.
It is independent of (though motivated by) our previous paper~\cite{BCR14}, 
where we examined a discrete (tight-binding) 
model of the integer quantum Hall effect in complex Kasparov theory, which we briefly describe. 

The discrete model of the quantum Hall effect gives rise to a 
spectral triple in the boundary-free (bulk) system (we explain this terminology later). 
When a boundary/edge is added to the model, bulk and edge systems can be 
linked by a short exact sequence~\cite{SBKR02}. In \cite{BCR14} we proved that the bulk 
spectral triple can be factorised as the Kasparov product of a 
Kasparov module representing the short exact sequence with a spectral triple 
coming from observables concentrated at the boundary of the sample.

In~\cite{BCR14}, we commented that the method used allowed the 
Hall conductance to be computed without passage to cyclic homology and 
cohomology as in~\cite{KR06,SBKR02,KSB04b}, and that this would be 
useful for detecting torsion invariants that cyclic theory cannot detect.

In this paper, we show that Kasparov theory in the real case (that is $KKO$-theory) 
may be used to give a noncommutative framework for topological insulators in 
which torsion invariants may be detected and effectively computed. 
Our invariants are (real) Kasparov classes, and so automatically  topological 
invariants. These Kasparov modules can be naturally identified
with Clifford modules, and then with $KO$-classes (of a point) via the 
treatment of $KO$-theory
by Atiyah-Bott-Shapiro \cite{ABS64}.
This procedure is in contrast to the more usual methods of obtaining
$\Z_2$-invariants in the literature. Eigenvalue counting (mod 2) and
similar methods do not a priori produce topological invariants, necessitating an extra step to 
prove homotopy invariance.

One motivation for writing this article as a review is to give some 
background on the application of Kasparov theory to topological states 
of matter. The interested reader can go further into the study of index theory 
applicable to systems that require real $K$-theory and noncommuative methods.  
At the level of the algebra of observables, systems with anti-linear symmetries 
require the algebra to be real or Real (note the capitalisation). 
Elsewhere we will see that the $K$-theoretic formulation of index theory 
provides a framework to consider disordered systems and the bulk-edge 
correspondence for topological insulators\footnote{In work in progress \cite{BKR} we 
have explored these issues in some detail.}.

The `periodic table for topological insulators' has a long history.  We refer 
the reader to the work of Altland and Zirnbauer \cite{AZ}, Ryu et al.~\cite{RSFL10,SRFL08} and Kitaev \cite{Kitaev09}.
The idea of these and other authors is to show how the bulk invariants of interest 
in systems with symmetries, such as topological insulators, can be related to real 
and complex (commutative) $K$-theory. More recently, papers of a more 
mathematical nature have appeared establishing the $K$-theoretic properties 
of topological insulators~\cite{DNSB14b,FM13,GSB15,Kellendonk15,Thiang14}. 

The approach of these papers is adapted to the study of the (commutative) 
geometry of the Brillouin zone. It is widely appreciated that the introduction 
of disorder will require a
noncommutative approach. One purpose of this article is to propose a 
methodology flexible enough to handle noncommutative observable algebras. 
Whilst not completely addressing the issue of disorder, we do show that our technique extends to the
noncommutative algebra of disordered Hamiltonians provided that they
retain a spectral gap.
 
We remark that even in the commutative viewpoint the links between the 
bulk-edge correspondence, real/Real $K$-theory and $K$-homology 
are less studied.  In this article we examine how real Kasparov theory can 
be used to derive the invariants of interest for bulk systems (i.e. systems with no boundary).  
In a separate work we will show 
how the framework developed here is employed to establish a bulk-edge correspondence 
of topological insulators in arbitrary dimension.

The novelty of our approach is that it exploits the full bivariant $KK$-theory 
as developed by Kasparov~\cite{Kasparov80}, and utilises the unbounded setting 
as developed in~\cite{BJ83,BMvS13,FR15,KL13,Kucerovsky97,MeslandMonster, MR15}
so that all our constructions are explicit and   have natural 
physical and/or geometric interpretations.

\subsection{Outline of the paper}
The new results in this article begin in Section \ref{sec:bulk_theory} after a 
review in Section \ref{sec:Insulator_lit_review} of previous work.  These new results concern 
 a derivation of the $K$-theoretic classification of topological states of matter in Kasparov theory, 
 complex and real. The work  in~\cite{GSB15,Kellendonk15,Thiang14} also translates into this language. 

While a derivation of the periodic table is not exactly new, both in the physics 
and mathematical literature, we are of the opinion that an understanding of 
the systematic application of Kasparov's powerful machinery to the insulator problem is 
critical for further developments, such as `index theorems' 
for torsion invariants and the incorporation
of disorder.  (Note that the approach to index theorems for insulators using 
the non-commutative Chern character, \cite{PLB13,PSB14}, is not applicable 
to torsion invariants. See \cite{GSB15} for an alternative viewpoint to ours.) 
In the second half of Section \ref{sec:bulk_theory} we show how invariants 
arise through a Clifford module valued index using the Kasparov product.

In the last section we show how our general method applies to some of the 
examples of interest in the physics and mathematics literature. This includes 
the well-known Kane-Mele model of time-reversal invariant systems as well as 
$3$-dimensional examples.

For the reader's benefit, we also include an appendix that compiles some of the 
basic definitions and
results of Kasparov theory for real $C^*$-algebras.

The second stage in this program is to study the  bulk-edge correspondence of 
topological insulators using Kasparov theory. We make some preliminary comments in Section \ref{subsec:bulk_edge}, 
but delay a full investigation for future work~\cite{BKR}. The use of Kasparov 
theory and the intersection product to study systems with boundary can potentially 
be extended further to systems with disorder and continuous models.

\subsubsection*{Acknowledgements}
The authors thank Hermann Schulz-Baldes 
for useful discussions. We also thank Johannes Kellendonk, Guo Chuan Thiang 
and Yosuke Kubota for a careful reading of an 
earlier version of this paper. All authors thank the Hausdorff Institute for Mathematics for support. 
CB and AC thank the Advanced Institute for Materials Research at Tohoku University for 
hospitality while some of this article was written. All authors acknowledge the support 
of the Australian Research Council.


\section{Review}
\label{sec:Insulator_lit_review}

\subsection{Motivating example: the integer quantum Hall effect} \label{subsec:IQHE}

Our framework for topological insulators builds from the use of 
noncommutative geometry to explain the quantum Hall effect by 
Bellissard and others (see~\cite{Bellissard94, NB90}), 
which uses the language of Fredholm modules. 
In our previous work \cite{BCR14} we preferred the language of 
spectral triples because of the close relationship with (unbounded) Kasparov
theory which formed an essential tool in our approach.  
Recall that a complex spectral triple $(\mathcal A, \mathcal H, \mathcal D)$ consists of
an algebra $\mathcal A$ of (even) operators on a $\Z_2$-graded separable complex Hilbert space 
$\mathcal H$ and a distinguished densely defined (odd) self-adjoint operator $\mathcal D$
(herein referred to as a `Dirac-type' operator) with the properties that
commutators $[\mathcal D,a]$ are bounded for all $a\in \mathcal A$ 
and products $a(1+\mathcal D^2)^{-1/2}$ are compact for all $a\in \mathcal A$.

\subsubsection{The discrete integer quantum Hall system}
We review   the discrete or tight-binding quantum Hall system 
as considered in~\cite{Bellissard94,SBKR02, CM96}. This model motivates our later discussions 
and allows our constructions and computations to be as transparent as possible. 

In the case without boundary, where $\calH =\ell^2(\Z^2)$, we have magnetic translations 
$\wh{U}$ and $\wh{V}$ acting as unitary operators on $\ell^2(\Z^2)$. These operators 
commute with the unitaries $U$ and $V$ that generate the Hamiltonian $H=U+U^* + V + V^*$. 
We choose the Landau gauge so that 
\begin{align*}
   &(U\lambda)(m,n) = \lambda(m-1,n),  &&(V\lambda)(m,n)=e^{-2\pi i\phi m}\lambda(m,n-1), \\
   &(\wh{U}\lambda)(m,n) = e^{-2\pi i \phi n}\lambda(m-1,n),  &&(\wh{V}\lambda)(m,n) = \lambda(m,n-1),
\end{align*}
where $\phi$ has the interpretation as the magnetic flux through a unit cell and 
$\lambda\in\ell^2(\Z^2)$. We note that $\wh{U}\wh{V}=e^{-2\pi i\phi}\wh{V}\wh{U}$ 
and $UV = e^{2\pi i\phi}VU$, so $C^*(U,V)\cong A_{\phi}$, the rotation algebra, 
and $C^*(\wh{U},\wh{V})\cong A_{-\phi}$. We can also interpret 
$A_{-\phi} \cong A_{\phi}^{\mathrm{op}}$, where $A_\phi^\mathrm{op}$ 
is the opposite algebra with multiplication 
$(ab)^\mathrm{op} = b^\mathrm{op} a^\mathrm{op}$. To see this identification we compute,
$$ 
{U}^\mathrm{op}{V}^\mathrm{op} 
= \left({V}{U}\right)^\mathrm{op} 
= e^{-2\pi i\phi}\left({U}{V}\right)^\mathrm{op} 
= e^{-2\pi i\phi}{V}^\mathrm{op}{U}^\mathrm{op}. 
$$

Our choice of gauge also means that $C^*({U},{V})\cong C^*({U})\rtimes_\eta \Z$, 
where ${V}$ is implementing the crossed-product structure via the automorphism 
$\eta({U}^m) = {V}^* {U}^m {V}$.

\medskip

The topological properties of the quantum Hall effect 
come from two pieces of information: the Fermi projection 
of the Hamiltonian and a spectral triple encoding the 
geometry of the (noncommutative) Brillouin zone. 
Provided the Fermi level $\mu\notin \sigma(H)$, 
then the Fermi projection $P_\mu$ defines a class 
in the complex $K$-theory group $[P_\mu]\in K_0(A_{\phi})$.

\begin{prop}[\cite{BCR14}] 
\label{prop:qH_spec_trip}
Let $\calA_\phi$ be the dense $\ast$-subalgebra of 
$A_\phi$ generated by finite polynomials of $U$ and $V$, and let $X_1,\,X_2$ be
the operators on $\ell^2(\Z^2)$ given by $(X_j\xi)(n_1,n_2)=n_j\xi(n_1,n_2)$. Then
$$ 
\left( \calA_{\phi},\, \ell^2(\Z^2)\otimes\C^2=\begin{pmatrix} \ell^2(\Z^2)\\ \ell^2(\Z^2)\end{pmatrix}, \,
\begin{pmatrix} 0 & X_1-iX_2 \\ X_1+iX_2 & 0 \end{pmatrix}, \,
\gamma=\begin{pmatrix} 1 & 0 \\ 0 & -1 \end{pmatrix} \right) 
$$
is a complex spectral triple.
\end{prop}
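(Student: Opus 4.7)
The plan is to verify the four ingredients in the definition of a complex spectral triple: the grading, self-adjointness of the Dirac-type operator, boundedness of commutators, and compactness of the resolvent.

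First I would set up the unbounded operator carefully. The operators $X_1,X_2$ are defined initially on the dense subspace of finitely supported sequences in $\ell^2(\Z^2)$, where they commute, are symmetric and essentially self-adjoint (they are just multiplication by the coordinate functions in the standard position basis). Their closures mutually commute, so $X_1\pm iX_2$ are closed normal operators, and the off-diagonal operator $\mathcal{D}$ is self-adjoint on the natural domain $\Dom(X_1)\cap\Dom(X_2)\subset\ell^2(\Z^2)\otimes\C^2$. With respect to $\gamma$ the algebra $\mathcal{A}_\phi$ acts diagonally as $a\otimes I_2$ and is therefore even, while $\mathcal{D}$ is manifestly off-diagonal and hence odd.

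Next I would verify the bounded commutator condition. Since $\mathcal{A}_\phi$ is generated as a $\ast$-algebra by $U$ and $V$, and since the commutator is a derivation, it suffices to check boundedness on these two generators. A direct computation in the position basis gives
\begin{align*}
[X_1,U]&=U, & [X_2,U]&=0, \\
[X_1,V]&=0, & [X_2,V]&=V,
\end{align*}
so $[X_1+iX_2,U]=U$ and $[X_1+iX_2,V]=iV$ are bounded. This yields $[\mathcal{D},a]$ bounded (and in fact in $M_2(\mathcal{A}_\phi)$) for every polynomial $a$ in $U,V$; in particular every $a\in\mathcal{A}_\phi$ preserves the domain of $\mathcal{D}$, so the commutator genuinely makes sense as a bounded operator on $\ell^2(\Z^2)\otimes\C^2$.

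For the compact resolvent condition, since $X_1$ and $X_2$ commute, one computes $\mathcal{D}^2=(X_1^2+X_2^2)\otimes I_2$, so $(1+\mathcal{D}^2)^{-1/2}$ is the multiplication operator by $(1+m^2+n^2)^{-1/2}$ on $\ell^2(\Z^2)$ (tensored with $I_2$). This function lies in $c_0(\Z^2)$, hence $(1+\mathcal{D}^2)^{-1/2}$ is compact, and therefore $a(1+\mathcal{D}^2)^{-1/2}$ is compact for every bounded $a$, in particular every $a\in\mathcal{A}_\phi$. The main obstacle, and the only point requiring genuine care, is the domain issue at the outset: one must ensure that $\mathcal{D}$ really is self-adjoint (rather than merely symmetric) and that the polynomial algebra $\mathcal{A}_\phi$ preserves its domain, so that all the algebraic manipulations with commutators take place on a common dense core; the explicit commutator formulas above make this transparent.
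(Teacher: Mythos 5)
Your proof is correct and follows essentially the same route as the paper: this Proposition is cited to~\cite{BCR14}, but the higher-dimensional Proposition~\ref{prop:real_bulk_spec_trip} that subsumes it is proved here by precisely your argument, namely an explicit commutator computation on the generating (twisted) shifts showing $[X_j,S^\alpha]=\alpha_j S^\alpha$, followed by the observation that $(1+|X|^2)^{-1/2}$ is a $c_0$-multiplication operator and hence compact. Your additional care about essential self-adjointness of $\mathcal D$ and domain preservation by $\mathcal A_\phi$ is a sensible, if routine, supplement that the paper leaves implicit.
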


We consider the index pairing of $K$-theory and $K$-homology,
\begin{align*}
   K_0(A_{\phi}) \times K^0(A_{\phi}) &\to K_0(\C)\cong \Z \\
    \left([P_\mu], [X] \right) &\mapsto \Index(P_\mu (X_1+iX_2) P_\mu),
\end{align*}
where $[X]$ is the $K$-homology class of the spectral triple from 
Proposition \ref{prop:qH_spec_trip}.
A key result of Bellissard is that the Kubo formula for the 
Hall conductance can be expressed in terms of this index pairing~\cite{Bellissard94}. Namely,
\begin{equation} 
\label{eq:Bellissard_Hall_conductance}
 \sigma_H = \frac{2\pi i e^2}{h}\calT( P_\mu [\partial_1 P_\mu, \partial_2 P_\mu] ) 
 = \frac{e^2}{h} \Index(P_\mu (X_1+iX_2) P_\mu), 
\end{equation}
where $\calT$ is the trace per unit volume and $\partial_j(a) = -i[X_j,a]$ for $j=1,2$. 

The tracial formula for the conductance, also called the 
Chern number of the projection $P_\mu$, gives a computationally tractable 
expression for the index pairing. The expression comes from 
translating the index pairing of $K$-theory and $K$-homology 
into a pairing of cyclic homology and cyclic cohomology. This can be 
done for integer invariants but can \emph{not} be done for 
torsion invariants, e.g. the $\Z_2$-invariant associated to the 
time-reversal invariant systems. Therefore a cyclic formula does not 
arise in general topological insulator systems and instead 
we must deal with the $K$-theoretic index pairing directly.

Let us also briefly review the bulk-edge correspondence for the quantum Hall 
effect as studied by Kellendonk, Richter and Schulz-Baldes~\cite{SBKR00,SBKR02, KSB04a,KSB04b}. 
The algebra $A_\phi$ acts on a system without boundary and is 
considered as our `bulk algebra'. Kellendonk et al. consider a system 
with boundary/edge and construct a short exact sequence of $C^*$-algebras
$$  
0 \to C^*(U)\otimes\calK[\ell^2(\N)] \to \calT \to A_\phi \to 0, 
$$
where $C^*(U)$ acts on $\ell^2(\Z)$ and $\calT$ is a Toeplitz-like 
extension~\cite{PV80}. We consider elements in 
$C^*(U)\otimes\calK[\ell^2(\N)]$ as observables concentrated at the 
boundary of our system with edge. Kellendonk et al. define an 
`edge conductance' from the algebra $C^*(U)$, a topological invariant, 
and show that under suitable hypothesis this invariant is the same 
as the Hall conductance associated to $A_\phi$. Kellendonk et al. 
prove this result by considering the Hall conductance (and edge conductance) 
as a pairing between $K$-theory and cyclic cohomology. Recently, 
the authors adapted Kellendonk et al.'s general argument to Kasparov 
theory, avoiding the passage to cyclic cohomology~\cite{BCR14}. In 
future work, we will refine this argument to deal with real algebras 
and torsion invariants, which is required in topological insulator 
systems with time-reversal or charge-conjugation symmetry~\cite{BKR} 
(a brief summary is given in Section \ref{subsec:bulk_edge}).

\subsubsection{Higher-order Chern numbers}
We may also consider higher-dimensional magnetic Hamiltonians 
as elements in $A_{\phi}^d$, the noncommutative $d$-torus. 
We take the dense $\ast$-subalgebra $\calA_\phi^d \subset A_\phi^d$ 
of finite polynomials of (twisted) shift operators. 
As studied in~\cite{PLB13,PSB14}, there is a natural 
extension of our quantum Hall spectral triple to arbitrary dimension. 
\begin{prop}
The tuple
\begin{equation*} 
   \left( \calA_\phi^d,\, \ell^2(\Z^d)\otimes \C^N\otimes \C^\nu,\, \sum_{j=1}^d X_j\otimes 1_N\otimes \gamma^j \right)
\end{equation*}
is a spectral triple, where the matrices $\gamma^j\in M_\nu(\C)$ 
satisfy $\gamma^i \gamma^j + \gamma^j\gamma^i = 2\delta_{i,j}$ and so generate 
the complex Clifford algebra $\C\ell_d$. If $d$ is even, the spectral triple 
is graded by the operator $\gamma = (-i)^{d/2}\gamma^1\cdots\gamma^d$.
\end{prop}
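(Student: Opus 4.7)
The plan is to verify in turn the three axioms of a spectral triple (self-adjointness of the Dirac operator with dense domain, boundedness of commutators with $\calA_\phi^d$, and compactness of the resolvent after multiplication by algebra elements), and then check that $\gamma$ satisfies the grading axioms when $d$ is even. Throughout, I would write $\calD = \sum_{j=1}^d X_j \otimes 1_N \otimes \gamma^j$ and use the finitely-supported functions $C_c(\Z^d)\otimes\C^N\otimes\C^\nu$ as a convenient core.

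\textbf{Self-adjointness and the form of $\calD^2$.} The first thing I would do is compute
\begin{equation*}
\calD^2 \;=\; \sum_{j,k=1}^d X_j X_k \otimes 1_N \otimes \gamma^j\gamma^k.
\end{equation*}
Since the $X_j$ mutually commute, $X_jX_k$ is symmetric in $(j,k)$; decomposing $\gamma^j\gamma^k = \delta_{jk}\,1 + \tfrac12[\gamma^j,\gamma^k]$, the antisymmetric part drops out of the sum, so $\calD^2 = \bigl(\sum_j X_j^2\bigr)\otimes 1_N\otimes 1_\nu$. The operator $\sum_j X_j^2$ on $\ell^2(\Z^d)$ is the multiplication operator by $|n|^2$, which is essentially self-adjoint on $C_c(\Z^d)$ with spectrum $\{|n|^2 : n\in\Z^d\}$. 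From this I conclude that $\calD$ is essentially self-adjoint on the core above.

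\textbf{Bounded commutators.} It suffices to verify $[\calD,a]$ is bounded on a set of generators of $\calA_\phi^d$, since both $\calD$ and the closure property of bounded commutators behave well under the Leibniz rule. The generators are the (twisted) shift operators $U_k$ acting on $\ell^2(\Z^d)$, and a direct computation gives $[X_j,U_k] = \delta_{jk}U_k$ (up to sign depending on conventions), hence
\begin{equation*}
[\calD,\,U_k\otimes 1_N\otimes 1_\nu] \;=\; \sum_j [X_j,U_k]\otimes 1_N\otimes \gamma^j \;=\; U_k\otimes 1_N\otimes \gamma^k,
\end{equation*}
which is manifestly bounded. Extending by the derivation property through finite polynomials gives boundedness of $[\calD,a]$ for every $a\in\calA_\phi^d$.

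\textbf{Compact resolvent.} Using the computation of $\calD^2$, we have $(1+\calD^2)^{-1/2} = (1+\sum_j X_j^2)^{-1/2}\otimes 1_N\otimes 1_\nu$, and the first factor is a diagonal operator on $\ell^2(\Z^d)$ with eigenvalues $(1+|n|^2)^{-1/2}\to 0$ as $|n|\to\infty$; it is therefore compact. Tensoring by finite-dimensional identities preserves compactness, so $(1+\calD^2)^{-1/2}$ itself is compact, and hence $a(1+\calD^2)^{-1/2}$ is compact for every bounded $a$.

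\textbf{The grading when $d$ is even.} Writing $d=2m$, I would check $\gamma^2=1$, $\gamma^*=\gamma$, $\gamma\calD = -\calD\gamma$, and $[\gamma,a]=0$ for $a\in\calA_\phi^d$. Counting anticommutations shows $(\gamma^1\cdots\gamma^d)^2 = (-1)^{d(d-1)/2}$, and combined with the prefactor $(-i)^{d/2}$ this yields $\gamma^2=1$; self-adjointness follows from $(\gamma^j)^*=\gamma^j$ together with reversing the product. For each $j$, moving $\gamma^j$ through $\gamma^1\cdots\gamma^d$ yields a sign $(-1)^{d-1}=-1$ for even $d$, so $\gamma$ anticommutes with each Clifford generator and hence with $\calD$. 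Finally $\gamma$ acts only on the Clifford factor $\C^\nu$, while $\calA_\phi^d$ acts only on $\ell^2(\Z^d)$, so they commute.

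The only step that requires any care is the self-adjointness statement; everything else is essentially algebraic bookkeeping. Because the Clifford anticommutation relations collapse $\calD^2$ to a scalar operator on the Clifford factor, the analytic content reduces to the familiar fact that the multiplication operator by $|n|^2$ on $\ell^2(\Z^d)$ is essentially self-adjoint with compact resolvent, which removes the only potential obstacle.
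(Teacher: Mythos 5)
Your proof is correct, and the approach coincides with how the paper handles the analogous statement: the paper states this complex proposition without proof (citing~\cite{PLB13,PSB14}), but the proof it gives for the closely parallel \emph{real} spectral triple, Proposition~\ref{prop:real_bulk_spec_trip}, uses exactly the same two ingredients you use — the computation $[X_j,S^\alpha]=\alpha_j S^\alpha$ on the dense polynomial subalgebra to get bounded commutators, and the diagonal eigenvalue expansion $(1+|X|^2)^{-1/2}=\bigoplus_k(1+|k|^2)^{-1/2}P_k$ to get compactness of the resolvent. Your additional verifications (essential self-adjointness via the collapse $\calD^2 = \sum_j X_j^2 \otimes 1_N \otimes 1_\nu$, and the sign bookkeeping showing $\gamma^2=1$, $\gamma^*=\gamma$, $\gamma\calD=-\calD\gamma$, $[\gamma,a]=0$ for even $d$) are correct and fill in items the paper leaves implicit.
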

 We obtain `higher order Chern numbers' by taking the index 
 pairing of this spectral triple with the Fermi projection or 
 some unitary $u\in\calA$ (see also~\cite{PLB13,PSB14}).

To put this in the language of Kasparov theory we need the Kasparov product.  
In general this product
requires three $C^*$-algebras $A,B,C$ and it gives a composition rule for $KK$-classes
of the form $$KK(A,B)\times KK(B,C) \to KK(A,C).$$
We will not attempt to review the construction of the Kasparov product here 
referring to \cite{Blackadar,Kasparov80} for details.
In this paper we will exploit explicit constructions of the product at the 
level of (unbounded) representatives of Kasparov classes.
For these constructions it is not necessary to understand Kasparov's 
approach in detail, but to utilise
the work of Kucerovsky~\cite{Kucerovsky97} in simplifying the 
issues in geometric/physical examples.

 For $d$ even the particular Kasparov product relevant to our discussion will be  the following:
\begin{align*}
     KK(\C, A) \times KK(A,\C) \to KK(\C,\C) \cong \Z 
\end{align*}
The relevance of this product is due to the fact that  $KK(\C, A)$ 
is isomorphic to the $K$-theory of the algebra $A$
while elements of  $KK(A,\C)$ are represented by spectral triples 
and the integer constructed by the pairing in this instance
is, in fact, a Fredholm index.
Applying this  language of Kasparov theory to the even-dimensional systems, we have
the product
\begin{align*}
     K&K(\C, A_\phi^d) \times KK(A_\phi^d,\C) \to KK(\C,\C) \cong \Z 
\end{align*}
and  the integer constructed from this product is given by
\begin{align*}
    C_d &:= [P_\mu] \hat\otimes_A \left[\left( \calA_\phi^d, \ell^2(\Z^d)\otimes \C^N\otimes \C^\nu, X=\sum_{j=1}^d X_j\otimes 1_N\otimes \gamma^j, \gamma \right) \right] \\
       &= \Index(P_\mu X_+ P_\mu),
\end{align*}
where $X = \begin{pmatrix} 0 & X_- \\ X_+ & 0 \end{pmatrix}$ is 
decomposed by representing the $\Z_2$-grading 
$\gamma = \begin{pmatrix} 1 & 0 \\ 0 & -1\end{pmatrix}$. The case of $d$ odd 
has an analogous formula but we are taking a product of 
$KK(\C\ell_1,A_\phi^d)$ with $KK(A_\phi^d,\C\ell_1)$. 

In this paper we will  replace this complex index pairing by its real analogue.
We remark however that in the real picture, when one considers 
time-reversal and charge-conjugation symmetry,
the index given by $KKO$-theory is naturally Clifford module 
valued and the identification with elements of
$\Z$ or $\Z_2$ is an additional step.

In this paper we also use real spectral triples.
The definition is the same with real replacing complex everywhere. 
These will enter our discussion in Section \ref{subsec:spec_trip_and_pairings}.
 We begin, however, with some background on symmetries.

\subsection{Symmetries and invariants}
Topological insulators can be loosely described as physical systems 
possessing certain symmetries which give rise to invariants topologically 
protected by these symmetries. The symmetries of most interest to 
physicists are time-reversal symmetry, charge-conjugation symmetry 
(also called particle-hole symmetry) and sublattice symmetry 
(also called chiral symmetry). We do, however, note that other 
symmetries such as spatial inversion symmetry may be 
considered though they will not play a central role here.

The integer quantum Hall effect motivates our approach as it is topological: the 
Hall conductance can be expressed in terms of a pairing of homology classes 
of certain bundles over the Brillouin zone (momentum space) of our sample. 
Of course, in order to properly understand the meaning of a bundle over the 
Brillouin zone in the case of irrational magnetic field, one has to pass to the 
noncommutative picture.  Because the quantised Hall conductance is a 
topological property, it is stable under small perturbations and the addition 
of impurities into the system. Indeed, disorder plays an important role 
in the localisation of electrons and the stable nature of the Hall 
conductance in between jumps~\cite{Bellissard94}. 

The integer quantum Hall effect is linked to topological 
information but does not possess any of the additional 
symmetries discussed above. Hence the effect can be considered 
as an example of a topological insulator without 
additional symmetry.

Much more recently, the prediction of a new topological state of matter 
came from Kane and Mele~\cite{KM05b}, who consider two copies of a 
Haldane system (that is, a single-particle Hamiltonian acting on a honeycomb lattice) 
and impose time-reversal symmetry on their model. By considering 
the topology of the spectral bands of the Hamiltonian 
under time-reversal symmetry (considered as 
bundles over the Brillouin zone), the authors associate a $\Z_2$ 
parameter to their system. This number is `topologically protected' 
because one cannot pass from one value of the parameter to the other 
unless time-reversal symmetry is broken. 

A non-trivial Kane-Mele invariant is predicted to 
be related to the existence of edge channels that carry opposite currents along the 
edge of a sample. These edge channels are said to be 
spin-oriented; that is, the spin-up and spin-down electrons separate and give currents 
travelling in opposite directions. The net current is zero, but each spin component 
has a non-trivial conductance that can be linked to topological 
invariants of classical bundles over the Brillouin zone. 
Such an effect is also called the quantum spin-Hall effect.

The quantum spin-Hall effect was initially predicted to occur in graphene, 
but this is difficult to verify experimentally. The effect was later 
predicted to be found in HgTe~\cite{BHZ06}, a compound much more 
amenable to experimental analysis, and subsequently the 
theoretical prediction was measured in~\cite{KWBRBMQZ07}.

At the time of writing, more recent experimental work has called into 
dispute the link between time reversal invariance and the existence 
of robust edge channels. Edge channels have been experimentally 
observed in quantum spin-Hall systems with an external magnetic field of up to 
9T~\cite{spinQH_with_mag_field}. Hence there is still work to be done 
to determine whether a $2$-dimensional time-reversal invariant system 
gives rise to stable edge currents.

Despite the current experimentation issues, 
the Kane-Mele invariant opened up a new avenue of theoretical research. 
The existence of a torsion invariant 
in this system has provoked substantial effort in understanding whether  
similar invariants of a finer type could be found in other models and systems. 
This included higher-dimensional time-reversal invariant insulators, 
experimentally found in~\cite{3d_verification}. Particle-hole/charge-conjugation 
symmetric systems were also considered, which drew a link to superconductors, 
whose current can be considered as the scattering of an 
electron by a hole (see for example~\cite{QHZ08, SRFL08}).

Many  possible models were quickly discovered and the question 
began to turn towards how to properly classify such systems 
from their symmetry data. This involved showing how 
the `topological numbers' derived in the various systems could be 
connected to algebraic topology, specifically classifying spaces 
and homotopy groups of symmetry compatible Hamiltonians. 
While there are many papers on this topic, one of the most 
influential came from Kitaev~\cite{Kitaev09}, who outlined how 
symmetry data can be linked to Clifford algebras and, in particular, 
$K$-theory. Specifically, if one considers a system with time-reversal, 
charge-conjugation or sublattice symmetry, then then one finds 
ten different outcomes depending on the nature of the 
symmetry (see~\cite{Kitaev09, RSFL10} for more on this). 
Kitaev argued that these different outcomes correspond 
precisely to the $10$ different $K$-theory groups ($8$ real groups 
and $2$ complex groups), where the $K$-theory is again coming 
from bundles over the Brillouin zone. The paper also showed 
how the dimension of the system affects the kind of invariant that may arise.

The work of Kitaev and Ryu et al. has been expanded and developed in newer 
papers by Stone et al.~\cite{SCR11} and Kennedy and Zirnbauer~\cite{KZ15}. 
To briefly summarise, Stone et al. and Kennedy-Zirnbauer 
are able to link the symmetries of interest to stable homotopy groups and Clifford algebras 
in a way that is more physically concrete than Kitaev's original outline. 
In particular, Kennedy and Zirnbauer 
show how the Bott periodicity of complex and real $K$-theory can be understood in terms of the 
symmetries of the system~\cite{KZ15}.

\subsection{Contributions in the mathematical literature}
While there has been a plethora  of articles in the physics 
literature about topological insulators and their properties, 
there are comparatively few mathematical physics papers 
on the subject (that is, papers with a mathematical focus, 
but with physical applications in mind). Despite this, there 
have been some important contributions from mathematical 
physicists in understanding the mechanics of insulator systems, 
particularly with regard to making Kitaev's $K$-theoretic 
classification more explicit. We briefly review the work 
that has been done on these problems, focusing on the more 
$K$-theoretic papers as these are closest to our viewpoint. 
We do not claim that our review is comprehensive or complete 
although from the work cited here a comprehensive bibliography could be built.
Our purpose is  to highlight what is understood and what remains open.

\subsubsection*{Almost commuting matrices and insulator systems (Loring et al.)}
Some of the first mathematical attempts to understand the 
topological insulator problem came from Loring in collaboration 
with Hastings and S{\o}rensen~\cite{HL10a, HL11, Loring15, HL10b, LS10, LS13, LS14}.

Very roughly speaking, these papers start with the model of a finite lattice 
on a torus or sphere. There are translation operators $U_i$ between 
atom sites that commute. However, when these operators are 
compressed by the Fermi projection $P_\mu U_i P_\mu$, 
they may no longer commute. The observation of the authors 
is that the act of approximating the matrices $P_\mu U_i P_\mu$ 
with commuting matrices can be viewed as a lifting problem in 
$C^*$-algebras. The authors then argue that the obstructions to 
approximating almost-commuting matrices with commuting matrices 
lead to $K$-theory invariants (both complex and real). 
These obstructions can then be related back to the various symmetries 
that arise in insulator systems.

The papers of Loring, Hastings and S{\o}rensen are able to 
mathematically establish a link between insulator systems and 
$K$-theory of operator algebras, though the physical models 
used (a finite lattice on a $d$-torus or $d$-sphere) do not line 
up easily with the models that are usually considered. Another 
drawback is that the methods Loring et al. use are quite different 
to any other treatment of such systems (including the various 
explanations of the integer quantum Hall effect) and so are difficult 
to adapt to the physical interpretations of such systems.

By considering the topological insulator problem and it's link to real/Real 
$K$-theory, there have also been some useful mathematical papers 
explaining $KKR$ and $KO$-theory~\cite{BLR12, BL15}. In particular, 
the paper~\cite{BL15} provides a helpful characterisation of all $8$ 
$KO$-groups in terms of unitary matrices and involutions.

\subsubsection*{Bloch bundles and $K$-theory (De Nittis-Gomi)}
An alternate viewpoint comes from the papers of De Nittis 
and Gomi~\cite{DNG14a, DNG14b, DNG15a, DNG15b}, 
who are developing a more explicitly geometric interpretation 
of the insulator invariants that arise. This is done by constructing 
a theory of Real or quaternionic or chiral vector bundles, 
and showing how the topological properties of insulator 
systems can be interpreted as geometric invariants of 
these bundles over the Brillouin zone. This work serves to 
correct some inconsistencies in the physics literature, where 
many of the bundles considered are trivial and symmetry structures 
implemented globally. De Nittis and Gomi show that when only 
local trivialisations are considered, much more care needs to be 
taken to properly construct and work with the invariants of interest.

The Bloch bundle picture is advantageous as it links much more 
clearly to the geometric explanations of the integer quantum Hall 
effect by~\cite{TKNN} and others, explicitly relating physical 
quantities to homology theories and pairings. The limitation of 
such a viewpoint is that it cannot fully take into account the 
situation with a magnetic field present, which may include 
systems with sublattice/chiral symmetry. 
In such a picture, one would need to perform an analysis 
similar to that of Bellissard for the integer quantum Hall effect 
and handle the noncommutative Brillouin zone. It is also 
generally acknowledged that to work disorder into the Bloch 
bundle viewpoint requires a further development of the 
noncommutative method of Bellissard et al~\cite{Bellissard94}.

\subsubsection*{Chern numbers, spin-Chern numbers and disorder (Prodan, Schulz-Baldes)}
A concerted attempt to adapt the ideas and constructions of 
Bellissard's noncommutative Brillouin zone and Chern 
numbers into the general insulator picture has been made by 
Prodan and Schulz-Baldes in several 
papers~\cite{Prodan10, Prodan11,Prodan14, SchulzBaldes13, SchulzBaldes13b}. 
Part of this process involves showing how Bellissard's cocycle 
formula for the Hall conductance has natural generalisations to 
higher dimensions~\cite{PLB13, PSB14}. 

Another important aspect of Prodan and Schulz-Baldes' 
work has been defining the so-called spin-Chern numbers. 
Roughly speaking, certain $2$-dimensional systems with 
odd time-reversal symmetry can be split into the 
$\pm 1$ eigenspaces of a Pauli matrix representing a 
component of spin (say $s^z$). 
One can then take the Fermi projection $P$, 
restrict it to the $+1$ or $-1$ eigenspace 
of the spin, $P_\pm$, and take the Chern number 
of this restriction $\mathrm{Ch}(P_\pm)$, 
denoted the spin-Chern number. Due to the 
time-reversal symmetry, $\mathrm{Ch}(P)=0$, but 
the two separate spin-Chern numbers may be non-zero. Hence one can 
interpret these invariants as capturing the conductance of the 
spin-up and spin-down currents of the quantum spin-Hall effect. 
One can also associate a $\Z_2$ number to systems with 
spin-Chern numbers by considering $\mathrm{Ch}(P_\pm)\,\mathrm{mod}\,2$. 
It is a key result of~\cite{SchulzBaldes13b} that the 
$\Z_2$ number associated to systems with well-defined spin-Chern numbers
can be related to real $K$-theory, specifically $KO_2(\R)$, 
which is the group that `classifies' $2$-dimensional strong 
topological phases with odd time reversal symmetry.

The use of noncommutative methods also means that the 
models considered by Schulz-Baldes and Prodan are among 
the few that allow disorder to be included in the system 
(see~\cite{SchulzBaldes13} for more details).

Prodan and Schulz-Baldes are able to import techniques 
from complex pairings and cyclic cohomology to define 
their $\Z_2$ invariant. However, we emphasise that
torsion pairings arising from real $K$-theory can \emph{not} 
in general be described as a complex pairing modulo 2. 
See for example the discussion in Witten~\cite[Section 3.2]{Witten15}.
Let us briefly explain why these difficulties arise.

Early results of Connes show that the pairing of 
$K$-theory classes with cyclic cocycles
is the same as the index pairing of $K$-theory classes 
with finitely summable Fredholm modules representing 
$K$-homology classes over complex algebras~\cite{Connes85}. 
However, such a 
relation breaks down in the case of torsion invariants, which are 
common in the $K$-groups of real/Real algebras. An 
extra argument is therefore required to link torsion invariants 
from real $K$-theory with pairings involving cyclic cohomolgy.
We do however note that the complex Chern numbers of systems of 
arbitrary dimension considered by~\cite{PLB13, PSB14} 
and in \cite{Bourne thesis} can be applied to insulator 
systems where only sublattice/chiral symmetry is considered.

Since in general we must avoid using the pairings in cyclic cohomology and homology, we 
deal directly with the $K$-theory and $K$-homology groups directly for complex, 
and Real/real algebras, since these pairings can detect torsion invariants. 
This is the picture adopted in the later 
works~\cite{DNSB14b, GSB15}. We also adopt this viewpoint, 
but from the perspective of $KK$-theory (which is necessary to consider the bulk-edge problem).

We briefly mention the recent paper of Katsura and Koma~\cite{KK15}, 
which also uses (complex) noncommutative methods and obtains similar 
results to Schulz-Baldes and Prodan, particularly~\cite{SchulzBaldes13b}. 
Katsura and Koma associate a $\Z_2$ invariant to $2$-dimensional systems 
with odd time-reversal symmetry and show it is stable under perturbations 
and disorder, the Kane-Mele invariant being an important example. Implicit 
in~\cite{KK15} is that the defined $\Z_2$ invariant is a particular 
representation of the Clifford index that classifies $KO_2(\R)\cong \Z_2$ 
(see~\cite[Chapter \Rmnum{2}.7]{SpinGeometry} for more on the Clifford index). 
We would like to extend such a picture to more general models and symmetries.

\subsubsection*{Symmetry groups and equivariant $K$-theory (Freed-Moore, Thiang)}
So far our various symmetries have been considered on a case by case basis 
with no unifying theory linking systems together as Kitaev outlined. 
Such a theory in the commutative setting was developed by 
Freed and Moore~\cite{FM13}, and then generalised to possibly 
noncommutative algebras by Thiang~\cite{Thiang14}.

The paper by Freed and Moore is very long and detailed so we will 
only give the most basic of summaries. The symmetries of 
interest to us (time-reversal, charge-conjugation and sublattice) 
are put together in a symmetry group $G$. Then, symmetry 
compatible Hamiltonians correspond to projective unitary/anti-unitary 
representations of $G$ (or a subgroup thereof). Using the 
Bloch-bundle viewpoint to derive topological invariants of the 
system under consideration, the quantities of interest can be 
derived by looking at the equivariant $K$-theory of subgroups of $G$. 
In certain cases, lattice symmetries and the crystallographic group of 
the lattice of the sample can also be incorporated, giving rise to 
possibly twisted equivariant $K$-theory classes and invariants.

The work of Thiang showed how Freed-Moore's constructions can be 
carried out in the noncommutative setting. In particular, Thiang links 
symmetry data to Clifford algebras and constructs a homology theory 
similar to Karoubi's $K^{p,q}$-theory (see~\cite[Chapter \Rmnum{3}]{Karoubi}) 
that encodes these symmetries. Such a construction means that the 
classification by Kitaev and Ryu et al. (also called the $10$-fold way) can be described in a unified framework.

Freed-Moore and Thiang's work allows all the symmetry data to be 
considered on an equal footing and gives a 
mathematical argument for Kitaev's classification. The work of Thiang 
in particular opens the door to further research as it provides a concrete 
framework to consider disordered systems and impurities. The main 
limitation is that the theory deals solely with a bulk system and $K$-theory. 
The use of $K$-homology or a system with edge is not considered.

Recent work by Kellendonk also provides a systematic study of 
symmetry compatible Hamiltonians and algebras with van Daele's formulation of 
$K$-theory~\cite{Kellendonk15}. Using slightly different grading structures to 
Freed-Moore and Thiang, Kellendonk derives Kitaev's $K$-theoretic classification for 
discrete or continuous systems with disorder, and a Hamiltonian with spectral gap. Again, 
$K$-homology or systems with boundaries are not considered.

\subsubsection*{$KR$-Theory and pairings (Grossmann-Schulz-Baldes)}
A recurring characteristic of the literature on topological insulators, both physical 
and mathematical, is that the links to topology are solely discussed via $K$-theory. 
However, as shown in Equation \eqref{eq:Bellissard_Hall_conductance}, 
the expression for the Hall conductance is not just a $K$-theory construction, 
but a pairing (i.e. Kasparov product) between a $K$-theory class and a 
$K$-homology class coming from a particular spectral triple or Fredholm module. 
Most literature on topological insulators does not consider this extra 
$K$-homological information, though an exception are the papers 
of Schulz-Baldes and co-authors~\cite{DNSB14b, GSB15}.

De Nittis, Grossmann and Schulz-Baldes show that a discrete 
condensed matter system with additional symmetries naturally 
gives rise to a Real spectral triple in the sense of 
Connes~\cite{Connes95, ElementsNCG} and represents a 
$KR$-homology class. De Nittis and Schulz-Baldes consider the 
$2$-dimensional case~\cite{DNSB14b} and Grossmann-Schulz-Baldes 
generalise this to arbitrary dimension~\cite{GSB15}. 
In particular,~\cite{DNSB14b,GSB15} show that the 
Fermi projection of a symmetry compatible Hamiltonian 
pairs with the Real spectral triple via an index and it is this 
pairing that gives the various (strong) classification groups of Kitaev, Freed-Moore and Thiang.

De-Nittis, Grossmann and Schulz-Baldes's work provides a 
useful picture of the bulk-theory of insulators. Working the 
bulk-edge correspondence into such a framework 
remains to be done. Equally, the work of Thiang and 
Grossmann-Schulz-Baldes may be related under the 
broader framework of $KK$-theory although we will not do so here.

\subsection{The bulk-edge correspondence}
So far our discussion has been focused on how single-particle Hamiltonians with certain symmetries give rise to topological invariants, but the way in which these properties are physically realised is a key aspect of insulator materials. Namely, the observables that are measured in experiment are said to be carried on the edge or boundary of a sample. So on the one hand, we have a Hamiltonian acting on the whole space, often assumed to be translation invariant, which gives topological properties of the material via the Bloch bundles over the Brillouin zone (or a noncommutative analogue of this). On the other hand, there is also a current or similar observable concentrated at the edge of the sample that is said to be `topologically protected' by the internal symmetries of the Hamiltonian. Loosely speaking, the relationship between the topological properties of the bulk Hamiltonian and edge behaviour is the bulk-edge correspondence of topological insulator materials.

While a mathematical understanding of the bulk-edge correspondence for systems with anti-linear symmetries is still in development, there have been a few important contributions. Firstly there was the work of~\cite{ASV13, SchulzBaldes13}, who consider $2$-dimensional time-reversal invariant systems and prove a bulk-edge correspondence using the spin-Chern perspective and an argument using transfer matrices. A $2$-dimensional bulk edge correspondence for systems with time-reversal symmetry is also considered in~\cite{GP13}. By using more elementary functional analytic techniques, Graf and Porta reproduce the result of~\cite{ASV13, SchulzBaldes13} for a broader class of possible Hamiltonians.

These are both useful results and important contributions to the literature, though the link between the bulk-edge picture described in these papers and the $K$-theoretic classification is very difficult to establish, though the two should be compatible. 

Section 7 of~\cite{Loring15} considers the bulk-edge correspondence in arbitrary dimension. What is difficult to determine is the link between Loring's argument and the work of Grossmann, Schulz-Baldes and Thiang as well as the bulk-edge picture developed by Kellendonk et al.

Papers by Mathai and Thiang establish a $K$-theoretic 
bulk-edge correspondence for complex systems and real systems with time-reversal 
symmetry~\cite{MT15b, MT15c}. These 
papers use a short-exact sequence to link bulk and edge systems as considered 
by~\cite{KR06,SBKR02,KSB04b} in the case of the quantum Hall effect. 
One can then check that the invariants of interest (including 
torsion invariants for time-reversal symmetric systems) pass from bulk to edge in the 
Pimsner-Voiculescu sequence in complex or real $K$-theory. Mathai and Thiang also use 
real and complex T-duality to show in a variety of examples
that when the boundary map in $K$-theory is T-dualised, 
the map can be expressed as a conceptually simpler restriction map. 
In the real case, the Kane-Mele $\Z_2$ invariant is also identified with the 
$2$nd Stiefel-Whitney class under T-duality. 
Our work in progress directly computes the 
$K$-theoretic boundary map for systems with arbitrary symmetry type by taking the 
Kasparov product with (unbounded) representatives of the extension class in 
$KK$-theory~\cite{BKR}.

After the submission of this work, further developments have appeared on 
the mathematical aspects of the bulk-edge correspondence. In particular we 
briefly highlight the work of Kubota~\cite{Kubota15b} and the book by 
Prodan and Schulz-Baldes~\cite{PSBbook}.

The book by Prodan and Schulz-Baldes is a complete analysis of 
complex disordered topological insulator systems. This includes bulk invariants, 
edge invariants, their relation to $K$-theory as well as their 
equality for systems with boundary~\cite{PSBbook}. The book also shows how the 
edge pairings can be linked to conductivity tensors and the 
physical system under consideration. Real and torsion invariants 
are not considered.

Kubota provides a noncommutative extension of Freed--Moore's work.
Hamiltonians 
compatible with a twisted symmetry group are associated to a 
class in $\Z_2$-graded twisted equivariant $K$-theory, which is 
defined using an extension of $KK$-theory~\cite{Kubota15a}. The 
case of invariants related to the $CT$-symmetry group are a simple 
example. Kubota uses the uniform Roe algebras and relates bulk and 
edge systems by the boundary map in the coarse Mayer--Vietoris 
exact sequence in $K$-theory~\cite{Kubota15b}. The $K$-theory of 
uniform Roe algebras is in general very hard to compute though 
can be simplified by the coarse Baum--Connes map. The use of 
coarse geometry allows for systems with impurities or rough 
edges to be considered.

\section{Bulk theory} \label{sec:bulk_theory}

\subsection{Symmetry types and representations}
In our basic setup, we consider a self-adjoint single-particle Hamiltonian $H$ acting on a complex Hilbert space $\calH$. We work in the tight-binding approximation so that $\calH$ will usually take the form $\ell^2(\Z^d)\otimes\C^N$, where $d$ captures the dimension we are considering and $N$ encodes any internal degrees of freedom coming from properties such as spin or the structure of our lattice. Our outline of the basic symmetries is quite similar to that discussed in, amongst others,~\cite{DNSB14b,GSB15}.

We consider the symmetries of our Hamiltonian $H$. The symmetries of interest to us are time-reversal symmetry, charge-conjugation  symmetry (also called particle-hole symmetry) and sublattice symmetry (also called chiral symmetry). Each of these symmetries is  given by an  involution
and we use the notation, $T\equiv$  time-reversal, $C\equiv$ charge-conjugation and $S\equiv$ sublattice. These are not independent
but  generate the  $CT$-symmetry group $\{1,T,C,CT\} \cong \Z_2\times\Z_2$, where $S=CT=TC$.

\begin{defn} \label{def:symmetry_compatible_hamiltonian}
A Hamiltonian $H$ acting on a complex Hilbert space $\calH$ respects time-reversal and/or charge-conjugation and/or sublattice symmetry if there are complex anti-linear operators $R_{T}$ and/or $R_{C}$ and/or a complex-linear operator $R_{S}$ acting on $\calH$ such that $R_T^2,R_C^2,R_S^2\in\{\pm 1_\calH\}$ and
\begin{align} \label{eq:symmetry_defining_properties_for_Hamiltonian}
   &R_T H R_T^* = H,   &&R_C H R_C^* = -H,  &&R_S H R_S^* = -H
\end{align}
In the case of $R_T$ and $R_C$, our Hamiltonian is said to have even (resp. odd) symmetry if $R^2 = 1$ (resp. $R^2 = -1$). 
\end{defn}
Because $R_S$ is complex-unitary, the sign of its square is irrelevant (in the same way that the Clifford algebra on one generator $\C\ell_1$ may have a generator that squares to $+1$ or $-1$). We note that a Hamiltonian may only respect a single symmetry. However, if $H$ is compatible with   two symmetries, then by the underlying group structure it is compatible with the third symmetry. We will more explicitly examine the link between symmetry compatible Hamiltonians and group representations in Section \ref{subsec:Linking_symmetries_to_Clifford_algebras}.

There is no general form that the symmetry operators $R_T$, $R_C$ and $R_S$ are forced to take apart from the properties outlined in Definition \ref{def:symmetry_compatible_hamiltonian}. Instead they need to be determined by the properties of the example under consideration. However  the conjugate-linear operators $R_T$ and $R_C$, as operators acting on a complex Hilbert space, anti-commute with the Real involution given by complex conjugation. 

\begin{example}[Anti-linear symmetries via complex conjugation] \label{ex:symmetry_repn_via_complex_conjn}
We consider the Hilbert space $\ell^2(\Z^d)\otimes\C^{2N}$ and define the operator
$$  R = \begin{pmatrix} 0 & \calC \\ \eta\calC & 0 \end{pmatrix}, $$
where $\calC$ is complex conjugation and $\eta\in\{\pm 1\}$. At this stage we are not specifying whether $R$ represents the time-reversal or charge-conjugation involution. We note that $R^2=\eta 1_{2N}$ so $R$ can represent an even or odd symmetry depending on the sign of $\eta$. Given an operator $a\in\calB[\ell^2(\Z^d)\otimes \C^N]$ we define the operator $\ol{a}=\calC a\calC$. One computes that
\begin{equation} \label{eq:transform_2x2_matrix_by_complex_conjugation_involution}
  R\begin{pmatrix} a & b \\ c & d \end{pmatrix}R^* = \begin{pmatrix} \ol{d} & \eta\ol{c} \\ \eta\ol{b} & \ol{a} \end{pmatrix}.
\end{equation}

Consider the case that $R$ is implementing a time-reversal involution. By Equation \eqref{eq:transform_2x2_matrix_by_complex_conjugation_involution}, an operator acting on $\ell^2(\Z^d)\otimes \C^{2N}$ will be time-reversal symmetric if it takes the form $\begin{pmatrix} a & b \\ \eta\ol{b} & \ol{a} \end{pmatrix}$. Such an operator will also have to be  self-adjoint if it is to be a time reversal symmetric Hamiltonian.

Next consider the case that $R$ is representing the charge-conjugation involution. An operator $A$ is symmetric under charge-conjugation if $RAR^* = -A$, so Equation \eqref{eq:transform_2x2_matrix_by_complex_conjugation_involution} tells us that $A$ must be of the form $\begin{pmatrix} {a} & {b} \\ -\eta\ol{b} & -\ol{a} \end{pmatrix}$.

Recall the Dirac-type operator from the spectral triple of the integer quantum Hall effect (Proposition \ref{prop:qH_spec_trip}), $X = \begin{pmatrix} 0 & X_1-iX_2 \\ X_1+iX_2 & 0 \end{pmatrix}$, where
$X_1,X_2$ act on $\ell^2(\Z^2)$ as position operators. We see that $X$ is even time-reversal symmetric (that is $RX R^* = X$ with $\eta=1$) or has odd charge-conjugation symmetry ($RXR^* = -X$ with $\eta = -1$) depending on the symmetry that the  involution $R$ is representing.
\end{example}

\begin{example}[Symmetries via spatial involution] \label{ex:odd_symmetry_with_spatial_involution}
We start with the space $\ell^2(\Z^d)$ and define the anti-linear operator $J$ such that $(J\lambda)(x) = \ol{\lambda(-x)}$ for $\lambda\in\ell^2(\Z^2)$. We define on $\calH = \ell^2(\Z^d)\otimes\C^{2N}$ the operator
$$ R = \begin{pmatrix} 0 &  J \\ \eta J & 0 \end{pmatrix}  $$
with $R^2=\eta 1_{2N}$ as before. As a transformation on operators acting on $\ell^2(\Z^d)\otimes\C^{2N}$, one computes that
$$ R\begin{pmatrix} a & b \\ c & d \end{pmatrix} R^* = \begin{pmatrix} JdJ & \eta JcJ \\ \eta JbJ & JaJ \end{pmatrix}. $$

We again consider the case that $R$ models the time-reversal or 
particle-hole involution. Operators that are time-reversal invariant 
under conjugation by $R=R_T$ have the general form 
$\begin{pmatrix} a & b \\ \eta JbJ &  JaJ \end{pmatrix}$, 
whereas charge-conjugation symmetric operators under 
$R=R_C$ are of the form $\begin{pmatrix} a & b \\ -\eta JbJ &  -JaJ \end{pmatrix}$.

Considering again the integer quantum Hall spectral triple with 
Dirac-type operator $X$, we first note that $JX_kJ = -X_k$ 
and $J(\pm i X_k)J = \pm i X_k$ for the position operators $X_k$, $k=1,2$. Therefore we have that 
$$  
R \begin{pmatrix} 0 & X_1-iX_2 \\ X_1+iX_2 & 0 \end{pmatrix} R^* 
= \begin{pmatrix} 0 & \eta(-X_1+iX_2) \\ \eta(-X_1-iX_2) & 0 \end{pmatrix}. 
$$
This implies that $X$ now has odd time reversal symmetry and even charge-conjugation symmetry.
\end{example}

\begin{remark}[Time-reversal and charge-conjugation as $0$-dimensional phenomena] \label{remark:PT_symmetries_are_0_dimensional}
The example of the integer quantum Hall Dirac-type 
operator shows that changing how we represent the 
involutions $R_T$ and $R_C$ may change whether 
an operator has a particular symmetry type. This  
indicates that the spatial involution is bringing extra 
data into our system (namely, that we have a 
$d$-dimensional sample with $d>0$). By comparison, 
time-reversal and charge-conjugation involutions can 
exist in $0$-dimensional samples and do not need the 
extra information that spatial involution does. We 
emphasise that systems with anti-linear symmetries 
defined using spatial involution are topologically inequivalent 
to systems with anti-linear symmetries defined from complex 
conjugation (see~\cite{MT15} for more detail on the inequivalence of symmetry types).
\end{remark}

\begin{example}[Chiral symmetry]
In most examples in the literature, the chiral/sublattice symmetry 
involution is represented by the matrix 
$R_S = \begin{pmatrix}1_N & 0 \\ 0 & -1_N \end{pmatrix}$ 
on $\ell^2(\Z^d)\otimes\C^{2N}$, so a self-adjoint Hamiltonian 
$H$ is chiral symmetric if $H= \begin{pmatrix} 0 & h \\ h^* & 0 \end{pmatrix}$.
\end{example}

An important observation is that if $H$ obeys a particular symmetry 
and the Fermi level $\mu$ is in a gap of the spectrum of 
$H$ (we can assume without loss of generality that $\mu=0$), 
then the `spectrally flattened' Hamiltonian $\text{sgn}(H) = H|H|^{-1}$ 
also obeys this symmetry.

\subsection{Symmetries, group actions and Clifford algebras} \label{subsec:Linking_symmetries_to_Clifford_algebras}
We have briefly explained the symmetries that arise in our insulator systems but we would like a more structural understanding of how these symmetries fit into a unifying picture. Here the recent work of Thiang as developed in~\cite{Thiang14,Thiang14b,MT15}, and based on ~\cite{FM13}, is
applicable. One of the key insights in~\cite{FM13, Thiang14} is to see that a symmetry compatible Hamiltonian with a spectral gap $H$ can be expressed as a graded projective unitary/anti-unitary representation of the finite symmetry group $G\subset\{1,T,C,CT\}\cong\Z_2\times\Z_2$.

\begin{defn}
Let $G$ be a finite group with $\theta_g$ a map on $\calH$ for every $g\in G$ and $\varphi:G\to \{\pm 1\}$ a homomorphism. The triple $(G,\varphi,\sigma)$ is a projective unitary/anti-unitary (PUA) representation if $\theta_g$ is unitary (resp. anti-unitary) and $\varphi(g)=1$ (resp. $-1$) and $\theta_{g_1}\theta_{g_2}=\sigma(g_1,g_2)\theta_{g_1g_2}$ with $\sigma:G\times G\to \T$ a  $2$-cocycle  satisfying
$$  \sigma(g_1,g_2)\sigma(g_1g_2,g_3) = \sigma(g_2,g_3)^{g_1} \sigma(g_1,g_2g_3), \quad g_1,g_2,g_3\in G, $$
where for $z\in \T$, $z^{g}=z$ if $\varphi(g)=1$ and $z^g = \ol{z}$ if $\varphi(g)=-1$.
\end{defn}

We can now re-formulate the definition of a symmetry compatible Hamiltonian in terms of group representations.
\begin{defn} \label{def:Hamiltonian_compatible_with_G}
Given a projective unitary/anti-unitary representation $(G,\varphi,\sigma)$ and a gapped self-adjoint Hamiltonian $H$ acting on a complex Hilbert space $\calH$, we say that $H$ is compatible with $G$ if there is a (continuous) homomorphism $c:G\to\{\pm 1\}$ such that
\begin{equation} \label{eq:graded_PUA_repn_and_hamiltonian}
  \theta_g H = c(g)H\theta_g \hspace{0.5cm} \text{for all }g\in G. 
\end{equation}
\end{defn}

\begin{remark}
Under our assumptions $0\notin\sigma(H)$, so we can deform a symmetry-compatible $H$ to its phase $H|H|^{-1}=\text{sgn}(H)$ without changing Equation \eqref{eq:graded_PUA_repn_and_hamiltonian}. This means that $\Gamma = \text{sgn}(H)$ is acting as a grading of our PUA representation. Therefore, we say that a symmetry compatible Hamiltonian on $\calH$ is precisely realised as a graded PUA representation $(G,c,\varphi,\sigma)$ on $\calH$ with grading $\Gamma = \text{sgn}(H)$. The map $\varphi$ determines if the symmetry involution $\theta_g$ is represented unitarily or anti-unitarily and the map $c$ determines if the involution has even or odd grading. We emphasise that the grading of a symmetry involution $\theta_g$ as even or odd is different from whether the symmetry is denoted even or odd, which comes from whether $\theta_g^2=1$ or $-1$ respectively.
\end{remark}

Our definition of a symmetry compatible 
Hamiltonian may apply to any finite group $G$, 
though we are interested in a subgroup $G$ of the 
$CT$-symmetry group $\{1,C,T,CT\} \cong \Z_2\times\Z_2$. 
Equation \eqref{eq:symmetry_defining_properties_for_Hamiltonian} 
and the surrounding discussion tells us that a symmetry 
compatible Hamiltonian $H$  can be expressed as a 
PUA representation of a subgroup $G$ of the symmetry
group $\{1,C,T,S=CT\}$ on the 
Hilbert space $\calH=\ell^2(\Z^d)\otimes\C^N$ 
with $\theta_g=R_g$, $\Gamma = \mathrm{sgn}(H)$ and
\begin{align*}
   &(\varphi,c)(T) = (-1,1),  &&(\varphi,c)(C)=(-1,-1),   &&(\varphi,c)(S) = (1,-1).
\end{align*}

Representations of $G$ are in $1$-$1$ correspondence with 
representations of the real or complex group $C^*$-algebra 
$C^*(G)$. From the perspective of Kasparov theory we would 
like to link the algebra $C^*(G)$ with real or complex 
Clifford algebras as such algebras play a fundamental role in $KK$-theory.

\begin{prop}[\cite{FM13}, Appendix B; \cite{Thiang14}, Section 6] 
\label{prop:PUA_rep_gives_Clifford_rep}
Let $G$ be a subgroup of the symmetry group $\{1,T,C,S=CT\}$ 
with $G\neq\{1,S\}$. If a Hamiltonian $H$ acting on 
$\calH$ is compatible with $G$, then there is a graded 
representation of a real Clifford algebra on $\calH$ (or $\calH\oplus\calH$). 
If $G=\{1,S\}$, then there is a graded complex Clifford representation. 
The representations are summarised in 
Table \ref{table:PT_classes_repn_table} up to stable isomorphism.
\end{prop}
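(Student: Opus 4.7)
My plan is to treat the statement case by case on the subgroup $G \subseteq \{1,T,C,S\}$, in each case constructing explicit operators on $\calH$ (or, where required, on $\calH \oplus \calH$) that generate a graded Clifford algebra with grading $\Gamma = \mathrm{sgn}(H)$. The raw material consists of the symmetry operators $R_T,R_C,R_S$ together with the complex unit $i$; the key observation is that $i$ anti-commutes with the anti-linear generators $R_T$ and $R_C$, so each such $R$ yields two real-linearly independent operators $R$ and $iR$ whose squares have opposite sign. The grading degrees are dictated by the map $c$ of Definition \ref{def:Hamiltonian_compatible_with_G}: since $R_T$ commutes with $H$ while $R_C$ and $R_S$ anti-commute with $H$, the operators $R_T, iR_T$ are even with respect to $\Gamma$, whereas $R_C, iR_C, R_S$ are odd.

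Concretely, for $G = \{1,T\}$ with $R_T^2 = \epsilon_T \in \{\pm 1\}$, the pair $R_T, iR_T$ generates a graded-commutative algebra of even operators with squares $\epsilon_T$ and $-\epsilon_T$; viewing these as generators of a real Clifford algebra with $\Gamma$ providing the grading lets one read off the Clifford type from the sign $\epsilon_T$. The case $G = \{1,C\}$ is analogous but produces odd generators because $R_C$ anti-commutes with $\Gamma$. For the full group $G = \{1,T,C,S\}$, I would assemble the generators from $R_T, iR_T, R_C, iR_C$, noting that $R_S$ is recovered as a scalar multiple of $R_T R_C$ via the PUA cocycle, and then verify the graded anti-commutation relations directly from \eqref{eq:symmetry_defining_properties_for_Hamiltonian} together with the cocycle identities; the remaining subgroups containing exactly two of $T,C,S$ are handled identically.

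For the exceptional subgroup $G = \{1,S\}$, the chiral involution $R_S$ is complex-linear, so multiplication by $i$ \emph{commutes} with $R_S$ and does not produce a second real generator; $R_S$ alone, being odd with respect to $\Gamma$ with $R_S^2 = \pm 1$, yields instead a graded representation of the complex Clifford algebra $\C\ell_1$, which is why this case must be treated over $\C$. In every other case the algebra generated by the symmetries commutes with a Real structure determined by $R_T$ or $R_C$, so the representation is genuinely real.

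The main obstacle I anticipate is combinatorial bookkeeping: one must check that the signs of $R_T^2, R_C^2, R_S^2$, combined with their graded commutation with $\Gamma$, correspond \emph{precisely} to the Clifford algebra $\C\ell_{p,q}$ (respectively $\C\ell_n$) listed in Table \ref{table:PT_classes_repn_table} for each of the ten symmetry classes. The ``up to stable isomorphism'' qualifier absorbs the residual ambiguity, in particular the possible need to pass from $\calH$ to $\calH \oplus \calH$ when the available operators cannot realise an irreducible Clifford module of the required dimension on $\calH$ itself — any two Clifford-module representations of the same algebra agree after such stabilisation, so this step becomes essentially formal once the algebra has been correctly identified.
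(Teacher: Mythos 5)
Your treatment of the cases $G=\{1,C\}$, $G=\{1,S\}$, and the overall philosophy (using $i$ to pair each anti-unitary with a sign-flipped partner, and isolating $\{1,S\}$ as the complex case because $i$ commutes with $R_S$) matches the paper's argument. However, there is a genuine gap in the $G=\{1,T\}$ case, and a related misstep in the choice of generators for the full group.

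The problem is one of parity, not dimension. A graded Clifford representation in the sense used here requires the Clifford generators to be \emph{odd} with respect to the grading $\Gamma=\mathrm{sgn}(H)$. Because $R_T$ commutes with $H$, both $R_T$ and $iR_T$ are \emph{even} with respect to $\Gamma$, so they simply cannot serve as Clifford generators — the graded algebra they generate sees $\Gamma$ acting trivially, and there is no sense in which one can ``read off the Clifford type from the sign $\epsilon_T$'' from this even pair. The correct conclusion for $R_T^2=+1$ is the trivial answer $C\ell_{0,0}$: $R_T$ defines a Real structure and produces no Clifford generators at all. For $R_T^2=-1$, the operators $\{i,R_T,iR_T\}$ give a quaternionic structure on $\calH$, and the only way to extract odd Clifford generators from it is to pass to $\calH\oplus\calH$ with the off-diagonal operators
$\bigl(\begin{smallmatrix}0&1\\1&0\end{smallmatrix}\bigr),\ \bigl(\begin{smallmatrix}0&-i\\i&0\end{smallmatrix}\bigr),\ \bigl(\begin{smallmatrix}0&-R_T\\R_T&0\end{smallmatrix}\bigr),\ \bigl(\begin{smallmatrix}0&-iR_T\\iR_T&0\end{smallmatrix}\bigr)$
against the grading $\bigl(\begin{smallmatrix}\mathrm{sgn}(H)&0\\0&-\mathrm{sgn}(H)\end{smallmatrix}\bigr)$, yielding $C\ell_{4,0}$. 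You do mention the possible need for $\calH\oplus\calH$, but you attribute it to realising the right module dimension; the actual reason is that no odd generators exist on $\calH$ itself, which is qualitatively different and cannot be absorbed into ``up to stable isomorphism.''

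The same parity issue afflicts your generators for the full group $G=\{1,T,C,S\}$. You propose assembling them from $R_T, iR_T, R_C, iR_C$, but the first two are even and cannot be used. The paper instead takes the three odd operators $\{R_C,\ iR_C,\ iR_{CT}\}$ (after first using~\cite[Proposition~6.2]{Thiang14} to normalise the cocycle so that $R_CR_T=R_{CT}$ and $[R_C,R_T]=0$); one checks these mutually anti-commute and are each self-adjoint or skew-adjoint according to the sign of their square, so the signs of $R_C^2$ and $R_T^2$ determine which $C\ell_{r,s}$ is realised. Finally, a small structural point: since $S=CT$, there is no subgroup of $\{1,T,C,S\}$ containing exactly two of $T,C,S$ — the subgroup lattice is $\{1\}$, $\{1,T\}$, $\{1,C\}$, $\{1,S\}$, and the whole group — so the cases you list as ``remaining'' do not occur.
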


The natural grading of real Clifford algebras gives  
all generators of the Clifford algebra odd degree. Therefore all generators 
of a Clifford representation must be odd with respect to the grading $\Gamma = \mathrm{sgn}(H)$. 

\begin{proof}
The proof proceeds on a case by case basis. We first use~\cite[Proposition 6.2]{Thiang14} 
to `normalise' the twist $\sigma$ of the PUA representation so that 
the operators $R_C$ and $R_T$ commute and 
$R_C R_T = R_{CT}$.
For the full symmetry group $G=\{1,C,T,CT\}$,  
we use the operators $R_g$ for $g\in G$ 
and consider the real algebra generated by 
$\{R_C, iR_C, iR_{C}R_{T}\}$. One checks that 
these generators have odd grading under $\Gamma$, 
mutually anti-commute and are self-adjoint (resp. skew-adoint) 
if they square to $+1$ (resp. $-1$). Therefore the real algebra 
generated by $\{R_C, iR_C, iR_{CT}\}$ is precisely a 
graded representation of a particular real Clifford algebra 
$C\ell_{r,s}$ with grading $\Gamma = \mathrm{sgn}(H)$ 
(our notation for Clifford algebras is explained in Appendix \ref{sec:RealKasTheory}).

Next, we consider the subgroup $\{1,C\}$, to which we assign 
the real algebra generated by $\{R_C, iR_C\}$ and graded by $\mathrm{sgn}(H)$. 

Representations of the subgroup $\{1,S\}$ give rise to a 
representation generated by $R_S$ with grading $\mathrm{sgn}(H)$. 
Because $R_S$ acts complex-linearly, we may consider the 
complex span of $R_S$ as acting on $\calH$. Hence the 
representation generated by $R_S$ is a graded representation of $\C\ell_1$.

The case of the subgroup $\{1,T\}$ is a little different as 
$R_T$ commutes with $\mathrm{sgn}(H)$. 
For the case that $R_T^2=1$, $R_T$ defines a 
Real structure on the Hilbert space and gives no additional 
Clifford generators. If $R_T^2=-1$, then $R_T$ 
defines a quaternionic structure on $\calH$ under the 
identification $\{i,j,k\}\sim \{i,R_T,iR_T\}$. 
There is an equivalence between a graded 
quaternionic vector space and a graded action of $C\ell_{4,0}$ on $\calH\oplus \calH$. 
Specifically, we take $\calH\oplus\calH$ and the real span of the Clifford generators
\begin{align*}
   \left\{\begin{pmatrix} 0 & 1 \\ 1 & 0 \end{pmatrix}, 
   \begin{pmatrix} 0 & -i \\ i & 0 \end{pmatrix}, 
   \begin{pmatrix} 0 & -R_T \\ R_T & 0 \end{pmatrix}, 
   \begin{pmatrix} 0 & -iR_T \\ iR_T & 0 \end{pmatrix}\right\}, 
   \quad \Gamma = \begin{pmatrix} \mathrm{sgn}(H) & 0 \\ 0 & -\mathrm{sgn}(H) \end{pmatrix}.
\end{align*}
Therefore, the subgroup $\{1,T\}$ gives rise to a 
graded representation of $C\ell_{0,0}$ or $C\ell_{4,0}$.
\end{proof}

\begin{table} 
    \centering
    \begin{tabular}{ p{2cm} | c c |  p{3.8cm} }
       \multirow{3}{2cm}{Symmetry generators} & \multirow{3}{*}{$R_C^2$} & \multirow{3}{*}{$R_T^2$}  & \multirow{3}{4cm}{Graded Clifford representation (up to stable isomorphism)} \\ 
        & &  &  \\ & &  &  \\  \hline
       $T$ &  & $+1$ &   $C\ell_{0,0}$ \\
       $C,T$ & $+1$ & $+1$ & $C\ell_{1,0}$ \\
       $C$ & $+1$ &  &  $C\ell_{2,0}$ \\
       $C,T$ & $+1$ & $-1$  & $C\ell_{3,0}$ \\
       $T$ &  & $-1$  & $C\ell_{4,0}$ \\
       $C,T$ & $-1$ & $-1$  & $C\ell_{5,0}$ \\
       $C$ & $-1$ &   & $C\ell_{6,0}$ \\
       $C,T$ & $-1$ & $+1$ & $C\ell_{7,0}$ \\ \hline \hline 
       N/A &  &  &  $\C\ell_{0}$ \\
       $S$ & \multicolumn{2}{c|}{$R_S^2=1$}  &  $\C\ell_1$
    \end{tabular}
    \caption{Symmetry types and their corresponding graded Clifford representations~\cite[Table 1]{Thiang14}. \label{table:PT_classes_repn_table}}
\end{table}
\begin{cor} \label{lem:group_algebra_of_G_is_ME_to_Clifford}
Let $G$ be a subgroup of $\{1,T,C,CT\}$. The real or complex twisted 
group $C^*$-algebra $C^*_\sigma(G)$ 
is stably isomorphic to $C\ell_{n,0}$ or $\C\ell_n$, where 
$\sigma$ is the twist from the PUA representation and 
$n$ is determined by Table 
\ref{table:PT_classes_repn_table}.
\end{cor}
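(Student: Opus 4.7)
The plan is to view $C^*_\sigma(G)$, together with the grading induced by $c$, as the universal (graded) $C^*$-algebra whose $\Z_2$-graded complex representations are precisely the PUA representations of $G$ with twist $\sigma$ and the prescribed signs $R_T^2,R_C^2,R_S^2\in\{\pm 1\}$. Under this interpretation the Clifford generators constructed in the proof of Proposition \ref{prop:PUA_rep_gives_Clifford_rep} are complex-linear combinations of the $R_g$, and so sit naturally inside $C^*_\sigma(G)$; their squares and anti-commutators match exactly the defining relations of the Clifford algebra in the last column of Table \ref{table:PT_classes_repn_table}, hence they induce a $\ast$-homomorphism from that Clifford algebra into $C^*_\sigma(G)$. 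The task reduces to verifying that this map is an isomorphism, after stabilisation if necessary.

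I would then work through the rows of Table \ref{table:PT_classes_repn_table} individually. For $G=\{1,S\}$ one has $C^*(G)\cong \C[\Z_2]\cong \C\oplus\C\cong \C\ell_1$, and for $G=\{1\}$ one recovers $C\ell_{0,0}=\R$ directly. Whenever $G$ contains $C$, the generators $\{R_C, iR_C\}$ (and $iR_CR_T$ when $T$ is also present) together with their products already recover the scalar $i$ and each $R_g$, so they generate $C^*_\sigma(G)$ as a real $\ast$-algebra and the induced map is a genuine isomorphism onto the corresponding $C\ell_{n,0}$. For $G=\{1,T\}$ with $R_T^2=1$, the anti-linear $R_T$ supplies only a Real structure on $\calH$ and contributes no Clifford generators, giving $C^*_\sigma(G)\cong \R=C\ell_{0,0}$.

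The one case where stabilisation is genuinely needed is $G=\{1,T\}$ with $R_T^2=-1$. Here the triple $\{i,R_T,iR_T\}$ endows $\calH$ with a quaternionic structure, so that $C^*_\sigma(G)\cong \mathbb{H}$ as a real $C^*$-algebra, whereas the doubling construction in the proof of Proposition \ref{prop:PUA_rep_gives_Clifford_rep} produces a $C\ell_{4,0}\cong M_2(\mathbb{H})$ acting on $\calH\oplus\calH$. Morita equivalence of $\mathbb{H}$ and $M_2(\mathbb{H})$ gives
\[
C^*_\sigma(G)\otimes\calK\;\cong\;\mathbb{H}\otimes\calK\;\cong\;M_2(\mathbb{H})\otimes\calK\;\cong\;C\ell_{4,0}\otimes\calK,
\]
which is the stable isomorphism asserted, explaining why the statement says ``stably'' rather than ``isomorphic''.

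The main obstacle is careful bookkeeping of the real-versus-complex structure, since for $G$ containing anti-linear symmetries $C^*_\sigma(G)$ is intrinsically a real (not complex) $C^*$-algebra, and the operation $i\mapsto -i$ must be threaded correctly through the anti-linear operators $R_T, R_C$. In particular one has to check that the candidate Clifford generators genuinely lie in the real subalgebra and that no additional real-linearly independent elements are missed. Once this is established in each row, the squares and anti-commutators already computed in the proof of Proposition \ref{prop:PUA_rep_gives_Clifford_rep} give the claimed identification with $C\ell_{n,0}$ or $\C\ell_n$ immediately.
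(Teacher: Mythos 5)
The paper's own ``proof'' is a single sentence (``Follows from Proposition~\ref{prop:PUA_rep_gives_Clifford_rep}''), so your explicit, case-by-case identification of $C^*_\sigma(G)$ is a genuinely more careful route to the same conclusion, and for most rows your verification is correct: for $G=\{1\}$ and $G=\{1,S\}$ the complex group algebra is $\C$ resp.\ $\C\oplus\C\cong\C\ell_1$; when $C\in G$ the generators $\{R_C, iR_C, (iR_{CT})\}$ do recover $i$ and all $R_g$, giving an honest isomorphism onto the relevant $C\ell_{n,0}$; and for $G=\{1,T\}$ with $R_T^2=-1$ your Morita argument $\mathbb{H}\sim M_2(\mathbb{H})\cong C\ell_{4,0}$ is exactly right and explains the word ``stably.''

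There is, however, one genuine error. For $G=\{1,T\}$ with $R_T^2=1$ you assert $C^*_\sigma(G)\cong\R=C\ell_{0,0}$, justifying it by ``$R_T$ contributes no Clifford generators.'' That inference is a non sequitur: $R_T$ contributes no \emph{odd} Clifford generators, but it is still a nontrivial element of the algebra. With the convention (implicitly used throughout the paper, and required elsewhere in your own proof when you ``recover the scalar $i$'') that $C^*_\sigma(G)$ contains the complex scalars as even elements, for this $G$ one gets the real crossed product $\C\rtimes\Z_2$ with $\Z_2$ acting by conjugation, i.e.\ the algebra $\mathrm{span}_\R\{1,i,R_T,iR_T\}$ with $i^2=-1$, $R_T^2=1$, $iR_T=-R_Ti$, which is $M_2(\R)$, \emph{not} $\R$. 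The corollary survives because $M_2(\R)$ is stably isomorphic to $\R$, but the reasoning needs to be corrected. This is not merely cosmetic: if you had instead taken $C^*_\sigma(G)$ to mean the real group algebra $\R[\Z_2]\cong\R\oplus\R$ \emph{without} the scalar $i$, the conclusion would actually be false ($\R\oplus\R$ is not stably isomorphic to $\R$). So one should make explicit at the outset that $C^*_\sigma(G)$ for an anti-unitary symmetry group is the $\C$-twisted (Real) group algebra, containing $i$ as an even element, and then uniformly argue the stable isomorphism; with that stated, your argument goes through in every row.
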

\begin{proof}
Follows from Proposition \ref{prop:PUA_rep_gives_Clifford_rep}.
\end{proof}

\begin{remark}[The $10$-fold way]
A graded PUA representation of $\{1,T,C,CT\}$ 
gives rise to the real Clifford generators $\{R_C, iR_C, iR_{CT}\}$. 
These generators represent four different Clifford algebras 
determined by the sign of $R_C^2$ and $R_T^2$. 
Similarly, the representations of the subgroup $\{1,C\}$ 
give representations for two real Clifford algebras 
generated by $\{R_C, iR_C\}$ and vary depending on 
whether $R_C^2 = \pm 1$. Graded representations of 
$\{1,S\}$ correspond to the Clifford algebra 
$\text{span}_\C\{R_S\}\cong \C\ell_1$, which is the 
same whether $R_S^2=\pm 1$ (again, these 
representations come with the grading $\Gamma=\mathrm{sgn}(H)$). 
A Hamiltonian compatible with the symmetry group $\{1,T\}$ 
gives rise to two real Clifford algebras depending on 
whether $R_T^2 = \pm 1$. In total, we have nine 
possible representations of symmetry subgroups 
as \emph{distinct} Clifford algebras and a lack of any 
symmetry gives us one more possibility. This is the 
well-known `$10$-fold way' that arises when we 
consider symmetries of this kind (see for example~\cite{SRFL08}).

Because we are interested in the link between 
Clifford representations and $KK$-theory, we may 
choose representations up to stable isomorphism, 
where $C\ell_{r+1,s+1}\cong C\ell_{r,s}\hat\otimes M_2(\R)$ 
for real Clifford algebras and $\C\ell_{n+2}\cong \C\ell_{n}\hat\otimes M_2(\C)$ 
for complex algebras. We summarise the results in Table \ref{table:PT_classes_repn_table}.

We note that in Table \ref{table:PT_classes_repn_table}, 
each symmetry type gives rise to a distinct graded Clifford representation. 
Therefore (up to stable isomorphism), the process is reversible. 
That is, given a graded representation of $C\ell_{n,0}$ or $\C\ell_n$, 
we may think of this representation as encoding the 
symmetries of a subgroup of the $CT$-group that are 
compatible with a gapped Hamiltonian with $\Gamma = \mathrm{sgn}(H)$.
\end{remark}

\subsection{Internal symmetries and $KK$-classes} \label{subsec:symmetries_and_KKO}
In the previous section, we outlined how symmetry-compatible gapped self-adjoint Hamiltonians give rise to a graded $\ast$-representation of $C\ell_{n,0}$ or $\C\ell_n$ with the number $n$ determined (up to stable isomorphism) by the symmetries present and whether they are even or odd. Our next task is to relate this characterisation to the $K$-theory of our observable algebra.

Before we specify our observable algebra, we must first specify the class of of bulk Hamiltonians our method can be adapted to. As observed in the
integer quantum Hall example (cf.~\cite{Bellissard94, CM96}), in order to study the geometry and topology of the Brillouin zone, we require an algebra of observables larger than the algebra generated by the Hamiltonian (or its resolvent).

\begin{assumption} \label{def:bulk_Hamiltonian_condition}
Unless otherwise stated, we will assume the Hamiltonians we consider act on $\ell^2(\Z^{d})\otimes\mathbb{C}^N$ 
and are represented by matrices whose entries are either finite polynomials of (possibly twisted) shift operators or infinite polynomials with Schwartz-class coefficients. The Hamiltonians also have a spectral gap containing the Fermi level.

If $H$ is compatible with the symmetry group $G$, a subgroup of $\{1,T,C,CT\}$, then we also assume that the symmetry action $H\mapsto R_gHR_g^*$ extends to an action on the algebra generated by the (twisted) shift operators that generate $H$.
\end{assumption}

We note that essentially all tight-binding (discrete) model Hamiltonians without disorder satisfy our criterion. We consider the algebra
generated by the shift operators that give rise to $H$ and act on $\ell^2(\Z^d)\otimes \mathbb{C}^N$. 
We require that the action of symmetries on the Hamiltonian $H$ extends 
to the observable algebra (Assumption \ref{def:bulk_Hamiltonian_condition}) 
in order to determine symmetry properties of the whole Brillouin zone. 
Such an assumption is required in the case of abstract representations 
of the symmetry group $G\subset \{1,T,C,CT\}$, though is easily 
satisfied in the common representations that arise in examples 
(e.g. symmetry involutions defined by complex conjugation or spatial involution).

Because we are starting with operators on a complex Hilbert space, 
if there are anti-linear symmetries, our algebra of interest is a \emph{real} subalgebra of a
\emph{complex} algebra. Specifically, we take the complex $C^*$-algebra generated by the shift operators and then the real subalgebra that is
invariant under the involution $a^\tau = \calC a \calC$ with $\calC$ complex conjugation on $\calH$. Such a condition requires the 
shift operators to be untwisted by an external magnetic field. 
This is unsurprising in the case of a time-reversal
symmetric Hamiltonian, which cannot have an external 
magnetic field. Systems with charge-conjugation symmetry that contain an external
magnetic field require a more careful treatment 
of the algebras of interest. In the interest of brevity, we will avoid this issue and
instead assume that if there are anti-linear symmetries present, the magnetic flux vanishes. 
See~\cite{Kellendonk15} for more detail on such problems.

In the case that $G=\{1,T,C,CT\}$, $\{1,T\}$ or $\{1,C\}$, 
we take our bulk algebra to be the subalgebra of the matrix algebra of 
shift operators that generate $H$. Namely, we denote $A\subset M_N(C^*(\Z^d))$ with
$C^*(\Z^d) \cong C^*(S_1,\ldots,S_d)$ the
real $C^*$-algebra generated by shift operators. We note that $A$ acts on both the complex Hilbert space
$\ell^2(\Z^d)\otimes\C^N$ and the real Hilbert space 
$\ell^2(\Z^d)\otimes\R^M$, which will be important when we construct real spectral
triples. If $G\subset\{1,S\}$ our shift operators may be 
twisted and we denote the complex algebra $A_\C \subset M_N(A_\phi^d)$ with $A_\phi^d$ the $d$-dimensional
rotation algebra (of course we may also take $\phi=0$).

We also note that our algebras are modelling systems without 
disorder. We have started with this basic model for simplicity and to make our
constructions as clear as possible. We will comment on some 
extensions to the case of algebras modelling weak disorder in Section
\ref{subsec:remarks_on_disorder}, though delay a full investigation to future work.

\subsubsection{The symmetry class}

We first construct a Kasparov module, real or complex, which classifies symmetry 
compatible Hamiltonians by the associated $C^*$-algebraic $K$-theory. We employ the 
framework of $KK$-classes as this allows generalisations to other symmetry groups.

Using the action of $G$ on $A\subset M_N(C^*(\Z^d))$ we can 
take the crossed product $A\rtimes G$ (similarly $A_\C \rtimes G$ 
for complex algebras and $G=\{1,S\}$). 
This is one of the 
key reasons we require $A$ to be a real algebra. 
In the case $g=C$ or $T$, the automorphism 
$\alpha_g(a) = R_g aR_g^*$ is complex anti-linear 
and so one can not take the crossed product of this 
automorphism if $A$ is a complex algebra. We can 
realise this crossed product concretely as
$$  
A\rtimes G \cong 
\ol{\text{span}}_\R\!\left\{ \sum_{g\in G} a_g R_g\,:\, a_g\in A\right\}\subset \End_\R(\calH). 
$$
There is a conditional expectation on the crossed-product, 
$\Phi:A\rtimes G\to A$ that has the form 
$$  
\Phi\left( \sum_{g\in G} a_g R_g \right) = a_e \in A. 
$$
The next ingredient that we need to obtain a representative of a
$KKO$-class is a bimodule, specifically a right $A$-module that is also a left 
$C^*(G)$-module. This module has to be equipped with an $A$-valued inner
product (what this means is explained in Appendix \ref{sec:RealKasTheory}), and the next
result uses the expectation $\Phi$ to construct the required bimodule.

\begin{prop}
Let $G$ be a subgroup of the symmetry group $\{1,T,C,CT\}$ 
with $G$ non-trivial and $G\neq\{1,S\}$. Then there is a real 
Hilbert $A$-module $E_A$ defined as the completion of 
$A\rtimes G$ under the norm derived from the inner product 
$( e_1|e_2)_A = \Phi(e_1^* e_2)$ and with right-action given by right-multiplication.
If $G\subset\{1,S\}$ then there is a complex Hilbert $A_\C$-module given by 
(the completion of) $A_\C\rtimes G$ via the same inner-product and norm.
\end{prop}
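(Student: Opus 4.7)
The plan is to adapt the standard GNS-type construction associated to a faithful conditional expectation: I will equip the dense subalgebra $A \rtimes G$ (respectively $A_\C \rtimes G$) with the $A$-valued pre-inner product $(e_1|e_2)_A = \Phi(e_1^* e_2)$, verify the Hilbert $C^*$-module axioms, and complete. First I will check the algebraic properties. The bimodule property $\Phi(a \xi b) = a \Phi(\xi) b$ for $a,b \in A$ gives $A$-linearity on the right, the identity $\Phi(x^*) = \Phi(x)^*$ yields conjugate symmetry, and right multiplication by $A$ evidently preserves $A \rtimes G$ and is compatible with the pairing, so the right module structure is immediate.

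The substantive step is positivity and faithfulness of $\Phi$. Using the concrete description of elements as finite sums $e = \sum_{g \in G} a_g R_g$, a direct computation gives
\begin{equation*}
\Phi(e^* e) = \sum_{g \in G} \alpha_{g^{-1}}(a_g^* a_g),
\end{equation*}
where $\alpha_g(a) = R_g a R_g^*$ is the (possibly anti-linear) $\ast$-automorphism of $A$ implemented by $R_g$. Each summand is positive in $A$ because $\alpha_{g^{-1}}$ is a $\ast$-automorphism, so the whole sum is positive, and vanishes only when every $a_g$ does. This simultaneously yields positivity of the pre-inner product and non-degeneracy on the dense subalgebra. Completion under the induced norm $\|e\|^2 = \|\Phi(e^* e)\|_A$ then produces the desired Hilbert $A$-module $E_A$, and the right $A$-action extends by continuity.

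The main obstacle I anticipate is the real-versus-complex bookkeeping. When $g$ equals $T$ or $C$, the operator $R_g$ is anti-linear on $\calH$, so $\alpha_g$ is not a $\C$-linear automorphism and one cannot form the crossed product over $\C$. This is precisely why $A$ must be chosen as the real $C^*$-subalgebra fixed by the Real involution $a \mapsto \calC a \calC$, per Assumption \ref{def:bulk_Hamiltonian_condition}: on such an $A$, the map $\alpha_g$ restricts to a genuine $\R$-linear $\ast$-automorphism, $A \rtimes G$ is an honest real $C^*$-algebra, and $\Phi$ is an honest real-linear faithful conditional expectation, so the GNS construction applies. For $G \subset \{1,S\}$ the implementing operator $R_S$ is already complex-linear, so the identical argument runs over $\C$ without modification and produces the complex Hilbert $A_\C$-module.
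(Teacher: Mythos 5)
Your proof is correct and follows the same GNS-from-faithful-conditional-expectation route as the paper. The only cosmetic difference is which axiom each argument chooses to spell out: the paper explicitly verifies right-compatibility $(e_1 \mid e_2 c)_A = (e_1 \mid e_2)_A\, c$ by evaluating $\Phi$ on the identity coefficient and treats positivity and faithfulness of $\Phi$ as standard, whereas you compute $\Phi(e^*e) = \sum_{g\in G}\alpha_{g^{-1}}(a_g^* a_g)$ to establish positivity and faithfulness explicitly and treat right-compatibility as immediate; both choices are legitimate and the underlying construction is identical.
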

\begin{proof}
That $\Phi:A\rtimes G \to A$ gives an $A$-valued inner product 
is a check of the definition using the positivity, faithfulness and $A$-bilinearity of $\Phi$.
 To be a Hilbert $A$ module requires $E_A$ to have a right-multiplication 
by $A$ that is compatible with the inner product.  What this means is that for $c \in A$, $( e_1|e_2c)_A=( e_1|e_2)_Ac$.
Let us now see that in the situation described in the statement of the proposition this property holds,
\begin{align*}
  \left(\left. \sum_{g\in G} a_g R_g\right\vert \sum_{h\in G} b_h R_h c \right)_A 
  &= \Phi\!\left(\sum_{g,h\in G} R_g^* a_g^* b_hR_h c\right) \\
    &=  \Phi\!\left(\sum_{g,h\in G} R_g^* a_g^* b_h \alpha_h(c) R_h \right) \\
    &= \sum_{g,h\in G} \delta_{g,h}\alpha_g^{-1}(a^*_g b_h \alpha_h(c))R_g^* R_h
\end{align*}
as $\Phi$ evaluates at the identity. We then simplify
\begin{align*}
  \left(\left. \sum_{g\in G} a_g R_g\right\vert \sum_{h\in G} b_h R_h c \right)_A  
  &= \sum_{g\in G} \alpha_g^{-1}(a_g^* b_g) c = \left(\left. \sum_{g\in G} a_g R_g\right\vert \sum_{h\in G} b_h R_h  \right)_A c.
\end{align*}
We can complete $A\rtimes G$ in the norm 
$\Vert e_1\Vert=\Phi(e_1^* e_1)^{1/2}$ defined from this inner product 
to obtain the real module $E_A$. In fact, since $G$ is finite, $E_A\cong A\rtimes G$ as a linear space. When $G\subset\{1,S\}$, the same proof 
yields a complex module.
\end{proof}

We note that  elements in the crossed product $A\rtimes G$ act on the left on 
$E_A$ by what are termed  adjointable  endomorphisms.
What this means is that the  relation
\begin{equation} \label{eq:multiplication_is_adjointable_on_crossed_product_module}
 ( e_1 e_2| e_3)_A = \Phi(e_2^* e_1^* e_3) = (e_2| e_1^*e_3)_A 
\end{equation}
must hold for for any $e_j\in A\rtimes G, j=1,2,3$. However in this 
instance the previous identity is immediate.
In particular, this means that a left-action by multiplication 
by the real $C^*$-algebra $C^*(G)\subset A\rtimes G$ is adjointable. An analogous observation 
holds for complex algebras and modules if $G\subset \{1,S\}$. 
In the spirit of Proposition \ref{prop:PUA_rep_gives_Clifford_rep}, we obtain the following.

\begin{prop} \label{prop:Clifford_rep_on_module}
Let $H$ be a Hamiltonian satisfying Assumption \ref{def:bulk_Hamiltonian_condition} 
that is compatible with a subgroup $G$ of the symmetry group $\{1,T,C,CT\}$. 
If $G$ is non-trivial and $G\neq\{1,S\}$, then there is 
a graded adjointable representation of $C\ell_{n,0}$ on the 
$C^*$-module $E_A^{\oplus N}$ with grading determined by 
$\mathrm{sgn}(H)$ and $N\in\{2,4\}$. If $G\subset\{1,S\}$, then there is a 
graded representation of $\C\ell_n$, with $n$ determined by 
 Table \ref{table:PT_classes_repn_table}.
\end{prop}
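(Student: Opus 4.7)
The plan is to lift the Hilbert-space construction from the proof of Proposition~\ref{prop:PUA_rep_gives_Clifford_rep} to the Hilbert module $E_A$ (or $E_{A_\C}$). The underlying observation is that each symmetry operator $R_g$, being an element of the crossed product $A\rtimes G$, acts on $E_A$ by left multiplication as an adjointable endomorphism by~\eqref{eq:multiplication_is_adjointable_on_crossed_product_module}. Moreover $\mathrm{sgn}(H)$ is a bounded continuous function of $H$ with values in $\{\pm 1\}$, so it lies in (the closure of) $A$ and acts adjointably on $E_A$; Definition~\ref{def:Hamiltonian_compatible_with_G} then gives $\mathrm{sgn}(H)R_g=c(g)R_g\,\mathrm{sgn}(H)$, so $\Gamma=\mathrm{sgn}(H)$ provides the candidate grading against which the symmetries with $c(g)=-1$ become odd.

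The complex case $G\subseteq\{1,S\}$ is immediate: the module $E_{A_\C}$ is already complex, $R_S$ acts adjointably as a $\Gamma$-odd operator squaring to $\pm 1$, and the complex algebra it generates is a graded representation of $\C\ell_1$ (respectively $\C\ell_0$ for trivial $G$). No doubling is required here.

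For the real subgroups $G\in\{\{1,T\},\{1,C\},\{1,T,C,CT\}\}$, I would first apply the cocycle normalisation of~\cite[Proposition~6.2]{Thiang14} so that $R_C$ and $R_T$ commute and $R_CR_T=R_{CT}$. The Hilbert-space Clifford generators from Proposition~\ref{prop:PUA_rep_gives_Clifford_rep}, such as $\{R_C,iR_C,iR_{CT}\}$ for the full $CT$-group, involve the imaginary unit, which is not in the real crossed product $A\rtimes G$. To represent $i$ adjointably on a real module, I would pass to the doubled module $E_A\oplus E_A$, on which the matrix
\begin{equation*}
  J=\begin{pmatrix}0 & -1_{E_A}\\ 1_{E_A} & 0\end{pmatrix}
\end{equation*}
is adjointable with $J^2=-1$. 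Each element of $A\rtimes G$ is then embedded so as to reproduce its Hilbert-space behaviour with respect to complex scalars: complex-linear elements commute with $J$ via the diagonal embedding, while the anti-linear generators $R_C,R_T$ are embedded so that they anti-commute with $J$ (encoding the identity $iR_g=-R_gi$ that holds on $\calH$ for anti-linear $R_g$). Under this recipe every Hilbert-space Clifford generator lifts to an explicit adjointable $2\times 2$ matrix on $E_A^{\oplus 2}$, with $\Gamma=\mathrm{diag}(\mathrm{sgn}(H),\mathrm{sgn}(H))$ as the induced grading; one then verifies case-by-case that the lifted generators anti-commute, have the prescribed squares, and generate $C\ell_{n,0}$ up to stable isomorphism according to Table~\ref{table:PT_classes_repn_table}. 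The quaternionic sub-case $G=\{1,T\}$ with $R_T^2=-1$ requires a further doubling, exactly as in the final paragraph of the proof of Proposition~\ref{prop:PUA_rep_gives_Clifford_rep}, producing $C\ell_{4,0}$ on $E_A^{\oplus 4}$ and accounting for the value $N=4$.

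The main obstacle is bookkeeping: one must verify that the lifted matrix generators satisfy the correct Clifford relations and grading in each of the four real sub-cases. The only nontrivial conceptual point is faithfully encoding the anti-linearity of $R_C,R_T$ by the sign convention $J\widetilde{R}_g=-\widetilde{R}_g J$ in the doubled module; once this is imposed, the Clifford computations reduce formally to those already carried out on $\calH$ in Proposition~\ref{prop:PUA_rep_gives_Clifford_rep}.
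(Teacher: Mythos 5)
Your proposal matches the paper's proof in all essentials: the paper likewise observes that $R_g$ and $\mathrm{sgn}(H)$ act adjointably by left multiplication, normalises the cocycle via Thiang's Proposition~6.2, and lists exactly the $2\times 2$ matrices on $E_A\oplus E_A$ that your ``$J$-replaces-$i$'' recipe produces (e.g.\ for $G=\{1,T,C,CT\}$ the triple $\widetilde{R}_C,\;J\widetilde{R}_C,\;J\widetilde{R}_{CT}$, with the quaternionic $\{1,T\}$ case pushed to $E_A^{\oplus 4}$ just as you say). One small correction: the grading is $\mathrm{diag}(\mathrm{sgn}(H),\mathrm{sgn}(H))$ only when $G$ contains an odd element, since the Clifford generators then derive their oddness from $c(g)=-1$; in the quaternionic subcase $G=\{1,T\}$ the generators are off-diagonal in the doubling and the paper uses $\Gamma=\mathrm{diag}(\mathrm{sgn}(H),-\mathrm{sgn}(H))$, so your blanket formula does not apply there (your deferral to the final paragraph of Proposition~\ref{prop:PUA_rep_gives_Clifford_rep} would have caught this).
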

\begin{proof}
We first note that left-multiplication by $R_g$ is adjointable 
for any $g\in G$ by 
Equation \eqref{eq:multiplication_is_adjointable_on_crossed_product_module}. 
The same argument applies to show that the grading $\mathrm{sgn}(H)\in A$ 
is an adjointable operator.

From this point our proof is quite similar to the proof of 
Proposition \ref{prop:PUA_rep_gives_Clifford_rep} and 
is done on a case by case basis. We can once again use~\cite[Proposition 6.2]{Thiang14} 
to normalise our symmetry involutions so $R_T$ commutes with $R_C$ and $R_T R_C = R_{CT}$.

We start with the full 
group $G=\{1,T,C,CT\}$ and define a left-action on 
$E_A\oplus E_A$  given by left-multiplication by the 
real algebra generated by the elements
$$ 
\left\{ \begin{pmatrix} R_C & 0 \\ 0 & -R_C \end{pmatrix}, 
\begin{pmatrix} 0 & R_C \\ R_C & 0 \end{pmatrix}, 
\begin{pmatrix} 0 & -R_{CT} \\ R_{CT} & 0 \end{pmatrix}\right\},
\quad \Gamma = \begin{pmatrix} \mathrm{sgn}(H) & 0 \\ 0 & \mathrm{sgn}(H) \end{pmatrix}. 
$$
One readily checks as in Proposition \ref{prop:PUA_rep_gives_Clifford_rep} 
that the generating elements have odd grading and mutually anti-commute. 
The left-action generated by these elements gives rise to four distinct 
Clifford algebras depending on whether $R_T^2=\pm 1$ and 
$R_C^2=\pm 1$.

Similarly for the case $G=\{1,C\}$ we take a left-action generated by
$$  
\left\{ \begin{pmatrix} R_C & 0 \\ 0 & -R_C \end{pmatrix}, 
\begin{pmatrix} 0 & R_C \\ R_C & 0 \end{pmatrix}\right\},
\quad \Gamma = \begin{pmatrix} \mathrm{sgn}(H) & 0 \\ 0 & \mathrm{sgn}(H) \end{pmatrix}. 
$$
We obtain an adjointable left-action of either $C\ell_{2,0}$ or $C\ell_{0,2}$ 
depending on whether $R_C^2=\pm 1$.

If $G=\{1,S\}$ then we take the (complex) left-action generated by $R_S$ 
on the complex module $E_{A_\C}$ with grading 
$\mathrm{sgn}(H)$. Hence the left-action is a graded representation of $\C\ell_1$.

Once again the case of $G=\{1,T\}$ is slightly more complicated as 
$R_T$ is evenly graded. If $R_T^2=1$, then $R_T$ implements a 
Real involution on the module $E_A$ and gives no additional 
Clifford representation. If $R_T^2=-1$, then $R_T$ encodes a 
quaternionic structure on $E_A$. There is an equivalence 
between graded quaternionic modules and graded real 
modules with a left $C\ell_{4,0}$-action. Specifically, 
we take $E_A\oplus E_A$ and consider the real action generated by
\begin{align*}
   \left\{\begin{pmatrix} 0 & 1 \\ 1 & 0 \end{pmatrix}, 
   \begin{pmatrix} 0 & -i \\ i & 0 \end{pmatrix}, 
   \begin{pmatrix} 0 & -R_T \\ R_T & 0 \end{pmatrix}, 
   \begin{pmatrix} 0 & -iR_T \\ iR_T & 0 \end{pmatrix}\right\}, 
   \quad \Gamma = \begin{pmatrix} \mathrm{sgn}(H) & 0 \\ 0 & -\mathrm{sgn}(H) \end{pmatrix}.
\end{align*}
We may also replace $i$ with $\begin{pmatrix} 0 & -1 \\ 1 & 0 \end{pmatrix}$ and $iR_T$ with $\begin{pmatrix} R_T & 0 \\ 0 & -R_T \end{pmatrix}$ in order to consider the real action on $E_A^{\oplus 4}$. In either case we obtain a 
graded adjointable representation of $C\ell_{4,0}$.
\end{proof}

\begin{cor} \label{cor:PUA_KK_class}
Let $H$ be a Hamiltonian satisfying Assumption \ref{def:bulk_Hamiltonian_condition} 
that is compatible with a subgroup $G$ of the symmetry group $\{1,T,C,CT\}$. 
Then for $G$ non-trivial and $G\neq \{1,S\}$ the tuple
$$
  \left( C\ell_{n,0}, \, E^{\oplus N}_A, \, 0, \Gamma \right)
$$
is a real Kasparov module with the left action and grading 
given by Proposition \ref{prop:Clifford_rep_on_module}. If 
$G\subset\{1,S\}$ then the Kasparov module is complex.
\end{cor}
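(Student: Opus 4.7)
The plan is to verify the four defining conditions of a (real) Kasparov module directly for the tuple $(C\ell_{n,0}, E_A^{\oplus N}, 0, \Gamma)$, with the complex case $(\C\ell_n, E_{A_\C}^{\oplus N}, 0, \Gamma)$ treated by the same argument. Most of the structural ingredients are already supplied by Proposition \ref{prop:Clifford_rep_on_module}: the left Clifford action is adjointable, the Clifford generators are odd for the grading $\Gamma$ built from $\mathrm{sgn}(H)$, and $\Gamma$ itself is adjointable since it is left multiplication by the self-adjoint unitary $\mathrm{sgn}(H)\in A$ (which lies in $A$ by the spectral gap of Assumption \ref{def:bulk_Hamiltonian_condition}). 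Thus only the compactness axioms need attention.

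Taking $F=0$, the commutator condition $[F,\phi(a)]=0$ and the self-adjointness deficit $(F-F^*)\phi(a)=0$ are automatic. The only nontrivial requirement is that $(F^2-1)\phi(a)=-\phi(a)\in \mathcal{K}_A(E_A^{\oplus N})$ for every $a\in C\ell_{n,0}$. The key observation is that since $G$ is a finite group, the crossed product $A\rtimes G$ is, as a right $A$-module, free of finite rank $|G|$ on the basis $\{R_g : g\in G\}$; this is immediate from the decomposition
\[
A\rtimes G = \bigoplus_{g\in G} R_g\cdot A,
\]
which is orthogonal with respect to the inner product $(\cdot\mid\cdot)_A=\Phi(\,\cdot^*\,\cdot\,)$ because $\Phi$ picks out the identity component. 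Hence $E_A^{\oplus N}$ is a finitely generated free Hilbert $A$-module, for which $\mathcal{L}_A(E_A^{\oplus N})=\mathcal{K}_A(E_A^{\oplus N})$. In particular every adjointable operator is compact, so $\phi(a)$ is compact for all $a\in C\ell_{n,0}$, and the final axiom holds.

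There is no real obstacle: the result is almost a formal consequence of Proposition \ref{prop:Clifford_rep_on_module} together with the finiteness of $G$. The only point that requires a moment's care is recognising that the $A$-valued inner product induced by the conditional expectation $\Phi$ makes the natural generators $\{R_g\}$ into an orthogonal frame, so that the crossed product is genuinely a free (hence finitely generated projective) Hilbert module. The complex case $G\subset\{1,S\}$ is handled verbatim with $A$ replaced by $A_\C$ and the real Clifford algebra replaced by $\C\ell_n$.
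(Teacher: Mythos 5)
Your proof is correct and follows essentially the same route as the paper's one-line argument, which simply asserts that the Dirac-type operator is $0$ and $E_A^{\oplus N}$ is finite projective. You usefully spell out \emph{why} the module is finitely generated (indeed free of rank $|G|$, via the orthogonal decomposition $A\rtimes G=\bigoplus_{g\in G}R_g\cdot A$, which follows because $\Phi$ evaluates at the identity) and why this makes the compactness axiom automatic; the paper takes this for granted.
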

\begin{proof}
Because the Dirac-type operator is $0$ and $E_A^{\oplus N}$ is finite projective, the remaining 
conditions required to be a Kasparov module are satisfied.
\end{proof}

Hence given a symmetry compatible gapped Hamiltonian $H$, 
we obtain a $KK$-theory class that encodes the PUA representation 
of $G$ with respect to the grading $\mathrm{sgn}(H)$.

The Clifford representations that we construct in Proposition \ref{prop:Clifford_rep_on_module} are analogous 
to the representations in Proposition \ref{prop:PUA_rep_gives_Clifford_rep} 
and therefore are distinct up to stable isomorphism by 
Table \ref{table:PT_classes_repn_table}. Hence,  as in  the 
Hilbert space picture, there is a $1$-$1$ correspondence 
between symmetry compatible Hamiltonians and graded 
Clifford representations on the real or complex $C^*$-module 
$E_A^{\oplus N}$ (again, up to stable isomorphism).

\begin{remark}[Extensions of our method]
We note that the construction of the crossed product, 
$A\rtimes G$, and $C^*$-module, $E_A$, is 
independent of the finite group under consideration. 
Indeed, we can extend the results of 
Proposition \ref{prop:Clifford_rep_on_module} 
to any finite group $G$ that is compatible with the 
Hamiltonian $H$ in the sense of 
Definition \ref{def:Hamiltonian_compatible_with_G} 
and obtain the Kasparov module 
$\left(C^*(G), E_A, 0, \mathrm{sgn}(H)\right)$, 
which will give a class in $KKO(C^*(G),A)$ or 
$KK(C^*(G)\otimes_\R \C,A_\C)$. One of the key properties 
of a subgroup $G$ of the $CT$-symmetry group is that 
a left-action of $C^*(G)$ or $C^*(G)\otimes_\R \C$ gives rise to a real or complex
Clifford action (using matrices as above), which may not hold for an 
arbitrary finite group $G$. We emphasise 
the flexibility of our method, as it can
accommodate symmetry groups that  contain spatial involution, for example.
\end{remark}

\subsubsection{Projective submodules and $KK$-classes}
Proposition \ref{prop:Clifford_rep_on_module} gives a graded 
representation of Clifford algebras on the $C^*$-module 
coming from the crossed product $A\rtimes G$. Rather than 
use the full $C^*$-module $E_A$, which may be too large, one is often interested 
in projective submodules $P E_A$, where $P$ is some 
projection with even grading.

In the case of $KK$-classes coming from gapped Hamiltonians, 
our obvious choice for a projection is the Fermi projection 
$P_\mu = \chi_{(-\infty,0]}(H)$. In the case of only time-reversal 
symmetry we immediately obtain the following result.

\begin{prop} \label{prop:TR_projective_KK_class}
Let $H$ be a gapped Hamiltonian that is compatible with 
$G\subset \{1,T\}$. Then the tuple
$$
 \left( C\ell_{n,0}, \, P_\mu E_A^{\oplus N}, \, 0, \, \Gamma \right)
$$
is a Kasparov module with $N\in\{1,4\}$ and $n$ determined 
by Table \ref{table:PT_classes_repn_table}
\end{prop}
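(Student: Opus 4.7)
The plan is to leverage Corollary \ref{cor:PUA_KK_class}, which provides the ambient Kasparov module $(C\ell_{n,0}, E_A^{\oplus N}, 0, \Gamma)$, and to show that the Fermi projection, acting diagonally on $E_A^{\oplus N}$, cuts this data down to the submodule $P_\mu E_A^{\oplus N}$ while preserving every structural requirement. Concretely, three things must be verified: (i) $P_\mu$ lies in the real algebra $A$ and acts as an adjointable endomorphism of $E_A^{\oplus N}$; (ii) this endomorphism commutes with the left $C\ell_{n,0}$-action of Proposition \ref{prop:Clifford_rep_on_module} and with the grading $\Gamma$; and (iii) the Kasparov axioms --- boundedness of $[D,a]$ and compactness of $a(1+D^2)^{-1/2}$ --- still hold on the resulting submodule with $D = 0$.

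First I would establish (i). By Assumption \ref{def:bulk_Hamiltonian_condition} the Hamiltonian $H$ has a spectral gap containing the Fermi level, so $P_\mu = \chi_{(-\infty,\mu]}(H)$ coincides with $f(H)$ for some continuous real-valued function $f$ that is $1$ on $\sigma(H)\cap(-\infty,\mu]$ and $0$ on $\sigma(H)\cap[\mu,\infty)$. Since $H \in A$ and $A$ is a real $C^*$-algebra closed under continuous functional calculus in real-valued functions, both $P_\mu$ and $\mathrm{sgn}(H)$ lie in $A$. Left multiplication by $P_\mu$, via the inclusion $A \hookrightarrow A \rtimes G$, is therefore adjointable on $E_A$ and extends diagonally to $E_A^{\oplus N}$; moreover, because $E_A \cong A^{|G|}$ as a Hilbert $A$-module (free of rank $|G| \leq 2$), every such adjointable endomorphism is automatically compact in the sense of Hilbert $C^*$-modules.

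For (ii), the symmetry hypothesis $R_T H R_T^* = H$ combined with the anti-linearity of $R_T$ and the real-valuedness of $f$ yields $R_T P_\mu R_T^* = \overline{f}(R_T H R_T^*) = f(H) = P_\mu$, and the same argument gives $R_T\,\mathrm{sgn}(H)\,R_T^* = \mathrm{sgn}(H)$. Every generator of the $C\ell_{n,0}$-action produced in Proposition \ref{prop:Clifford_rep_on_module} for the cases $G = \{1\}$ and $G = \{1,T\}$ is built from $R_T$, the complex unit $i$, constant scalars, and off-diagonal swap matrices, so the block-diagonal action of $P_\mu$ commutes with each of them, and similarly with the block extension of $\mathrm{sgn}(H)$ entering $\Gamma$. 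Consequently $P_\mu E_A^{\oplus N}$ is an invariant graded Clifford submodule and the restriction of $\Gamma$ supplies the required $\Z_2$-grading.

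Finally, step (iii) is essentially free: $P_\mu E_A^{\oplus N}$ is finitely generated projective, being the image of a projection in $\End_A(E_A^{\oplus N}) = \calK_A(E_A^{\oplus N})$, so every adjointable endomorphism is automatically compact and the Kasparov conditions with $D = 0$ reduce to trivialities. The only genuinely substantive point is the commutation of $P_\mu$ with the anti-linear $R_T$ appearing in step (ii), and I expect this to be the main obstacle, since it relies crucially on being able to choose $f$ real-valued --- once that is settled, the case $G = \{1\}$ (with $n = 0$, $N = 1$ and no Clifford generators at all) drops out as a trivial specialisation.
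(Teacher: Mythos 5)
Your proposal is correct and follows exactly the route of the paper's (very terse) proof: the paper simply observes that $P_\mu$ commutes with $R_T$ and invokes Corollary \ref{cor:PUA_KK_class}, leaving all details implicit. Your write-up unpacks precisely this — the functional-calculus argument for $R_T P_\mu R_T^* = P_\mu$ (via $R_T f(H) R_T^* = \overline{f}(R_T H R_T^*) = f(H)$ for real-valued $f$), the commutation with each Clifford generator and with $\Gamma$, and the automatic compactness on the resulting finitely generated projective submodule with $D=0$ — so it is the same approach, just with the implicit step made explicit.
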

\begin{proof}
The Fermi projection $P_\mu$ commutes with $R_T$. Hence the 
proof of Corollary \ref{cor:PUA_KK_class} carries over.
\end{proof}

We see that if $G$ is trivial, then the algebras and modules in 
Proposition \ref{prop:TR_projective_KK_class} can be complexified 
to obtain the module $\left( \C, P_\mu A_A, 0, \Gamma\right)$ with 
$\Gamma$ a grading on $A$ and $P_\mu$ degree $0$. If 
$A$ is trivially graded, then this module is exactly the 
representative of the Fermi projection $[P_\mu]\in K_0(A)$ 
translated into $KK(\C, A)$. Hence Proposition 
\ref{prop:TR_projective_KK_class} can be considered as an 
extension of the class of the Fermi projection to systems with 
time reversal symmetry.

Of course we would like an analogue of Proposition 
\ref{prop:TR_projective_KK_class} for when $G$ contains 
charge conjugation or sublattice symmetry. This presents us with 
an issue as
\begin{align*}
  &R_C P_\mu R_C^* = 1-P_\mu,  &&R_S P_\mu R_S^* = 1-P_\mu
\end{align*}
and so the left-action on $E_A$ from Proposition 
\ref{prop:Clifford_rep_on_module} will not descend to the 
projective submodule.

The solution to this issue requires us to consider a different 
grading on the crossed product $A\rtimes G$ and is similar to 
the recent work of Kellendonk~\cite{Kellendonk15}. We assume 
$G=\{1,S\}$, $\{1,C\}$ or $\{1,C,T,CT\}$ and define a grading on 
$E_A$ by $\mathrm{Ad}_{R_S}$ or $\mathrm{Ad}_{R_C}$. The 
Hamiltonian $H\in A\rtimes G$ is now odd with respect to this 
grading and the operators $R_C$ and $R_S$ are now even.

Next we consider the new $C^*$-module given by the graded 
tensor product $(E\hat\otimes C\ell_{0,1})_{A\hat\otimes C\ell_{0,1}}$. 
The right-action of $C\ell_{0,1}$ is given by right-multiplication 
and the product space has inner product
$$
  ( e_1 \hat\otimes \nu_1\mid e_2\hat\otimes \nu_2)_{A\hat\otimes C\ell_{0,1}} 
    = \Phi( e_1^* e_2) \hat\otimes \nu_1^*\nu_2.
$$
Inside of $E\hat\otimes C\ell_{0,1}$ is the element 
$\tilde{H} = H \hat\otimes \rho$, where $\rho$ is the odd
generator of $C\ell_{0,1}$ that is skew-adjoint and squares 
to $-1$. Because both $H$ and $\rho$ are odd, $\tilde{H}$ is 
even and by the properties of graded tensor products 
$\tilde{H}^* = (H\hat\otimes \rho)^* = -H \hat\otimes (-\rho) = \tilde{H}$. 
Therefore $\tilde{H}$ is self-adjoint and invertible with 
inverse $H^{-1}\hat\otimes \rho$. Thus 
$\tilde{P}_\mu = \chi_{(-\infty,0]}(\tilde{H})$ is an 
even projection in $(A\rtimes G)\hat\otimes C\ell_{0,1}$ and we 
have the following result.

\begin{prop} \label{prop:projective_KK_class_odd}
Let $H$ be a gapped Hamiltonian compatible with
$G=\{1,C\}$ or $\{1,C,T,CT\}$. Then
$$
 \left( C\ell_{n,1},\, \tilde{P}_\mu(E \hat\otimes C\ell_{0,1})^{\oplus 4}_{A\hat\otimes C\ell_{0,1}}, \, 0, \,\mathrm{Ad}_{R_C} \hat\otimes \gamma_{C\ell_{0,1}} \right)
$$
is a real Kasparov module with $n$ given in
Table \ref{table:PT_classes_repn_table}. If $G=\{1,S\}$, then
$$
\left( \C\ell_2, \,  \tilde{P}_\mu(E \hat\otimes \C\ell_{1})^{\oplus 2}_{A_\C\hat\otimes \C\ell_{1}}, \, 0, \,\mathrm{Ad}_{R_S} \hat\otimes \gamma_{\C\ell_{1}} \right)
$$
is a complex Kasparov module.
\end{prop}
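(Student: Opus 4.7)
The strategy is to mimic the construction of Proposition \ref{prop:Clifford_rep_on_module} but with respect to the new grading $\mathrm{Ad}_{R_C}\hat\otimes \gamma_{C\ell_{0,1}}$ (resp.\ $\mathrm{Ad}_{R_S}\hat\otimes \gamma_{\C\ell_{1}}$). Since $R_C$ anti-commutes with $H$, the Hamiltonian is odd under $\mathrm{Ad}_{R_C}$, and combined with the odd generator $\rho\in C\ell_{0,1}$ the element $\tilde H = H\hat\otimes \rho$ becomes even under the combined grading. A direct computation using the graded multiplication rule $(a\hat\otimes b)(c\hat\otimes d) = (-1)^{|b||c|}\, ac\hat\otimes bd$ yields $\tilde H^{*} = \tilde H$ and $\tilde H^{2} = H^{2}\hat\otimes 1$, so $\tilde H$ is an even self-adjoint invertible element of $(A\rtimes G)\hat\otimes C\ell_{0,1}$ and $\tilde P_\mu = \chi_{(-\infty,0]}(\tilde H)$ is a well-defined even projection.

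For the Clifford action I would take the $C\ell_{n,0}$-generators of Proposition \ref{prop:Clifford_rep_on_module} (built from $R_C$, $iR_C$, $R_{CT}$ and suitable $2\times 2$ matrix-amplifications, each even under $\mathrm{Ad}_{R_C}$ and anti-commuting with $H$) and tensor each with $\rho$, further extending to $(E \hat\otimes C\ell_{0,1})^{\oplus 4}$ to carry the block structure. Any such $\alpha\hat\otimes \rho$ is odd under the new grading (even times odd) and commutes with $\tilde H$ in the \emph{ungraded} sense, since the anti-commutation of $\alpha$ with $H$ is exactly cancelled by the sign produced by graded-swapping $\rho$ past $H$. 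Left multiplication by these generators therefore preserves $\tilde P_\mu (E\hat\otimes C\ell_{0,1})^{\oplus 4}$, and the identity $(\alpha\hat\otimes \rho)^{2} = -\alpha^{2}\hat\otimes 1$ converts the $C\ell_{n,0}$-relations into those of $C\ell_{n,1}\cong C\ell_{n,0}\hat\otimes C\ell_{0,1}$. The sublattice case $G=\{1,S\}$ is the complex analogue, with $R_S$ in place of $R_C$ and $\C\ell_{1}$ in place of $C\ell_{0,1}$; since $R_S$ is complex-linear, the module is naturally complex and yields a $\C\ell_{2}$-action.

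The remaining Kasparov module axioms are straightforward with $F = 0$: compactness of $\phi(a)(1+F^{2})^{-1/2}$ and $[\phi(a),F]$ reduces to the statement that every adjointable endomorphism of the module is compact. Because $G$ is finite, $E_A\cong (A\rtimes G)_A$ is a free $A$-module of rank $|G|$, and consequently $\tilde P_\mu (E\hat\otimes C\ell_{0,1})^{\oplus 4}$ is a finitely generated projective $(A\hat\otimes C\ell_{0,1})$-module, on which every adjointable operator is indeed compact.

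The principal technical obstacle I anticipate is the sign bookkeeping forced by the graded tensor product: one must check that the tensored generators actually satisfy the Clifford relations of $C\ell_{n,1}$ (resp.\ $\C\ell_{2}$) with squares and mutual anti-commutators matching the symmetry class recorded in Table \ref{table:PT_classes_repn_table}, and that the block-matrix amplification to $\oplus 4$ (resp.\ $\oplus 2$) treats the four relevant symmetry subgroups uniformly while realising the displayed grading $\mathrm{Ad}_{R_C}\hat\otimes \gamma_{C\ell_{0,1}}$ (resp.\ $\mathrm{Ad}_{R_S}\hat\otimes \gamma_{\C\ell_{1}}$).
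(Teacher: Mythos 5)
Your high-level strategy matches the paper: change to the grading $\mathrm{Ad}_{R_C}\hat\otimes\gamma_{C\ell_{0,1}}$, observe that $\tilde H = H\hat\otimes\rho$ is even, self-adjoint, and invertible, and that elements of the form $\alpha\hat\otimes\rho$ with $\alpha$ anti-commuting with $H$ commute with $\tilde H$ and hence preserve $\tilde P_\mu$. Those observations, together with the finite-projectivity argument that makes the Kasparov axioms trivial when $F=0$, are all present in the paper.

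However, the concrete construction of the Clifford action has genuine gaps.

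First, you propose to ``take the $C\ell_{n,0}$-generators of Proposition \ref{prop:Clifford_rep_on_module} (built from $R_C$, $iR_C$, $R_{CT}$\ldots, each even under $\mathrm{Ad}_{R_C}$\ldots) and tensor each with $\rho$''. But $iR_C$ is \emph{odd} under $\mathrm{Ad}_{R_C}$: since $R_C$ is anti-linear, $R_C(iR_C)R_C^{-1} = -iR_C$. So $iR_C\hat\otimes\rho$ would be even and could not serve as a Clifford generator. (The module-level generators of Proposition \ref{prop:Clifford_rep_on_module} in fact already avoid $iR_C$ by using $2\times 2$ block matrices built only from $R_C$ and $R_{CT}$; you appear to be conflating them with the Hilbert-space generators of Proposition \ref{prop:PUA_rep_gives_Clifford_rep}.)

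Second, and more seriously, the operation $\alpha\mapsto\alpha\hat\otimes\rho$ does \emph{not} convert $C\ell_{n,0}$-relations into $C\ell_{n,1}$-relations. For even $\alpha$ one has $(\alpha\hat\otimes\rho)^2 = -\alpha^2\hat\otimes 1$, as you note, so \emph{every} square flips sign: a set of generators representing $C\ell_{r,s}$ becomes one representing $C\ell_{s,r}$, not $C\ell_{r,s}\hat\otimes C\ell_{0,1}$. Concretely, for $G=\{1,T,C,CT\}$ with $R_C^2=R_T^2=+1$ the generators in Proposition \ref{prop:Clifford_rep_on_module} have squares $(+1,+1,-1)$, representing $C\ell_{2,1}$ (stably $C\ell_{1,0}$, so $n=1$). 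Tensoring with $\rho$ gives squares $(-1,-1,+1)$, i.e.\ $C\ell_{1,2}$, which is stably $C\ell_{7,0}$ --- not the claimed $C\ell_{1,1}$. In the paper this difficulty is handled by abandoning the old generators entirely and building a fresh collection of \emph{four} generators out of $R_C\hat\otimes\rho$ and $R_{CT}\hat\otimes\rho$ using explicit $4\times 4$ block matrices, arranged so the signature comes out correctly; moreover the case $\{1,T,C,CT\}$ genuinely splits into two sub-cases $R_C^2=R_T^2$ and $R_C^2=-R_T^2$ requiring different grading choices, which your proposal does not address.

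Third, for $G=\{1,C\}$ there simply are not enough even generators: after $R_C\hat\otimes\rho$ and its matrix amplifications one is short. The paper introduces an auxiliary even self-adjoint unitary $u\in A\rtimes G$ anti-commuting with $H$ (passing to matrices if needed) to supply the missing Clifford generator. This ingredient is absent from your proposal, and without it the stable class produced does not land in the row of Table \ref{table:PT_classes_repn_table} that the proposition asserts.
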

\begin{proof}
We first take the operators $R_C$ and $R_S$ to be commuting
by~\cite[Proposition 6.2]{Thiang14} with $R_C$ self-adjoint
(resp. skew-adjoint) if $R_C^2=1$ (resp. $R_C^2=-1$). We
can impose the same condition on $R_T$, which determines
the behaviour of $R_S=R_{CT}$.
If $G=\{1,S\}$ we may take $R_S^2=1$ and $R_S^*=R_S$.

Next we consider the operators
\begin{align*}
   &R_C\hat\otimes \rho,  &&R_S \hat\otimes \rho,
\end{align*}
which are odd in $(E\hat\otimes C\ell_{0,1})_{A\hat\otimes C\ell_{0,1}}$.
The new operators 
are self-adjoint or skew adjoint depending on whether
$(R_C\hat\otimes \rho)^2 = \pm 1$ (similarly $R_S\hat\otimes \rho$).
Hence the operators $R_C\hat\otimes \rho$ and
$R_{CT}\hat\otimes \rho$ act as generators of a Clifford
algebra on $(E\hat\otimes C\ell_{0,1})_{A\hat\otimes C\ell_{0,1}}$.
Furthermore, we check that
\begin{align*}
  (R_C\hat\otimes \rho)(H\hat\otimes \rho)(R_C\hat\otimes \rho)^*
   &= (-R_C H \hat\otimes (-1))(-R_C^* \hat\otimes \rho) \\
   &= -R_C H R_C^* \hat\otimes \rho = H\hat\otimes \rho.
\end{align*}
Hence $R_C\hat\otimes \rho$ commutes with $\tilde{H}$
and so is an adjointable operator on the projective submodule
$\tilde{P}_\mu(E \hat\otimes C\ell_{0,1})$ (similarly
$R_S \hat\otimes \rho$).

For the group $\{1,C,T,CT\}$, our Clifford action
changes depending on whether $R_C^2=R_T^2$ or $R_C^2 = -R_T^2$. We
refer the reader to~\cite{Kellendonk15} for a more detailed
exposition of these subtle differences and how they arise.
For the case $R_C^2 = R_T^2 = \pm 1$, we define the left-action on
$\tilde{P}_\mu(E \hat\otimes C\ell_{0,1})^{\oplus 4}$ generated
by
\begin{align*}
  &\left\{ \begin{pmatrix} (R_{CT}\hat\otimes \rho)\otimes 1_2 & 0_2 \\ 0_2 & -(R_{CT}\hat\otimes \rho)\otimes 1_2 \end{pmatrix},
    \begin{pmatrix} 0 & 0 & 0 & -R_{CT}\hat\otimes \rho \\ 0 & 0 & R_{CT}\hat\otimes \rho & 0 \\ 0 & R_{CT}\hat\otimes \rho & 0 & 0 \\ -R_{CT}\hat\otimes \rho & 0 & 0 & 0 \end{pmatrix},  \right. \\
  &\qquad \left. \begin{pmatrix} 0 & 0 & R_C\hat\otimes\rho & 0 \\ 0 & 0 & 0 & -R_C\hat\otimes\rho \\ -R_C\hat\otimes\rho & 0 & 0 & 0 \\ 0 & R_C\hat\otimes\rho & 0 & 0  \end{pmatrix},
    \begin{pmatrix} 0 & 0 & 0 & R_C\hat\otimes\rho \\ 0 & 0 & R_C\hat\otimes\rho & 0 \\ 0 & -R_C\hat\otimes\rho & 0 & 0 \\ -R_C\hat\otimes\rho & 0 & 0 & 0 \end{pmatrix} \right\}.
\end{align*}
A careful check shows that the generators 
mutually anti-commute and are odd under the grading
$(\mathrm{Ad}_{R_C}\hat\otimes \gamma_{C\ell_{0,1}})^{\oplus 4}$.
Two generators square to
$-R_{CT}^2\otimes 1_4$, and two generators square to $R_C^2\otimes 1_4$.
Thus the left action gives a representation of a real
Clifford algebra $C\ell_{2,2}$ or $C\ell_{0,4}$ on the projective module
determined by the sign of $R_C^2$ and $R_{T}^2$. As the Dirac-type operator
is $0$, we obtain a real Kasparov module.

Next we consider the full symmetry group $G=\{1,C,T,CT\}$ with twisted representation
such that $R_C^2 = -R_T^2$. We consider a left-action with generating
elements
\begin{align*}
  &\left\{ \begin{pmatrix} 0 & R_{CT}\hat\otimes \rho & 0 & 0 \\ -R_{CT}\hat\otimes \rho & 0 & 0 & 0 \\ 0 & 0 & 0 & -R_{CT}\hat\otimes \rho \\ 0 & 0 & R_{CT}\hat\otimes \rho & 0 \end{pmatrix},
    \begin{pmatrix} 0 & 0 & 0 & -R_{CT}\hat\otimes \rho \\ 0 & 0 & R_{CT}\hat\otimes \rho & 0 \\ 0 & R_{CT}\hat\otimes \rho & 0 & 0 \\ -R_{CT}\hat\otimes \rho & 0 & 0 & 0 \end{pmatrix},  \right. \\
  &\qquad \qquad \qquad \left.
  \begin{pmatrix} 0 & 0 & R_C\hat\otimes\rho & 0 \\ 0 & 0 & 0 & -R_C\hat\otimes\rho \\ -R_C\hat\otimes\rho & 0 & 0 & 0 \\ 0 & R_C\hat\otimes\rho & 0 & 0  \end{pmatrix},
    \begin{pmatrix} 0 & 0 & 0 & R_C\hat\otimes\rho \\ 0 & 0 & R_C\hat\otimes\rho & 0 \\ 0 & -R_C\hat\otimes\rho & 0 & 0 \\ -R_C\hat\otimes\rho & 0 & 0 & 0 \end{pmatrix} \right\}
\end{align*}
and grading $(\mathrm{Ad}_{R_{CT}}\hat\otimes \gamma_{C\ell_{0,1}})^{\oplus 4}$.
The generators give rise to an action of
$C\ell_{3,1}$ or $C\ell_{1,3}$ depending on the sign of $R_C^2$ with $R_C^2=-R_T^2$.

If $G=\{1,S\}$, then we take the following generators,
$$
\left\{ \begin{pmatrix} R_{S}\hat\otimes \rho & 0 \\ 0 & -R_{S}\hat\otimes \rho \end{pmatrix},
 \begin{pmatrix} 0 & R_S\hat\otimes \rho \\ R_S\hat\otimes \rho & 0 \end{pmatrix} \right\},
 \qquad \Gamma = \begin{pmatrix} \mathrm{Ad}_{R_S}\hat\otimes \gamma_{\C\ell_{1}} & 0 \\ 0 & \mathrm{Ad}_{R_S}\hat\otimes \gamma_{\C\ell_{1}} \end{pmatrix},
$$
which gives a $\C\ell_2$-action on $\tilde{P}_\mu(E \hat\otimes \C\ell_1)^{\oplus 2}$.

Finally if $G=\{1,C\}$, our Clifford generators are
\begin{align*}
  &\left\{ \begin{pmatrix} 0 & 0 & R_C\hat\otimes\rho & 0 \\ 0 & 0 & 0 & -R_C\hat\otimes\rho \\ -R_C\hat\otimes\rho & 0 & 0 & 0 \\ 0 & R_C\hat\otimes\rho & 0 & 0  \end{pmatrix},
    \begin{pmatrix} 0 & 0 & 0 & R_C\hat\otimes\rho \\ 0 & 0 & R_C\hat\otimes\rho & 0 \\ 0 & -R_C\hat\otimes\rho & 0 & 0 \\ -R_C\hat\otimes\rho & 0 & 0 & 0 \end{pmatrix},
   \right. \\
   &\qquad \qquad \qquad \left. \begin{pmatrix} (u\hat\otimes \rho)\otimes 1_2 & 0_2 \\ 0_2 & -(u\hat\otimes \rho)\otimes 1_2 \end{pmatrix}   \right\},
     \quad \Gamma = (\mathrm{Ad}_{R_C}\hat\otimes C\ell_{0,1})^{\oplus 4}.
\end{align*}
where $u$ is an even self-adjoint unitary in $A\rtimes G$ that anti-commutes
with $H$ (passing to stabilisation/matrices if necessary). The left-action
generates $C\ell_{2,1}$ or $C\ell_{0,3}$ depending on the sign of $R_C^2$.

Taking the Clifford actions up to stable isomorphism, we obtain a left-action
of $C\ell_{n,1}$ or $\C\ell_{n+1}$ on
$\tilde{P}_\mu(E_A\hat\otimes C\ell_{0,1})_{A\hat\otimes C\ell_{0,1}}^{\oplus N}$
(or the complex $C^*$-module)
with $n$ determined by Table \ref{table:PT_classes_repn_table}.
\end{proof}

If $A$ is trivially graded, we can relate the class of the 
Kasparov modules considered in Propostion \ref{prop:projective_KK_class_odd} 
to $K$-theory by the identification
$$
  KKO(C\ell_{n,0}\hat\otimes C\ell_{0,1}, A\hat\otimes C\ell_{0,1}) \cong KKO(C\ell_{n,0}\hat\otimes C\ell_{1,1}, A) 
   \cong KO_n(A),
$$
where we have used stability of $KKO$ and Proposition \ref{prop:real_Real_k_with_kasparov_equivalence} (similarly 
complex $K$-theory).

We shall denote the class of the projective Kasparov modules 
constructed in Proposition \ref{prop:TR_projective_KK_class} 
and \ref{prop:projective_KK_class_odd}  
in $ KKO(C\ell_{n,0},A)$ (or in the complex case $KK(\C\ell_n,A_\C)$) by $[H^G]$.

For trivially graded algebras, the class $[H^G]$ 
defines a class in either 
$KO_n(A)$ or $K_n(A_\phi^d)$. 
Indeed for $A= M_N(C^*(\Z^d))$, we have that
$$ 
KKO(C\ell_{n,0},M_N(C^*(\Z^d))) 
\cong KO_n(C^*(\Z^d)) 
\cong KO^{-n}(\T^d), 
$$
where we have used stability and Proposition \ref{prop:real_Real_k_with_kasparov_equivalence}.
Hence we recover 
Kitaev's classification for models with discrete translation symmetry~\cite{Kitaev09}, 
though we note that the noncommutative method allows for more 
complicated algebras and spaces to be considered. We can use the Pimsner-Voiculescu 
sequence with trivial action to find that 
$KO_n(C^*(\Z^d))\cong KO_n(C^*(\Z^{d-1}))\oplus KO_{n-1}(C^*(\Z^{d-1}))$ 
and therefore
$$ 
KO_n(C^*(\Z^d)) \cong \bigoplus_{k=0}^d \binom{d}{k} KO_{n-k}(\R), 
$$
see~\cite{PV80, Thiang14}.

\begin{remark}[Anti-linear symmetries and Real $C^*$-algebras]
We have shown how the symmetries coming from the group $\{1,T,C,CT\}$ can be linked to real $C^*$-algebras and
$KKO$-theory. One may ask whether we can also study this question from the perspective of Real
$C^*$-algebras and $KKR$-theory. The construction of the crossed product $A\rtimes G$ where $\alpha_g(a) = R_g a
R_g^*$ will \emph{not} hold in the Real category if $G=\{1,T,C,CT\}$ as this will involve two anti-linear
automorphisms $\alpha_C$ and $\alpha_T$. However, if we consider the subgroups $\{1,T\}$ or $\{1,C\}$ with $R_T$
or $R_C$ defining a Real structure on the (complex) Hilbert space $\calH$, 
then $\alpha_T(a) = R_T aR_T^*$ (or $\alpha_C(a)=R_CaR_C^*$) defines a Real involution $a\mapsto a^\tau$ on the complex algebra 
$A\otimes_\R \C \subset M_N(C^*(\Z^d)\otimes_\R\C)$ with $a^\tau = \alpha_h(a)$ for $h=T$ or $C$.

We expect similar results
to hold in the Real picture provided $G=\{1,T\}$ or $\{1,C\}$. 
In the interest of brevity, we will leave a proper investigation 
of the wider links between insulator systems and $KKR$-theory 
to another place.
\end{remark}

\subsection{Spectral triples and pairings}  \label{subsec:spec_trip_and_pairings}
Our discussion up to this point has centred on the connection of symmetries
with $KKO$-theory, but this is not the end of the story. 
Recall from Section \ref{subsec:IQHE} that one also 
obtains topological information coming from the geometry of the (possibly noncommutative) 
Brillouin zone.

For complex discrete systems without disorder, a Hamiltonian $H$ that satisfies Assumption \ref{def:bulk_Hamiltonian_condition} is contained in $M_N(A_\phi^d)$. We can consider a  dense $\ast$-subalgebra $\calA_\C$ of finite polynomials of (twisted) shift operators and construct the complex spectral triple
\begin{equation} \label{eq:complex_insulator_spec_trip}
   \left( \calA_\C,\, \ell^2(\Z^d)\otimes \C^N\otimes \C^\nu,\, \sum_{j=1}^d X_j\otimes 1_N\otimes \gamma^j,\, \gamma = (-i)^{d/2}\gamma^1\cdots\gamma^d\right), 
\end{equation}
where the matrices $\gamma^j$ have the relation $\gamma^i \gamma^j + \gamma^j\gamma^i = 2\delta_{i,j}$. In~\cite{PLB13,PSB14}, one obtains `higher order Chern numbers' by taking the index pairing of this spectral triple with the Fermi projection or some unitary $u\in\calA$. 

 For $d$ even the index pairing is given by the map
\begin{align*}
     KK(\C, A_\C) \times KK(A_\C,\C) \to KK(\C,\C) \cong \Z. 
\end{align*}
Recall that   $KK(\C, A_\C)$ is isomorphic to the $K$-theory of the algebra $A$
while elements of  $KK(A_\C,\C)$ are represented by spectral triples and the integer 
constructed by the pairing in this instance
is a  Fredholm index.

Specialising now to the case in hand we can form a product:
\begin{align*}
    C_d &= [P_\mu] \hat\otimes_{A_\phi^d} \left[\left( \calA_\C, \ell^2(\Z^d)\otimes \C^N\otimes \C^\nu, X=\sum_{j=1}^d X_j\otimes 1_N\otimes \gamma^j, \gamma \right) \right] \\
       &= \Index(P_\mu X_+ P_\mu),
\end{align*}
where $X = \begin{pmatrix} 0 & X_- \\ X_+ & 0 \end{pmatrix}$ is decomposed by the grading $\gamma$. The case of $d$ odd has an analogous formula except that in this instance we are taking a product of $KK(\C\ell_1,A_\C)$ with $KK(A_\C,\C\ell_1)$. Our goal below is to refine this complex index pairing so that it applies to the real picture. This is of course essential when one considers time-reversal and charge-conjugation symmetry.

\subsubsection{The bulk spectral triple}  \label{sec:KKO_fundamental_class}
Spectral triples over real algebras require representations 
on real Hilbert spaces. Hence, we take $A\subset M_N(C^*(\Z^d))$ 
acting on the bulk Hilbert space $\calH_b$, which can be 
$\ell^2(\Z^d) \otimes \R^N$ or $\ell^2(\Z^d)\otimes \C^{M}$, 
where $\C\cong \R\oplus i\R$ is considered as a real space. 

If a Hamiltonian $H$ satisfies Assumption \ref{def:bulk_Hamiltonian_condition}, 
then we take $\calA$ to be the $\ast$-algebra of finite 
polynomials of matrices of shift operators (or infinite polynomials 
with Schwartz-class coefficients) over $\R$. Such an algebra 
$\calA$ is dense in $A$. We require 
a dense subalgebra in order to deal with the 
commutator condition in spectral triples and also 
for unbounded Kasparov modules over $C^*(\Z^d)$. 
Using constructions similar  to~\cite{KasparovNovikov, LRV12} 
for the Hodge-de Rham spectral triple, we have the following result.
\begin{prop} 
\label{prop:real_bulk_spec_trip}
If a Hamiltonian $H$ satisfies 
Assumption \ref{def:bulk_Hamiltonian_condition} with 
$\calA \subset M_N(C^*(\Z^d))$, then
$$ 
\lambda =\left(\calA \hat\otimes C\ell_{0,d},\, \calH_b \otimes \bigwedge\nolimits^{\!*} \R^d,\, 
\calD =\sum_{j=1}^d X_j \otimes  \gamma^j,\, \gamma_{\bigwedge^* \R^d} \right) 
$$
is a real spectral triple, where $X_j$ is the position 
operator on $\ell^2(\Z^d)$ and acts diagonally on 
$\calH_b$. The left-action of $C\ell_{0,d}$ is 
generated by the operators $\{\rho^j\}_{j=1}^d$ 
and the operators $\{\gamma^j\}_{j=1}^d$ generate 
$C\ell_{d,0}$. The Clifford algebras $C\ell_{0,d}$ and 
$C\ell_{d,0}$ are represented as left and right actions on 
$\bigwedge^* \R^d$ respectively by the formulae
\begin{align}  \label{eq:left_right_real_clifford_actions_on_exterior_algebra}
   &\rho^j(\omega) = e_j\wedge \omega - \iota(e_j)\omega,   
   &&\gamma^j(\omega) = e_j \wedge \omega + \iota(e_j)\omega,
\end{align}
with $\omega\in\bigwedge^*\R^d$, $\{e_j\}_{j=1}^d$ 
the standard basis of $\R^d$ and $\iota(v)\omega$ 
the contraction of $\omega$ along $v$. The grading 
$\gamma_{\bigwedge^* \R^d}$ is given in terms of the isomorphism 
$C\ell_{0,d}\hat\otimes C\ell_{d,0} \cong \End_\R(\bigwedge^* \R^d)$, 
where 
$\gamma_{\bigwedge^* \R^d} = (-1)^d \rho^1\cdots\rho^d\hat\otimes \gamma^d\cdots\gamma^1$.
\end{prop}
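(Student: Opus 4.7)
The plan is to verify the four defining properties of a real spectral triple in turn: self-adjointness of $\calD$, bounded (graded) commutators with elements of $\calA\hat\otimes C\ell_{0,d}$, compactness of $a(1+\calD^2)^{-1/2}$, and compatibility with the grading and real structure.

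First I would unpack the two Clifford actions. A direct computation from \eqref{eq:left_right_real_clifford_actions_on_exterior_algebra} shows that $\{\rho^i,\rho^j\} = -2\delta_{ij}$ and $\{\gamma^i,\gamma^j\} = +2\delta_{ij}$, so the $\rho^j$ generate $C\ell_{0,d}$ acting on the left of $\bigwedge^*\R^d$ and the $\gamma^j$ generate $C\ell_{d,0}$ acting on the right; moreover $[\rho^i,\gamma^j]=0$ for all $i,j$, since left and right multiplications in a Clifford algebra on itself always commute. Combined with the fact that $\dim_\R \bigwedge^* \R^d = 2^d = \dim C\ell_{0,d}\hat\otimes C\ell_{d,0}$, this yields the isomorphism $C\ell_{0,d}\hat\otimes C\ell_{d,0}\cong \End_\R(\bigwedge^*\R^d)$ and identifies the grading operator as claimed.

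Next I would show self-adjointness and compute $\calD^2$. The $X_j$ are essentially self-adjoint on the finitely supported functions in $\ell^2(\Z^d)$, and the $\gamma^j$ are self-adjoint real operators on $\bigwedge^*\R^d$, so $\calD = \sum_j X_j\otimes\gamma^j$ is essentially self-adjoint on $C_c(\Z^d)\otimes\bigwedge^*\R^d$. Using $[X_i,X_j]=0$ and $\{\gamma^i,\gamma^j\}=2\delta_{ij}$ one finds
\begin{equation*}
  \calD^2 \;=\; \sum_{i,j} X_iX_j\otimes \tfrac12\{\gamma^i,\gamma^j\} \;=\; |X|^2\otimes 1, \qquad |X|^2 = \sum_{j=1}^d X_j^2.
\end{equation*}
Since $|X|^2$ has pure point spectrum on $\ell^2(\Z^d)$ with eigenvalues $|n|^2$ of finite multiplicity tending to $\infty$, the operator $(1+|X|^2)^{-1/2}$ is compact, and hence so is $(1+\calD^2)^{-1/2}=(1+|X|^2)^{-1/2}\otimes 1$.

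For bounded commutators, let $a\otimes c\in \calA\hat\otimes C\ell_{0,d}$. Since $\calA$ is trivially graded and $[\gamma^j,c]=0$ by the commutativity of the two Clifford actions, a short calculation in the graded tensor product gives
\begin{equation*}
  [\calD,\,a\otimes c] \;=\; \sum_{j=1}^d [X_j,a]\otimes \gamma^j c.
\end{equation*}
For $a$ a finite polynomial in the shift operators $S_1,\ldots,S_d$ one has $[X_j,S_k] = \delta_{jk}S_k$, hence $[X_j,a]$ is again a finite polynomial in shifts, in particular bounded; by density (for Schwartz-class coefficients) this extends to all of $\calA$. This yields compactness of $(a\otimes c)(1+\calD^2)^{-1/2}$ by composing a bounded operator with a compact one.

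Finally I would check the real and graded structure. The algebra $\calA\subset M_N(C^*(\Z^d))$ is real, $C\ell_{0,d}$ is by definition a real Clifford algebra, and the representation \eqref{eq:left_right_real_clifford_actions_on_exterior_algebra} is by real matrices on $\bigwedge^*\R^d$; the $X_j$ also act by real multiplication. Hence $\calH_b\otimes\bigwedge^*\R^d$ is a real Hilbert space, all ingredients are real, and $\calD$ is a real self-adjoint operator. Oddness of $\calD$ with respect to $\gamma_{\bigwedge^*\R^d}$ reduces to the fact that each $\gamma^j$ anticommutes with this grading (equivalently, reverses parity of forms), while oddness of the $\rho^k$ gives a genuinely graded left action of $C\ell_{0,d}$.

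The main obstacle is not any one analytic estimate but the bookkeeping of signs in the graded tensor product, together with the clean identification of $\End_\R(\bigwedge^*\R^d)$ with $C\ell_{0,d}\hat\otimes C\ell_{d,0}$; once these are in place, every spectral triple axiom follows by direct computation because $\calD^2$ collapses to the scalar operator $|X|^2\otimes 1$.
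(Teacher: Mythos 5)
Your proof is correct in outline and follows the same route as the paper: reduce to verifying that $[X_j,a]$ is bounded on polynomials in shifts and that $(1+|X|^2)^{-1/2}$ is compact via the pure-point spectrum of $|X|^2$ on $\ell^2(\Z^d)$. You supply a little more detail on self-adjointness and the real structure than the paper, which is fine.

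However, there is a concrete error in the algebraic part. You assert that $[\rho^i,\gamma^j]=0$ on the grounds that ``left and right multiplications in a Clifford algebra always commute,'' and then use $[\gamma^j,c]=0$ to derive the commutator formula. In fact $\rho^i$ and $\gamma^j$ \emph{anticommute}; they graded-commute, which for two odd operators means anticommutation. A direct check in $d=1$: with $\bigwedge^*\R = \mathrm{span}\{1,e_1\}$, one has $\rho^1 = \left(\begin{smallmatrix}0&-1\\1&0\end{smallmatrix}\right)$ and $\gamma^1 = \left(\begin{smallmatrix}0&1\\1&0\end{smallmatrix}\right)$, so $\rho^1\gamma^1 = -\gamma^1\rho^1$. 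The paper states exactly this in the remark following the proposition: ``One can check that $\rho^j$ and $\gamma^k$ anti-commute (i.e.\ they graded-commute).'' Fortunately, your target identity
\begin{equation*}
[\calD,\,a\hat\otimes c]_\pm = \sum_{j=1}^d [X_j,a]\hat\otimes\gamma^j c
\end{equation*}
is still correct: when $c$ has parity $|c|$, the sign $(-1)^{|c|}$ from the graded tensor product and the sign from $c\gamma^j=(-1)^{|c|}\gamma^j c$ cancel. If instead one insisted on ordinary commutation $[\gamma^j,c]=0$ together with graded tensor product signs, the computation for odd $c$ would produce an anticommutator $\{X_j,a\}$ rather than $[X_j,a]$, and would not close. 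So the formula you quote is right, but the lemma you cite to justify it is the opposite of what is true; replacing commutativity with graded-commutativity and redoing the short sign computation fixes the gap. Note also that the same slip affects your justification of the isomorphism $C\ell_{0,d}\hat\otimes C\ell_{d,0}\cong\End_\R(\bigwedge^*\R^d)$: it is the \emph{graded} tensor product precisely because the two sets of generators anticommute rather than commute.
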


One can check that $\rho^j$ and $\gamma^k$ 
anti-commute (i.e. they graded-commute). 
We note that, despite a right-action by 
$C\ell_{d,0}$ on $\bigwedge^*\R^d$, we do \emph{not} 
get an $A\hat\otimes C\ell_{0,d}$-$C\ell_{d,0}$ 
Kasparov module as the graded-commutator of 
$1\otimes\gamma^k$ with $\sum_j X_j\otimes \gamma^j$ 
is not bounded (see~\cite[Section 4.3]{LRV12} for a 
more detailed discussion on these Clifford actions and the link to Kasparov's fundamental class).  

\begin{proof}[Proof of Proposition \ref{prop:real_bulk_spec_trip}]
The generators $\rho^j$ of the left action of $C\ell_{0,d}$ 
graded-commute with $\sum_j X_j\otimes\gamma^j$ 
so we just need to check that $[X_j\otimes 1_N,a]$ is 
bounded for all $j$ and $a(1+\calD^2)^{-1/2}$ is compact for 
$a\in\calA$. We let 
${S}^\alpha = {S}_1^{\alpha_1}\cdots {S}_d^{\alpha_d}$ 
for $S_j$ a shift operator and $\alpha = (\alpha_1,\ldots,\alpha_d)\in\Z^d$. Then, for 
$\psi\in\Dom(X_j)\subset \ell^2(\Z^d)$,
\begin{align*}
  [X_j,{S}^\alpha]\psi(x) 
  &= x_j  \psi(x-\alpha) -  (x_j-\alpha_j)\psi(x-\alpha) \\
    &= \alpha_j ({S}^\alpha\psi)(x),
\end{align*}
Therefore $[X_j,a]$ extends to a bounded operator for 
$a$ any finite polynomial of ${S}^\alpha$ on $\ell^2(\Z^d)$. 
Because matrices of such elements generate $\calA$, 
$[X_j\otimes 1_N,a]$ is bounded for any $a\in\calA$.

Next we note that 
$(1+\calD^2)^{-1/2} = (1+|X|^2)^{-1/2}\otimes 1_N \otimes 1_{\bigwedge^*\R^d}$ 
as an operator on  on 
$\calH_b = \ell^2(\Z^d)\otimes \mathbb{F}^N\otimes \bigwedge^*\R^d$ 
for $\mathbb{F}=\R$ or $\C$. On $\ell^2(\Z^d)$,
$$ 
(1+|X|^2)^{-1/2} = \bigoplus_{k\in\Z^d} (1+|k|^2)^{-1/2}P_k, 
$$
where $P_k$ is the projection onto the span of $e_{(k_1,\dots,k_d)}$ 
with $\{e_k\}_{k\in\Z^d}$ the standard basis of $\ell^2(\Z^d)$. Hence 
$(1+|X|^2)^{-1/2}$ is a norm-convergent sum of finite-rank operators 
and so is compact. From this we conclude that $(1+D^2)^{-1/2}$ is 
compact on $\calH_b\otimes\bigwedge^*\R^d$.
\end{proof}

\begin{remark}
The spectral triple from Proposition \ref{prop:real_bulk_spec_trip} 
uses the oriented structure on the (noncommutative) $d$-torus to 
construct the Clifford actions and Dirac-type operator. 
We note that a spectral triple can also be built using the 
spin structure (or spin$^c$ structure) on the torus. The spin spectral triple is the 
same as the spectral triple from Proposition \ref{prop:real_bulk_spec_trip} 
up to a Morita equivalence bimodule. We use the oriented structure to obtain 
explicit representations of the Clifford generators and actions. See \cite{KasparovNovikov,LRV12},
where the relationship of the spectral triple $\lambda$ to the fundamental class (in the sense of
Poincar\'e duality in $KK$-theory) of the
Brillouin zone is made clear. This choice of spectral triple ensures, 
at least in the absence of disorder, that all Clifford module valued index pairings can be 
faithfully detected.
\end{remark}

For the case of complex algebras, the spectral triple of interest is given in Equation \eqref{eq:complex_insulator_spec_trip}.
We think of the real spectral triple of Proposition \ref{prop:real_bulk_spec_trip} as encoding geometric information of the (possibly noncommutative) Brillouin torus, including dimension. The Kasparov module represented by $[H^G]$ on the other hand captures information about the internal symmetries of the Hamiltonian. By taking the pairing/product of the 
class $[H^G]$ with the spectral triple, we obtain measurable quantities which
reflect the topological properties of the system.

\begin{remark}[Pairings and the periodic table]
The construction above of an unbounded Kasparov module gives a class $[\lambda]\in KKO(A\hat\otimes C\ell_{0,d},\R)$~\cite{BJ83}. We would like to consider an analogous notion in real Kasparov theory of the Chern numbers. However, because we are dealing with representatives of $KKO$-classes, we need to generalise the complex pairing to the internal product of $[\lambda]$ with the class $[H^G]$ from Proposition \ref{prop:TR_projective_KK_class} 
and \ref{prop:projective_KK_class_odd}  that represents the 
symmetries of the Hamiltonian. That is we take the
Kasparov product, a well-defined map
\begin{align*}
  &KKO(C\ell_{n,0},A) \times KKO(A\hat\otimes C\ell_{0,d},\R) \to KKO(C\ell_{n,0}\hat\otimes C\ell_{0,d},\R) 
\end{align*}
and this leads to a Clifford module valued index
\begin{align*}
  &C_{n,d} = [H^G] \hat\otimes_A [\lambda].
\end{align*}
We have to represent the index pairing as a Kasparov product rather than a pairing of a projection with a cyclic cocyle as the latter involves a map to periodic cyclic cohomology, which is unable to detect torsion invariants. We note that the class $[H^G]\hat\otimes_A [\lambda]$ takes values in $KKO(C\ell_{n,0}\hat\otimes C\ell_{0,d},\R) \cong KO_{n-d}(\R)$~\cite[\S{6}]{Kasparov80}. Therefore, by considering the various symmetry subgroups of $\{1,T,C,CT\}$ that give rise to graded Clifford representations of $C\ell_{n,0}$ for different $n$ outlined in Table \ref{table:PT_classes_repn_table}, we are able to derive the celebrated periodic table of strong topological phases, which is given in Table \ref{table:Periodic_table_up_to_3d}.
\end{remark}

We summarise the discussion  in this subsection in the following result.
\begin{prop}
The periodic table of (strong) topological phases can be realised as the index pairing (Kasparov product) of the real/complex Kasparov module $[H^G]$ from 
Proposition \ref{prop:TR_projective_KK_class} 
and \ref{prop:projective_KK_class_odd} with the bulk spectral triple of Proposition \ref{prop:real_bulk_spec_trip} or Equation \eqref{eq:complex_insulator_spec_trip}.
\end{prop}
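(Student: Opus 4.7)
The plan is to verify, case by case over the symmetry subgroups of $\{1,T,C,CT\}$ and over spatial dimensions $d$, that the Kasparov product $[H^G]\hat\otimes_A[\lambda]$ reproduces the correct entry of the periodic table. First I would fix the bookkeeping. By Propositions \ref{prop:TR_projective_KK_class} and \ref{prop:projective_KK_class_odd} the class $[H^G]$ lies in $KKO(C\ell_{n,0},A)$ or $KKO(C\ell_{n,1},A\hat\otimes C\ell_{0,1})$, with the integer $n$ read off Table \ref{table:PT_classes_repn_table}; the spectral triple of Proposition \ref{prop:real_bulk_spec_trip} gives a class $[\lambda]\in KKO(A\hat\otimes C\ell_{0,d},\R)$. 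Using associativity of the Kasparov product together with the stable identification $C\ell_{0,1}\hat\otimes C\ell_{1,0}\cong M_2(\R)$ to collapse the extra $C\ell_{0,1}$-factor that appears in the $C,S$ cases, both cases land in
\begin{equation*}
  KKO(C\ell_{n,0}\hat\otimes C\ell_{0,d},\R)\;\cong\;KO_{n-d}(\R),
\end{equation*}
and analogously $K_{n-d}(\C)$ for $G\subseteq\{1,S\}$. This is precisely the group of the periodic table entry in spatial dimension $d$ with symmetry index $n$.

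Next I would make the product explicit. As both representatives are unbounded I would appeal to Kucerovsky's theorem \cite{Kucerovsky97}. One forms the interior tensor product of modules $P_\mu E_A^{\oplus N}\hat\otimes_A(\calH_b\otimes\bigwedge\nolimits^{\!*}\R^d)$ (respectively $\tilde P_\mu(E\hat\otimes C\ell_{0,1})^{\oplus N}\hat\otimes_{A\hat\otimes C\ell_{0,1}}\bigl(\cdot\bigr)$) and lifts $\calD=\sum_j X_j\otimes\gamma^j$ to an operator $1\hat\otimes_\nabla\calD$ via a Grassmann connection on the finitely generated projective module $E_A^{\oplus N}$. Because the symmetry operators $R_g$ arise from finite shifts, complex conjugation, or spatial involution (Assumption \ref{def:bulk_Hamiltonian_condition}), their graded commutators with the position operators $X_j$ are bounded, so the connection condition and Kucerovsky's positivity condition on the domain of $1\hat\otimes_\nabla\calD+[H^G]$ reduce to routine estimates already present in \cite{BCR14}. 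The resulting unbounded Kasparov module has the form
\begin{equation*}
\Bigl(C\ell_{n,0}\hat\otimes C\ell_{0,d},\,P_\mu E_A^{\oplus N}\hat\otimes_A\bigl(\calH_b\otimes\textstyle\bigwedge\nolimits^{\!*}\R^d\bigr),\,1\hat\otimes_\nabla\calD,\,\Gamma\hat\otimes\gamma_{\bigwedge^*\R^d}\Bigr),
\end{equation*}
which represents the desired product class.

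The main obstacle is identifying, for each of the ten symmetry types, the resulting class in $KO_{n-d}(\R)$ (or $K_{n-d}(\C)$) with the expected Clifford-module index via the Atiyah--Bott--Shapiro isomorphism $KO_{n-d}(\R)\cong\calM_{n-d}/i^{*}\calM_{n-d+1}$ \cite{ABS64}. Here the grading $\Gamma=\mathrm{sgn}(H)$ or $\mathrm{Ad}_{R_C}\hat\otimes\gamma_{C\ell_{0,1}}$ interacts nontrivially with the spatial Clifford generators $\gamma^j$, and one must check that the induced $\Z_2$-graded Clifford action on $\Ker(1\hat\otimes_\nabla\calD)$ (when the pairing is Fredholm) produces the correct residue class. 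In the cases with $G=\{1,C,T,CT\}$ and $R_C^2=-R_T^2$, extra care is required because the Clifford algebra appearing in Proposition \ref{prop:projective_KK_class_odd} is $C\ell_{3,1}$ or $C\ell_{1,3}$ rather than $C\ell_{n,0}$; the correct $n$ in the target must be read after applying Morita/stability equivalences $C\ell_{r+1,s+1}\cong C\ell_{r,s}\hat\otimes M_2(\R)$.

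To confirm that the output is genuinely the periodic table and not a shift of it, I would anchor the identification on two benchmarks already in the literature: the $d=2$, $G=\{1\}$ complex case recovers $K_{-2}(\C)\cong\Z$ via the integer quantum Hall pairing \eqref{eq:Bellissard_Hall_conductance} of \cite{Bellissard94,BCR14}, and the $d=2$, $G=\{1,T\}$ case with $R_T^2=-1$ recovers the Kane--Mele invariant in $KO_{-2}(\R)\cong\Z_2$, matching \cite{SchulzBaldes13b,KK15,DNSB14b}. Bott periodicity $KO_{m}(\R)\cong KO_{m+8}(\R)$ and $K_m(\C)\cong K_{m+2}(\C)$ then propagate these anchors through the remaining cells, and the tabulation in Table \ref{table:Periodic_table_up_to_3d} follows.
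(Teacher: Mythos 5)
Your proposal is correct and follows essentially the same route as the paper: the paper's "proof" of this summary proposition is precisely the degree-counting argument you give first — $[H^G]\in KKO(C\ell_{n,0},A)$ (or $KKO(C\ell_{n,1},A\hat\otimes C\ell_{0,1})$) pairs with $[\lambda]\in KKO(A\hat\otimes C\ell_{0,d},\R)$ to land in $KKO(C\ell_{n,0}\hat\otimes C\ell_{0,d},\R)\cong KO_{n-d}(\R)$, and varying the symmetry subgroup (hence $n$, via Table \ref{table:PT_classes_repn_table}) and the spatial dimension $d$ fills out Table \ref{table:Periodic_table_up_to_3d}. The additional material you supply — the explicit unbounded representative via a connection and Kucerovsky's criterion, the Atiyah–Bott–Shapiro identification of the Clifford-module-valued index, and the stability manipulations to remove the extra $C\ell_{0,1}$ factor in the $C,S$ cases — is exactly what the paper does, but deferred to Lemma \ref{lemma:real_index_pairing_product} and Section \ref{subsec:KKO_Clifford_index} rather than bundled into the proof of this proposition; the benchmarking against the quantum Hall and Kane–Mele pairings appears in the paper's Section \ref{sec:Applications}. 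So you have effectively provided the proposition's proof together with the material the paper uses to justify its index-theoretic interpretation, which is a reasonable reading of what the proposition claims.
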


\begin{remark}[The even-integer index]
We note that $KO_4(\R)\cong \Z$ whereas the 
pairings $[H^G]\hat\otimes_A [\lambda]$ that take 
value in $KO_4(\R)$ can be associated to an even integer. 
The relationship is 
that we associate $[H^G]\hat\otimes_A [\lambda]$ to a Clifford module, 
which in this case can be expressed as a quaternionic Fredholm index. 
Since both the kernel and cokernel of the relevant operator is a quaternionic
space, the complex dimension of these spaces 
take even values, and so we can express the index as an even integer. 
See Section \ref{subsec:KKO_Clifford_index} 
and~\cite[p23]{AS69}. See also~\cite[Section 6.2]{DNSB14b} for some of 
the physical implications of an even-valued index.
\end{remark}

\begin{table} 
    \centering
    \begin{tabular}{ p{1.7cm} | c c | p{2.5cm} | c c c c }
       \multirow{2}{1.7cm}{Symmetry generators} & \multirow{2}{*}{$R_C^2$} & \multirow{2}{*}{$R_T^2$} &     \multirow{2}{2.5cm}{Graded Representation} & \multicolumn{4}{c}{$[H^G]\hat\otimes [\lambda]\in KO_{n-d}(\R)$ or $K_{n-d}(\C)$} \\ 
         &  &  &  & $d=0$ & $d=1$ & $d=2$ & $d=3$ \\ \hline
       $T$ &  & $+1$ & $C\ell_{0,0}$ & $\Z$ & 0 & 0 & 0 \\
       $C,T$ & $+1$ & $+1$ & $C\ell_{1,0}$ & $\Z_2$ & $\Z$ & 0 & 0 \\
       $C$ & $+1$ &  & $C\ell_{2,0}$ & $\Z_2$ & $\Z_2$ & $\Z$ & 0 \\
       $C,T$ & $+1$ & $-1$ & $C\ell_{3,0}$ & 0 & $\Z_2$ & $\Z_2$ & $\Z$ \\
       $T$ &  & $-1$ & $C\ell_{4,0}$ & $(2)\Z$ & 0 & $\Z_2$ & $\Z_2$ \\
       $C,T$ & $-1$ & $-1$ & $C\ell_{5,0}$ & 0 & $(2)\Z$ & 0 & $\Z_2$ \\
       $C$ & $-1$ &  & $C\ell_{6,0}$ & 0 & 0 & $(2)\Z$ & 0 \\
       $C,T$ & $-1$ & $+1$ & $C\ell_{7,0}$ & 0 & 0 & 0 & $(2)\Z$ \\ \hline \hline 
       N/A &  &  &  $\C\ell_{0}$ & $\Z$ & 0 & $\Z$ & 0 \\
       $S$ & \multicolumn{2}{c|}{$R_S^2=1$}  & $\C\ell_1$ & 0 & $\Z$ & 0 & $\Z$
    \end{tabular}
    \caption{Symmetry types, their corresponding graded Clifford representation and the pairing of the symmetry class with the $d$-dimensional spectral triple (shown for $d\leq 3$). \label{table:Periodic_table_up_to_3d}}
\end{table}

\subsection{The Kasparov product and the Clifford index} \label{subsec:KKO_Clifford_index}
So far we have identified the invariants of interest in topological insulator systems as a Kasparov product, $[H^G]\hat\otimes_A [\lambda]$, of Kasparov modules capturing internal symmetries and geometric information. In the case of complex algebras and modules, this abstract pairing can be concretely represented as a Fredholm index and takes the form $\Index(P X_+ P)$ or $\Index(PuP)$ (for $u$ a unitary) depending on whether $d$ is even or odd. A replacement for this numerical index in the real case
requires the viewpoint of \cite{ABS64,SpinGeometry} in order to express the Kasparov product $[H^G]\hat\otimes_A [\lambda]$  more concretely. The Clifford module interpretation of the index due to Atiyah-Bott-Shapiro  in $KO$-theory is, in fact, essential here.

In order to draw this link, we first must compute the (unbounded) product 
$[H^G]\hat\otimes_A [\lambda]$. The computation will change slightly depending on 
whether $G\subset \{1,T\}$ or if $G$ contains odd symmetries.

\begin{lemma} \label{lemma:real_index_pairing_product}
If $G$ contains the odd symmetries $C$ or $S$, then the real Kasparov product $[H^G]\hat\otimes_A [\lambda]$ 
can be represented by the unbounded Kasparov module
$$ 
\left(C\ell_{n,1}\hat\otimes C\ell_{0,d},\, \tilde{P}_\mu (E^{\oplus N} \otimes_A \calH_b \hat\otimes C\ell_{0,1})_{C\ell_{0,1}} \hat\otimes \bigwedge\nolimits^{\!*} \R^d,\, \sum_{j=1}^d \tilde{P}_\mu (1\otimes_\nabla X_j\hat\otimes 1)\tilde{P}_\mu \hat\otimes\gamma^j \right)
 $$
with grading $(\mathrm{Ad}_{R}\hat\otimes 1\hat\otimes \gamma_{C\ell_{0,1}})\hat\otimes \gamma_{\bigwedge^* \R^d}$.
If $G\subset\{1,T\}$, then $[H^G]\hat\otimes_A [\lambda]$ is represented by the spectral triple
$$
\left(C\ell_{n,0}\hat\otimes C\ell_{0,d},\, P_\mu (E^{\oplus N} \otimes_A \calH_b ) \hat\otimes \bigwedge\nolimits^{\!*} \R^d,\, \sum_{j=1}^d P_\mu (1\otimes_\nabla X_j)P_\mu \hat\otimes\gamma^j,\, (\Gamma\hat\otimes 1)\hat\otimes \gamma_{\bigwedge^* \R^d} \right).
$$
\end{lemma}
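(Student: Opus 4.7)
The plan is to construct a representative of the internal Kasparov product in unbounded form and then verify that it satisfies Kucerovsky's criteria~\cite{Kucerovsky97}. Since the Dirac-type operator in the class $[H^G]$ is identically zero (by Corollary \ref{cor:PUA_KK_class} and Propositions \ref{prop:TR_projective_KK_class}--\ref{prop:projective_KK_class_odd}), the construction of the product reduces essentially to the lift $1\otimes_\nabla \calD$, where $\calD = \sum_j X_j\otimes \gamma^j$ is the Dirac-type operator of $[\lambda]$ from Proposition \ref{prop:real_bulk_spec_trip} and $\nabla$ is a suitable Hermitian connection on the module $E_A^{\oplus N}$ (or on its graded stabilisation by $C\ell_{0,1}$ in the odd case).

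First, I would equip $E_A^{\oplus N}$ with a Grassmann-type connection built from the position operators $X_j$. Because $A$ is generated by (twisted) shift operators and the computation in the proof of Proposition \ref{prop:real_bulk_spec_trip} shows $[X_j, S^\alpha]=\alpha_j S^\alpha$ is bounded, the commutators $[\nabla, R_g]$ extend to bounded operators for every $g\in G$. This in turn ensures that the Clifford generators of the left action from Propositions \ref{prop:Clifford_rep_on_module} and \ref{prop:projective_KK_class_odd} lift to have bounded commutators with $1\otimes_\nabla \calD$, as required for the connection condition of Kucerovsky.

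Next I form the interior tensor product. For $G\subset\{1,T\}$ the Fermi projection $P_\mu$ commutes with $R_T$, so no graded stabilisation is needed and the product module is $P_\mu(E_A^{\oplus N}\otimes_A \calH_b)\hat\otimes \bigwedge^*\R^d$ with Dirac-type operator $\sum_j P_\mu(1\otimes_\nabla X_j)P_\mu\hat\otimes \gamma^j$ and even grading $(\Gamma\hat\otimes 1)\hat\otimes \gamma_{\bigwedge^* \R^d}$, matching the second formula of the lemma. When $G$ contains an odd symmetry, $P_\mu$ fails to be even for $\mathrm{Ad}_{R_C}$ alone; instead, we use the graded doubling $\tilde{H} = H\hat\otimes \rho$ on $E_A\hat\otimes C\ell_{0,1}$ introduced before Proposition \ref{prop:projective_KK_class_odd}, which gives the even projection $\tilde{P}_\mu$. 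The product Dirac is then $\sum_j \tilde{P}_\mu(1\otimes_\nabla X_j\hat\otimes 1)\tilde{P}_\mu\hat\otimes \gamma^j$ on the corresponding projective submodule, reproducing the first formula.

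The verification of Kucerovsky's conditions is now routine: the connection condition follows from boundedness of $[X_j,\cdot]$ on $\calA$; the domain condition holds since elements of the dense $\ast$-subalgebra $\calA\rtimes G$ preserve the Sobolev-type domain of $1\otimes_\nabla \calD$; and the positivity condition is vacuous because the Dirac operator on the $[H^G]$ side is zero. The main obstacle is the bookkeeping of graded Clifford actions: one must check that the left action of $C\ell_{n,0}$ (resp.\ $C\ell_{n,1}$) from the symmetry side and the left action of $C\ell_{0,d}$ from $\lambda$ graded-commute on the tensor product, so that the combined left action is genuinely by $C\ell_{n,0}\hat\otimes C\ell_{0,d}$ (resp.\ $C\ell_{n,1}\hat\otimes C\ell_{0,d}$). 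This follows from the observation that $C\ell_{0,d}$ acts only on the $\bigwedge^*\R^d$ factor, and hence commutes in the ungraded sense with everything on the $E_A^{\oplus N}\otimes_A\calH_b$ (or its $C\ell_{0,1}$-stabilised) factor where the symmetry Clifford generators and $\tilde{P}_\mu$ live; signs from the odd grading of $\rho^j$ versus the symmetry generators then give exactly the graded commutativity required to assemble a single $C\ell_{n,\bullet}\hat\otimes C\ell_{0,d}$-action.
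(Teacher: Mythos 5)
Your strategy matches the paper's: use Kucerovsky's criterion, note the symmetry-side Dirac is identically zero so positivity is vacuous, construct a connection on the crossed-product module to form $1\otimes_\nabla X_j$, and pass to the $C\ell_{0,1}$-doubled module $\tilde P_\mu(E\hat\otimes C\ell_{0,1})$ when $G$ contains an odd symmetry. Two details are glossed over, however. First, the paper constructs the connection explicitly: writing module elements as $\sum_{g\in G} R_g a_g$, it sets $\nabla(\sum_g R_g a_g) = \sum_g R_g\otimes \delta(a_g)$ with $\delta$ the universal derivation, and then represents $1$-forms on $\calH_b$ via $a_0\delta(a_1)\mapsto a_0[X_j,a_1]$, yielding the clean identity $(1\otimes_\nabla X_j)(\sum_g R_g a_g\otimes\lambda)=\sum_g R_g\otimes X_j a_g\lambda$. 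Your claim that ``$[\nabla,R_g]$ extend to bounded operators'' misstates the point: the chosen connection differentiates only the $A$-coefficients $a_g$ and is blind to the fixed generators $R_g$, so left multiplication by the Clifford generators built from $R_g$ graded-commutes with the product Dirac \emph{exactly}, not merely up to bounded corrections; it is the commutators $[X_j,a]$ for $a\in\calA$ that supply the boundedness needed in Kucerovsky's connection condition. Second, before one can take the interior Kasparov product at all, the paper takes external products with the identity Kasparov classes in $KKO(C\ell_{0,d},C\ell_{0,d})$ and $KKO(C\ell_{0,1},C\ell_{0,1})$ to align the intermediate algebras $A\hat\otimes C\ell_{0,1}$ and $A\hat\otimes C\ell_{0,d}$, and then simplifies the resulting module using the bijectivity of the $C\ell_{0,d}$-action on $\bigwedge^*\R^d$. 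Your final paragraph on graded commutativity of the two Clifford actions implicitly touches on this, but the inflation step is what makes the assembly of a single $C\ell_{n,1}\hat\otimes C\ell_{0,d}$-action (resp.\ $C\ell_{n,0}\hat\otimes C\ell_{0,d}$) legitimate. With these points made explicit the argument is essentially the paper's.
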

\begin{proof}
We will focus on the case of $C$ or $S\in G$ as the case of $G\subset\{1,T\}$ follows by an 
entirely analogous argument but without the extra $C\ell_{0,1}$ actions and modules.

In order to take the product 
$$ KKO(C\ell_{n,1},A\hat\otimes C\ell_{0,1})\times KKO(A\hat\otimes C\ell_{0,d},\R)\to KKO(C\ell_{n,1}\hat\otimes C\ell_{0,d},C\ell_{0,1}), $$  
we first need to take an external product with an identity class in $KKO(C\ell_{0,d},C\ell_{0,d})$ on 
the left and an identity class in $KKO(C\ell_{0,1},C\ell_{0,1})$ on the right. 
The class in $KKO(C\ell_{0,d},C\ell_{0,d})$ can be represented by the Kasparov module
$$ \left( C\ell_{0,d}, \left(C\ell_{0,d}\right)_{C\ell_{0,d}},0,\gamma_{C\ell_{0,d}}\right) $$
with right and left actions given by right and left Clifford multiplication. The 
class of the identity in $KKO(C\ell_{0,1},C\ell_{0,1})$ is represented 
analogously. At the level of $C^*$-modules, the product module is given by
\begin{align*}
  &\left(\tilde{P}_\mu(E^{\oplus N}_A \hat\otimes {C\ell_{0,1}}_{C\ell_{0,1}}) \hat\otimes_\R C\ell_{0,d}\right) \hat\otimes_{A\hat\otimes C\ell_{0,d+1}} \left(\calH_b \hat\otimes {C\ell_{0,1}}_{C\ell_{0,1}}\hat\otimes \bigwedge\nolimits^{\!*} \R^d\right)   \\
   &\hspace{2cm} \cong   \tilde{P}_\mu\left(E^{\oplus N}\otimes_A \calH_b \hat\otimes {C\ell_{0,1}}_{C\ell_{0,1}}\right) \hat\otimes_\R \left(C\ell_{d,0}\cdot \bigwedge\nolimits^{\!*} \R^d\right) \\
    &\hspace{2cm} \cong \tilde{P}_\mu\left(E^{\oplus N}\otimes_A \calH_b \hat\otimes {C\ell_{0,1}}_{C\ell_{0,1}}\right) \hat\otimes_\R \bigwedge\nolimits^{\!*} \R^d
\end{align*}
as the action of $C\ell_{0,d}$ on $\bigwedge^*\R^d$ is bijective. The projection 
$\tilde{P}_\mu$ is defined on $E^{\oplus N}\otimes_A \calH_b \hat\otimes {C\ell_{0,1}}$ by 
the extended $\tilde{H} = H\hat\otimes 1\hat\otimes \rho$, which is still self-adjoint 
and gapped. Furthermore, the action of $C\ell_{n,1}$ and $C\ell_{0,d}$ on 
$\tilde{P}_\mu(E^{\oplus N}_A \hat\otimes {C\ell_{0,1}})$ and $\bigwedge^*\R^d$ 
respectively can be extended to an 
action of $C\ell_{n,1}\hat\otimes C\ell_{0,d}$ on 
$\tilde{P}_\mu (E^{\oplus N}\otimes_A \calH_b \hat\otimes {C\ell_{0,1}}) \hat\otimes \bigwedge\nolimits^{\!*} \R^d$.

Next we construct the operator $1\otimes_\nabla X_j$ on $E^{\oplus N} \otimes_A \calH_b$ 
for $j\in\{1,\ldots,d\}$. First let 
$\calE_{A}$ be the submodule of $E$, which is spanned by elements of the form
$$ \sum_{g\in G} a_g R_g = \sum_{g\in G} R_g \alpha_g^{-1}(a_g) = \sum_{g\in G} R_g \tilde{a}_g. $$
Over finite sums of such elements, we take the connection
$$  
\nabla :\calE \to \calE \otimes_{\mathrm{poly}(a)} \Omega^1(\mathrm{poly}(a)), \qquad  \nabla\left( \sum_{g\in G} R_g a_g\right) = \sum_{g\in G} R_g \otimes \delta(a_g),
$$
where $\delta$ is the universal derivation. We represent $1$-forms on $\calH_b$ via
$$  
\tilde{\pi}\!\left(a_0\delta(a_1)\right)\lambda = a_0[X_j,a_1]\lambda, \quad \lambda\in\calH_b,  
$$
from which we define, for $(e\otimes \lambda) \in E \otimes_A \calH_b$, 
$$  
(1\otimes_\nabla X_j)(e\otimes\lambda) 
:= (e\otimes X_j\lambda) + (1\otimes \tilde{\pi})\circ(\nabla\otimes 1)(e\otimes \lambda). 
$$
We use a connection to correct the naive formula $1\otimes X_j$ is because 
$1\otimes X_j$ is not well-defined on the balanced tensor product. Computing yields that
\begin{align*}
   (1\otimes_\nabla X_j)\left(\sum_{g\in G} R_g a_g \otimes \lambda\right) 
   &= \sum_{g\in G}R_g \otimes a_g X_j \lambda 
   + \sum_{g\in G}R_g \otimes [X_j,a_g]\lambda \\
   &= \sum_{g\in G} R_g \otimes X_j a_g \lambda.
\end{align*}
For the case of $E^{\oplus N} \otimes_A \calH_b$ with $N\geq 2$, we can always inflate 
$\calH_b$ to $\calH_b^{\oplus N}$ and define the operator 
$(1\otimes_\nabla X_j)$ diagonally. Hence we can define the operator  
$\sum_{j=1}^d \tilde{P}_\mu (1\otimes_\nabla X_j \hat\otimes 1)\tilde{P}_\mu \otimes \gamma^j$ 
on the projective module 
$\tilde{P}_\mu (E^{\oplus N}\otimes_A \calH_b \hat\otimes {C\ell_{0,1}}) \hat\otimes \bigwedge\nolimits^{\!*} \R^d$. 
This operator has compact resolvent 
by analogous arguments to the proof of Proposition \ref{prop:real_bulk_spec_trip}.

Combining our results so far, we consider the unbounded tuple
$$ 
\left(C\ell_{n,1}\hat\otimes C\ell_{0,d},\, \tilde{P}_\mu (E^{\oplus N} \otimes_A \calH_b \hat\otimes C\ell_{0,1})_{C\ell_{0,1}} \hat\otimes \bigwedge\nolimits^{\!*} \R^d,\, \sum_{j=1}^d \tilde{P}_\mu (1\otimes_\nabla X_j\hat\otimes 1)\tilde{P}_\mu \hat\otimes\gamma^j \right)
$$
with grading $(\mathrm{Ad}_{R}\hat\otimes 1\hat\otimes \gamma_{C\ell_{0,1}})\hat\otimes \gamma_{\bigwedge^* \R^d}$.
By construction, all Clifford generators have odd grading and graded-commute with the 
Dirac-type operator, which also commutes with the right $C\ell_{0,1}$ action. 
Hence the tuple is an unbounded Kasparov module. 
A simple check shows that the Kasparov module satisfies Kucerovsky's 
criterion~\cite[Theorem 13]{Kucerovsky97} and so is an unbounded representative of the product.
\end{proof}

We now have an unbounded representative of the product 
$[H^G]\hat\otimes_A[\lambda]\in KKO(C\ell_{n,0}\hat\otimes C\ell_{0,d}, \R)$ or 
$KKO(C\ell_{n,1}\hat\otimes C\ell_{0,d}, C\ell_{0,1})$. Our task is to 
associate an index to this class.

Let's begin with the simpler case of $G\subset\{1,T\}$ and so 
the product $[H^G]\hat\otimes_A [\lambda]$ is represented by a real spectral triple
$$
\left(C\ell_{n,0}\hat\otimes C\ell_{0,d},\, P_\mu (E^{\oplus N} \otimes_A \calH_b ) \hat\otimes \bigwedge\nolimits^{\!*} \R^d,\, \sum_{j=1}^d P_\mu (1\otimes_\nabla X_j)P_\mu \hat\otimes\gamma^j,\, (\Gamma\hat\otimes 1)\hat\otimes \gamma_{\bigwedge^* \R^d} \right).
$$
We let $P\wt{X}P$ denote the product operator.
Representing the $\Z_2$-grading as $(\begin{smallmatrix} 1 & 0\\ 0 & -1\end{smallmatrix})$, we can express 
$P\wt{X}P=\begin{pmatrix} 0 & P\wt{X}_-P \\ P\wt{X}_+P & 0 \end{pmatrix}$, 
where $\wt{X}_\pm$ are real Fredholm operators. 
The operator $\wt{X}$ graded-commutes with a 
left action of $C\ell_{n,0}\hat\otimes C\ell_{0,d} \cong C\ell_{n,d}$. 
As $P\wt{X}P$ is Fredholm, 
$\Ker(P\wt{X}P)\cong \Ker(P\wt{X}P)^0\oplus \Ker(P\wt{X}P)^1$ is a finite-dimensional $\Z_2$-graded 
$C\ell_{n,d}$-module. Furthermore, $\Ker(P\wt{X}P)^0 \cong \Ker(P\wt{X}_+P)$.
\begin{defn}[\cite{ABS64}]
Denote by $\hat{\mathfrak{M}}_{r,s}$ the Grothendieck group of equivalence classes of real $\Z_2$-graded modules with an irreducible graded left-representation of $C\ell_{r,s}$.
\end{defn}

The subspace $\Ker(P\wt{X}P)$ represents a class in the quotient group $\hat{\mathfrak{M}}_{n,d}/i^*\hat{\mathfrak{M}}_{n+1,d}$, where $i^*$ comes from restricting a Clifford action of $C\ell_{n+1,d}$ to $C\ell_{n,d}$. Next, we use the Atiyah-Bott-Shapiro isomorphism~\cite[Theorem I.9.27]{SpinGeometry} to relate
$$  \hat{\mathfrak{M}}_{n,d}/i^*\hat{\mathfrak{M}}_{n+1,d} \cong KO^{d-n}(\mathrm{pt}) \cong KO_{n-d}(\R). $$

\begin{defn} \label{def:Clifford_index}
The Clifford index of $P\wt{X}P$ is given by
$$ \Index_{n-d}(P\wt{X}P) := [\Ker(P\wt{X}P)] \in \hat{\mathfrak{M}}_{n,d}/i^*\hat{\mathfrak{M}}_{n+1,d} \cong KO_{n-d}(\R). $$
\end{defn}

We remark that $\Index_k$ is a generalisation of the usual index. To see this, we 
first note that $C\ell_{0,0}\cong \R$ and $C\ell_{1,0}\cong \R\oplus\R$. A 
$\Z_2$-graded $C\ell_{0,0}$-module is given by any $\Z_2$-graded finite-dimensional 
real vector space $V^0\oplus V^1$. Next observe that $V\oplus V \cong V\otimes \R^2$ 
extends to a graded $C\ell_{1,0}$-module, which implies that $[V\oplus 0] = -[0\oplus V]$ 
in $\hat{\mathfrak{M}}_{0,0}/i^*\hat{\mathfrak{M}}_{1,0}$. Hence, given a Dirac-type 
operator $D$ such that $\Ker(D)$ is a $\Z_2$-graded $C\ell_{0,0}$-module,
\begin{align*}
  \Index_0(D) &= [\Ker(D)^0 \oplus \Ker(D)^1] \cong [\Ker(D)^0 \oplus 0] -[\Ker(D)^1 \oplus 0]  \\
          &\cong \mathrm{dim}_\R \Ker(D_+) - \mathrm{dim}_\R \,\mathrm{CoKer}(D_+)\in\Z \cong KO_0(\R).
\end{align*}
Therefore we see that $\Index_k$ reduces to the usual Fredholm index when $k=0$. 
We direct the reader to~\cite{AS69} and~\cite[Chapter \Rmnum{1}.9, \Rmnum{2}.7, \Rmnum{3}.10]{SpinGeometry} 
for more details on the Clifford index. 
A similar viewpoint on expressing the invariants in $KO_{n-d}(\R)$ as index-like 
maps is considered in~\cite{DNSB14b, GSB15}.

As a final detail, we must show that the spectral triple representing the product from 
Lemma \ref{lemma:real_index_pairing_product} does not contribute any topological information 
outside of $\Ker(P\wt{X}P)$. We relegate this detail to Appendix \ref{sec:Prod_mod_degenerate_off_kernel}.

Let us now consider the case of odd symmetries and 
$[H^G]\hat\otimes_A[\lambda] \in KKO(C\ell_{n,1}\hat\otimes C\ell_{0,d},C\ell_{0,1})$. 
Because we are now solely interested in the topological information of the product 
Kasparov module, we can use stability of the $KKO$ groups to associate $[H^G]\hat\otimes_A[\lambda]$ 
to a class
$$
   \left[ ( C\ell_{n,d}, \calH_\R, \hat{X}, \gamma ) \right] \in KKO(C\ell_{n,d},\R),
$$
which is now the class of a real spectral triple. Furthermore we can suppose without issue 
that $\hat{X}$ graded-commutes with the left $C\ell_{n,d}$ representation 
(see~\cite[Chapter 8]{HigsonRoe}). Thus we may do the same procedure as the 
case of $G\subset\{1,T\}$ and 
associate the Clifford index, $\Index_{n-d}(\hat{X}) \in KO_{n-d}(\R)$, to the 
product $[H^G]\hat\otimes_A [\lambda]$.


\subsection{Some brief remarks on disordered systems} \label{subsec:remarks_on_disorder}

Following Bellissard and others, \cite{Bellissard94}, if we wish to consider systems 
with disorder or impurities, our algebra of interest is the twisted
crossed product $C(\Omega)\rtimes_\phi \Z^d$ (if the algebra is real, 
we take the twist $\phi=0$). The compact space $\Omega$ is the disorder space 
of configurations and carries a probability measure such that the action $\{T_\alpha:\alpha\in\Z^d\}$ 
on $\Omega$ is invariant and ergodic.
The disordered
Hamiltonian $H_\omega$ is indexed by 
$\omega\in\Omega$ such that $H_\omega\in\pi_\omega(C(\Omega)\rtimes_\phi \Z^d)$ 
and $\{\pi_\omega\}_{\omega\in\Omega}$ is a family of representations linked by
the covariance relation 
$\wh{S}^\alpha \pi_\omega(b) \wh{S}^{-\alpha} = \pi_{T_\alpha \omega}(b)$ 
for $\wh{S}^\alpha = \wh{S}_1^{\alpha_1}\cdots \wh{S}_d^{\alpha_d}$ 
a (possibly twisted) translation operator and $b\in C(\Omega)\rtimes_\phi\Z^d$~\cite{Bellissard94}.

If for every $\omega\in\Omega$, $H_\omega$ still has a 
spectral gap at $0$, then $\mathrm{sgn}(H_\omega)$ will still give a
grading for the PUA representation. Hence our general method to 
obtain the Kasparov module encoding the internal 
symmetries of the Hamiltonian will extend using the algebra
$\pi_\omega(C(\Omega)\rtimes_\phi \Z^d)$. The 
Kasparov module gives a class in $KKO(C\ell_{n,0},\pi_\omega(C(\Omega)\rtimes_\phi \Z^d))$
and the covariance relation can then be used to ensure that the 
$KKO$-class is independent of the choice of $\omega\in\Omega$ 
(provided $[\wh{S}^\alpha,R_g]=0$ for all $g\in G$ and $\alpha\in\Z^d$, which
is true in all relevant examples).

Similarly, it is a simple extension of our existing proofs to show that
$$ 
\lambda_\omega= 
\left( \pi_\omega(\calA)\hat\otimes C\ell_{0,d}, \calH_b \otimes \bigwedge\nolimits^{\!*} \R^d,\,
\sum_{j=1}^d X_j \otimes  \gamma^j,\, \gamma_{\bigwedge^* \R^d} \right)  
$$
is a real spectral triple with $\calA = C_c(\Z^d, C(\Omega))$, a dense $\ast$-subalgebra. 
Similarly the spectral triple may include disorder in the complex case
in the same way. 

We compare different representations of the 
disorder parameter by the covariance relation, which gives the
unitarily equivalent spectral triple
$$  
\wh{S}^\alpha\lambda_\omega\wh{S}^{-\alpha} 
= \left( \pi_{T_\alpha\omega}(\calA)\hat\otimes C\ell_{0,d}, 
\calH_b \otimes\bigwedge\nolimits^{\!*} \R^d,\, 
\sum_{j=1}^d (X_j-\alpha_j) \otimes  \gamma^j,\, \gamma_{\bigwedge^* \R^d} \right) 
$$
as $\wh{S}^\alpha X_j \wh{S}^{-\alpha} = X_j - \alpha_j$. 
The straight line homotopy $\calD_t = \sum_j (X_j-t\alpha_j)\otimes \gamma^j$ for $t\in[0,1]$
shows that $[\lambda_\omega] = [\lambda_{T_\alpha\omega}]$ 
and the $KKO$-classes are orbit equivalent. 
As the action of $\Z^d$ on $\Omega$ is taken to be ergodic, 
all relevant Kasparov modules are independent of 
the choice of $\omega\in\Omega$. Hence the index
pairing $[(H^G)_\omega]\hat\otimes_{\pi_\omega(A)} [\lambda_\omega]$ 
is independent of $\omega$ and 
our topological invariants are stable under the addition of disorder.

As already stated, our disordered model currently requires the spectral gap of the Hamiltonian to persist 
under the
addition of disorder. This is an unrealistic assumption. 
Instead an analysis must be conducted by considering gaps in the
extended state spectrum, and the phenomenon of
localisation. It is one of the key achievements of Bellissard et al.~\cite{Bellissard94} 
that the complex pairing extends to regions under 
Anderson localisation. Such an extension is also obtained for the complex 
pairings in~\cite{PLB13, PSB14}. 
We delay a full treatment of the significant problem of localisation 
in the real/torsion setting to another place.

\section{Applications} \label{sec:Applications}

\subsection{Insulator models}

\begin{example}[Time-reversal invariant $2D$ Insulators, Kane-Mele model] 
\label{example:bulk_KaneMele}
We take $d=2$ and the subgroup $G=\{1,T\}$. 
We are modelling particles with spin $s=1/2$ and 
so the time-reversal involution $R_T$ is such that 
$R_T^2 = (-1)^{2s} =-1$. The operator $R_T$ is 
anti-unitary so we will work in the category of real 
algebras and modules. The time-reversal operator 
acts on $\calH = \ell^2(\Z^2)\otimes \C^{2N}$ 
(considered where necessary as a real Hilbert space) by the matrix
$$  
R_T = \begin{pmatrix} 0_N & \calC \\ -\calC & 0_N \end{pmatrix}, 
$$
where $\calC$ is pointwise complex conjugation. A self-adjoint 
operator that is invariant under conjugation by $R_T$ takes the 
form $\begin{pmatrix}a & b \\ -\calC b\calC & \calC a \calC \end{pmatrix}$, 
where $a$ and $\calC a\calC$ are self-adjoint and $b^* = -\calC b\calC$. 
Following~\cite{KM05b, DNSB14b}, we take the Hamiltonian
$$ 
H_{KM} = \begin{pmatrix} h & g \\ g^* & \calC h \calC \end{pmatrix}, 
$$
where $h$ is a Haldane Hamiltonian (that is, Hamiltonian of shift 
operators acting on a honeycomb lattice), and $g$ is the 
Rashba coupling~\cite{KM05b}. We either require the Rashba 
coupling to be such that $g^* = -\calC g\calC$ or it is sufficiently 
small so we may take a homotopy of $H_{KM}$ to a Hamiltonian 
with $g=0$~\cite{SchulzBaldes13, DNSB14b}. Typically $h$ and $g$ 
are matrices of finite polynomials of the shift operators 
$S_j$ (see~\cite[Section 5]{ASV13}). Provided $h$ and $g$ 
are such that $\mu\notin\sigma(H_{KM})$, $H_{KM}$ 
satisfies Assumption \ref{def:bulk_Hamiltonian_condition}. 
Therefore we take the algebra 
$A = M_{2N}(C^*(\Z^2)) \subset \calB[\ell^2(\Z^2)\otimes\C^{2N}]$ as a 
real $C^*$-algebra. Because the Fermi projection commutes with $R_T$, we may 
consider the Kasparov module from Proposition \ref{prop:TR_projective_KK_class} 
given by
$$ 
  \left( C\ell_{4,0}, P_\mu E_A^{\oplus 2}, 0, \Gamma \right).
$$
The Kasparov module gives a class $[H^G]\in KKO(C\ell_{4,0},M_{2N}[C^*(\Z^2)])$, which can 
be simplified by stability of $KKO$-groups to $KKO(C\ell_{4,0},C^*(\Z^2))$.

We can also consider the dense subalgebra $\calA\subset A$
and apply Proposition \ref{prop:real_bulk_spec_trip} to obtain the real spectral triple
$$ 
\lambda= \left(\calA \hat\otimes C\ell_{0,2}, \ell^2(\Z^2)\otimes\C^{2N}\otimes \bigwedge\nolimits^{\!*} \R^2, 
X_1\otimes 1_{2N}\otimes\gamma^1 + X_2\otimes 1_{2N}\otimes \gamma^2,  
\gamma_{\bigwedge^*\R^2}\right) 
$$
The left Clifford action is generated by $\rho^1$ and $\rho^2$, 
whose representation is given by 
Equation \eqref{eq:left_right_real_clifford_actions_on_exterior_algebra}. 
We can use the isomorphism of linear spaces $\bigwedge^*\R^2 \cong M_2(\C)$ 
to write explicit generators for our Clifford actions as matrices, 
though the result is independent of the choice of generators. 
For example, under a suitable identification of $i$ as a $2\times 2$ 
matrix that squares to $-1$, we can choose Clifford generators 
$\gamma^j$ such that our Dirac-type operator is of the form
$$  
X = \begin{pmatrix} 0_{2N} & X_1\otimes 1_{2N} -iX_2\otimes 1_{2N} \\ X_1\otimes 1_{2N} + iX_2\otimes 1_{2N} & 0_{2N} \end{pmatrix}, 
$$
which is analogous to the well-known Dirac-type operator of the quantum Hall effect.

We can then compute the product $[P_\mu^G]\hat\otimes_A [\lambda]$, which by 
Lemma \ref{lemma:real_index_pairing_product} is represented by the real spectral 
triple
\begin{equation} \label{eq:KM_pairing_product_triple}
  \left( C\ell_{4,0}\hat\otimes C\ell_{0,2}, P_\mu( E \otimes_A \calH )\otimes \R^2 \hat\otimes \bigwedge\nolimits^{\!*} \R^2,
     \sum_{j=1}^2 P_\mu (1\otimes_\nabla X_j)P_\mu \otimes 1_2 \hat\otimes \gamma^j, \Gamma\hat\otimes \gamma_{\bigwedge^* \R^2} \right).
\end{equation}
With $E_A\cong A^M$ ($M=\dim C\ell_{4,0}$) we have the isomorphism
$$
P_\mu( E \otimes_A \calH )\otimes \R^2 \otimes \bigwedge\nolimits^{\!*} \R^2\to 
P_\mu\calH^M \otimes \R^2 \otimes \bigwedge\nolimits^{\!*} \R^2,
$$ 
and then on the right hand side the Dirac-type operator becomes 
$\sum_j P_\mu(1_M\otimes X_j)P_\mu \otimes 1_2\otimes \gamma^j$.

Following Section \ref{subsec:KKO_Clifford_index} we can associate the spectral triple of 
Equation \eqref{eq:KM_pairing_product_triple} to a graded $C\ell_{4,2}$-module. Thus 
we consider the map
\begin{align*}
  &KKO(C\ell_{4,0},C^*(\Z^2)) \times KKO(C^*(\Z^2)\hat\otimes C\ell_{0,2}, \R) 
  \to KO_2(\R)   \\
   &\qquad \left([P_\mu^G],[X] \right) \mapsto \Index_{4-2}(P_\mu XP_\mu) \in KO_{2}(\R) \cong \Z_2 
\end{align*}
and so we obtain the well-known $\Z_2$ invariant that arises in such systems. 
The derived $\Z_2$ invariant is non-trivial provided the spin-orbit 
coupling in $h$ is sufficiently large and the Rashba coupling 
$g$ is controlled (see~\cite{DNSB14b,KM05b,  KK15}). Using the Atiyah--Bott--Shapiro isomorphism and an 
explicit choice of Clifford generators, we can express the Clifford index concretely as
$$  
  \Index_2(P_\mu X P_\mu) = \mathrm{dim}_\mathbb{H}\Ker(P_\mu X P_\mu)\,\mathrm{mod}\, 2 
    = \mathrm{dim}_\C \Ker\!\left( P_\mu (X_1 \otimes 1_M + i X_2\otimes 1_M) P_\mu \right)
       \mathrm{mod}\,2. 
$$
\end{example}

\begin{example}[$3D$ Topological insulators] \label{ex:3D_bulk_insulators}
Let us now consider some $3$-dimensional examples. 
What we consider does not encompass every possible 
$3D$-system, but will hopefully give a better understanding 
of how we apply our general $K$-theoretic picture.

Consider the space $\calH = \ell^2(\Z^3)\otimes \C^{2N}$ 
and the symmetry operators
\begin{align}  \label{eq:3D_defn_of_R_T_and_R_P}
   &R_T = \begin{pmatrix} 0_N & \calC \\ -\calC & 0_N \end{pmatrix},  
   &&R_C = \begin{pmatrix} 0_N & i\calC \\ i\calC & 0_N \end{pmatrix}.
\end{align}
These operators correspond to an odd time-reversal involution ($R_T^2=-1$) 
and an even charge-conjugation involution ($R_C^2=1$). First, we consider operators of the form
\begin{equation} \label{eq:3D_Hamilt}
h = i\left(\sum_{j=1}^3 \sum_{k_j}^{\text{ finite}} \alpha_{k_j}\!\left(S_j^{k_j}\otimes 1_N - (S_j^*)^{k_j}\otimes 1_N\right) \right) 
\end{equation}
on $\ell^2(\Z^3)\otimes\C^N$ with $\alpha_{k_j}\in\R$ for all $k_j$. Using $h$ we define
$$ H_{3D} = \begin{pmatrix} 0 & h \\ h & 0 \end{pmatrix}. $$
Because $h=h^*$ and $i\calC hi\calC = -h$, one can check that 
such Hamiltonians are time-reversal and charge-conjugation 
symmetric for $R_T$ and $R_C$ given in Equation \eqref{eq:3D_defn_of_R_T_and_R_P}. We choose coefficients 
$\alpha_{k_j}$ such that $H_{3D}$ has a spectral gap at $0$. 
Then $H_{3D}$ satisfies Assumption \ref{def:bulk_Hamiltonian_condition} 
and so we can apply our general method. Because $H_{3D}$ is 
compatible with the full symmetry group $\{1,T,C,CT\}$ with 
$R_T^2=-1$ and $R_C^2=1$, Proposition \ref{prop:projective_KK_class_odd} 
and Table \ref{table:PT_classes_repn_table} imply that the class 
$[H^G_{3D}]\in KKO(C\ell_{3,0},C^*(\Z^3))$.

We can use Proposition \ref{prop:real_bulk_spec_trip} and the dense real subalgebra $\calA\subset C^*(\Z^3)$ of finite polynomials of shift operators to build the spectral triple
$$ 
\lambda_{3D} = \left( \calA \hat\otimes C\ell_{0,3},\, \ell^2(\Z^3)\otimes\C^{2N} \otimes \bigwedge\nolimits^{\!*} \R^3,\, \sum_{j=1}^3 X_j\otimes\gamma^j,\, \gamma_{\bigwedge^*\R^3} \right) 
$$
with left and right Clifford actions given by Equation \eqref{eq:left_right_real_clifford_actions_on_exterior_algebra}. The pairing 
$[H^G_{3D}]\hat\otimes_A [\lambda_{3D}]$ can be represented 
by a real spectral triple $[(C\ell_{3,3},\calH,\hat{X}, \gamma)]\in KKO(C\ell_{3,3},\R)$. 
Therefore the Clifford index $[\Ker(\hat{X})] \in  \hat{\mathfrak{M}}_{3,3}/i^*\hat{\mathfrak{M}}_{4,3}$ 
reduces to the usual index
$$ 
\Index_{0}(\hat{X}) = \mathrm{dim}_\R \Ker( \hat{X}_+ ) - \mathrm{dim}_\R \,\mathrm{CoKer}(\hat{X}_+) \in \Z. 
$$
Hence, in this example of $d=3$ with $R_T^2=-1$ and $R_C^2=1$, 
the invariant of interest is the usual integer-valued index, though 
now seen as a special case of a much broader framework.

We now consider a different $3$-dimensional Hamiltonian, defined by the matrix
$$
 \breve{H}_{3D} = \begin{pmatrix} h+\breve{h} & 0 \\ 0 & -h+\breve{h} \end{pmatrix},  \quad \quad \breve{h}=p(S_1,S_2,S_3),
$$
where $h$ is given in Equation \eqref{eq:3D_Hamilt} and 
$p$ is a finite polynomial with real coefficients such that 
$p(S_1,S_2,S_3)$ is self-adjoint. The new Hamiltonian has the 
property $R_T \breve{H}_{3D} R_T^* = \breve{H}_{3D}$, but is 
not charge-conjugation symmetric. Provided $\mu\notin\sigma(\breve{H}_{3D})$, 
we obtain a class $[\breve{H}^G_{3D}]\in KKO(C\ell_{4,0},C^*(\Z^3))$ 
by Proposition \ref{prop:TR_projective_KK_class} and 
Table \ref{table:PT_classes_repn_table}. We use the same spectral triple
$$ 
\lambda_{3D} 
= \left( \calA \hat\otimes C\ell_{0,3},\, \ell^2(\Z^3)\otimes\C^{2N} 
\otimes \bigwedge\nolimits^{\!*} \R^3,\, 
\sum_{j=1}^3 X_j\otimes\gamma^j,\, \gamma_{\bigwedge^*\R^3} \right) 
$$
and class $[\lambda_{3D}]\in KKO(C^*(\Z^3)\hat\otimes C\ell_{0,3}, \R)$, 
whose product with $[\breve{H}_{3D}^G]$ is such that
\begin{align*}
 [\breve{H}_{3D}^G]\hat\otimes_{C^*(\Z^3)} [\lambda_{3D}] 
 &\cong \Index_{4-3}(P\wt{X}P )\in KO_1(\R)\cong \Z_2.
\end{align*}
We can express this index concretely as
$$ \Index_{1}(P\wt{X}P ) \cong \mathrm{dim}_{\C}\Ker(P\wt{X}P)\,\mathrm{mod}\,2 
  = \mathrm{dim}_{\R}\Ker(P\wt{X}_+P)\,\mathrm{mod}\,2. $$

We emphasise that the spectral triples used in the different 
$3$-dimensional examples are the same (up to unitary equivalence) 
and so represent the same $KO$-homology class. What differentiates 
the invariants of interest in the two examples are the different 
classes represented by $[H^G]\in KKO(C\ell_{n,0},C^*(\Z^3))$ 
for changing $G$ and $n$. Hence the symmetries change but 
the Dirac type operator of the Brillouin zone $X = \sum_j X_j\otimes \gamma^j$ 
is the same (up to equivalence of $KO$-homology classes) in a 
fixed dimension. Such an occurrence also appears in~\cite{DNSB14b,GSB15}.
\end{example}

\subsection{The Kasparov product and the bulk-edge correspondence} \label{subsec:bulk_edge}

One of the main applications of the $KKO$-approach is to the 
bulk-edge correspondence for topological insulator systems.
We give a basic overview here, though the full details are 
lengthy and will be dealt with elsewhere~\cite{BKR}.

Following~\cite{KR06,SBKR02,KSB04b,MT15b}, we link 
bulk and edge systems by the (real) Pimsner-Voiculescu short exact sequence
$$  
0 \to B\otimes\calK[\ell^2(\N)] \to \calT \to C^*(\Z^d) \to 0, 
$$
where $\calT$ acts on $\ell^2(\Z^{d-1}\otimes\N)\otimes\C^N$ and 
$B$ acts on $\ell^2(\Z^{d-1})\otimes\C^N$ with $C^*(\Z^d)\cong B\rtimes \Z$ 
under the action $\alpha(b) = {S}_d^* b {S}_d$~\cite{PV80}. As $B$ 
acts on $\ell^2(\Z^{d-1})\otimes\C^N$, we think of its elements as 
observables concentrated at the boundary/edge of the sample. 
By results in~\cite[{\S}7]{Kasparov80}, we can associate to this 
short exact sequence a class 
$[\mathrm{Ext}]\in KKO(C^*(\Z^d)\hat\otimes C\ell_{0,1},B)$ 
(see also Equation \eqref{eq:Ext_KKO_equivalence}). 
In particular, we build an unbounded representative of the 
extension class using a method similar to~\cite[Section 2]{BCR14} and~\cite{RRS15}.

By taking a dense subalgebra $\calB$ of the `edge algebra' $B$, 
we can apply Proposition \ref{prop:real_bulk_spec_trip} to obtain 
an edge spectral triple $\lambda_e$ that gives a class 
$[\lambda_e]\in KKO(B\hat\otimes C\ell_{0,d-1},\R)$.

As we will show in~\cite{BKR}, under the intersection product
\begin{align*}
   KKO(C^*(\Z^d)\hat\otimes C\ell_{0,1},B) \times KKO(B\hat\otimes C\ell_{0,d-1},\R) 
   &\to KKO(C^*(\Z^d)\hat\otimes C\ell_{0,d},\R), \\
       [\mathrm{Ext}] \hat\otimes_B [\lambda_e] &= (-1)^{d-1}[\lambda_b], 
\end{align*}
where $[\lambda_b]$ is the class of the bulk spectral triple from 
Proposition \ref{prop:real_bulk_spec_trip} and for $d$ even $-[\lambda_b]$ 
denotes the inverse of $\lambda_b$ in $K$-homology. Taking the product with 
the symmetry $KK$-class $[H^G]$,
\begin{align*}
   C_{n,d} &= [H^G] \hat\otimes_A [\lambda_b] 
   = (-1)^{d-1} [H^G] \hat\otimes_A [\mathrm{Ext}]\hat\otimes_B [\lambda_e], 
\end{align*}
Therefore we can express the real index pairing as a map
\begin{align*}
 &KKO(C\ell_{n,0},A)\times KKO(A\hat\otimes C\ell_{0,1},B)\times KKO(B\hat\otimes C\ell_{0,d-1},\R) \to KKO(C\ell_{n,0}\hat\otimes C\ell_{0,d},\R).
\end{align*}
By the associativity of Kasparov product, this will either be a pairing
$$ 
KKO(C\ell_{n,0}, A) \times KKO(A \hat\otimes C\ell_{0,d},\R) 
\to KKO(C\ell_{n,0}\hat\otimes C\ell_{0,d},\R) \cong KO_{n-d}(\R), 
$$
the bulk invariant studied in Section \ref{subsec:spec_trip_and_pairings}, or 
$$  
KKO(C\ell_{n,0}\hat\otimes C\ell_{0,1},B) 
\times KKO(B \hat\otimes C\ell_{0,d-1},\R) \to  KO_{n-d}(\R), 
$$
an invariant that comes from the edge algebra $B$ 
of a system with boundary. The factorisation of the 
bulk spectral triple ensures that regardless of our 
choice of $K$-theory class we pair with, the result is the same for bulk and edge. 
In particular non-trivial bulk pairings imply non-trivial 
edge pairings and vice versa, a bulk-edge correspondence. 
We again refer to~\cite{BKR} for the details.


\appendix

\section{Kasparov theory for real algebras} \label{sec:RealKasTheory}
\subsection{Basic notions}
While there is an extensive range of applications of complex Kasparov theory there has been
much less use of the real version.  Thus the basics are not well-known and there is
a point to giving a brief introduction to $KK$-theory for real $C^*$-algebras. 
\begin{defn}
A real $C^*$-algebra $A$ is a real Banach $\ast$-algebra such that 
$\|a^*a\| = \|a\|^2$ and $a^*a+1\geq 0$ for all $a\in A$.
\end{defn}

\begin{remark}[A note on real vs Real $C^*$-algebras]
The real Gelfand-Naimark theorem says that commutative real 
$C^*$-algebras are isomorphic to algebras of the form 
$$  C_0(X)^\tau = \{ f\in C_0(X,\C)\,:\,f(x^\tau) = \ol{f(x)}\,\text{ for all }x\in X\}, $$
where $(X,\tau)$ is a locally compact Hausdorff space with involution $\tau$, see~\cite{AK48, Rosenberg15}. 

More generally, given a Real $C^*$-algebra $(A,\tau)$ 
(a complex $C^*$-algebra $A$ with anti-linear involution 
$\tau$ that preserves multiplication), the subalgebra of 
elements in $A$ invariant under $\tau$, 
$A^\tau =\{ a\in A\,:\, a^\tau = a \}$, is a real $C^*$-algebra. 
There is an equivalence of the category of Real $C^*$-algebras 
with the category of real $C^*$-algebras (see~\cite{LS10} for more 
detail on the relation between real and Real algebras).
\end{remark}

\begin{defn}
A real Hilbert $A$-module is a real linear space ${E}$ over $\R$ with right action by a real $C^*$-algebra $A$ and $A$-valued inner product $(\cdot\mid\cdot)_A$ such that such that for all $e,f,g\in{E}$, $\lambda,\rho\in\R$ and $a\in A$,
\begin{enumerate}
  \item $(\lambda e)\cdot a = \lambda(e\cdot a) = e\cdot(\lambda a)$,
  \item $(e|\lambda f + \rho g) = \lambda(e| f) + \rho(e|g)$,
  \item $(e|f\cdot a) = (e|f)\cdot a$,
  \item $(e|f) = (f|e)^{*}$ as a member of the $C^{*}$-algebra $A$,
  \item $(e|e)\geq 0$ as an element of $A$,
  \item $(e|e) = 0$ if and only if $e = 0$,
  \item The space is complete under the norm $\|e\|_{{E}}^2:= \|(e|e)\|_A$.
\end{enumerate}
\end{defn}

Many situations where  complex Hilbert $C^*$-modules arise have natural real analogues.

\begin{example} \label{ex:C*-module_of_algebra_over_itself}
Let $A$ be a $C^{*}$-algebra and consider $A_{A}$, the $C^{*}$-module of $A$ over itself defined by the relations
\begin{align*}
  &a\cdot b = ab,     &(a|b) = a^{*}b
\end{align*}
The only condition worth checking from the definition is that $(a|a)=0$ if and only if $a=0$. Using the $C^{*}$-norm condition, 
$$
  (a|a) = 0 \Leftrightarrow a^{*}a=0 \Leftrightarrow \|a^{*}a\| 
  = 0 \Leftrightarrow \|a\|^2 = 0 \Leftrightarrow a = 0.
$$
\end{example}

\begin{example}
Take $E\to X$ to be a real vector bundle over a 
locally compact Hausdorff space $X$. 
Provided that there exists a positive real-valued form 
$(\cdot\mid\cdot)$ on $E$, then we can define the real 
$C^*$-module $\Gamma_0(E)_{C_0(X,\R)}$ with right-action 
by multiplication and inner-product via $(\cdot\mid\cdot)$.
\end{example}

Much like the case of Hilbert spaces, we are interested in 
linear transformations between $C^{*}$-modules. 
Though there are many similarities between operators 
on $C^{*}$-modules and operators on Hilbert spaces, 
the adjoint operator $T^{*}$ of a given operator $T$ may not always be defined. 

We will denote by $\End_A(E)$ the adjointable endomorphisms on the 
Hilbert $C^{*}$-module $E_A$: i.e. those $T:\,E\to E$ 
for which there exists an adjoint $T^*:\,E\to E$. 
For $f,g\in E_A$, define the rank-$1$ endomorphism 
$\Theta_{f,g}h = f\cdot(g|h)$ for $h\in E_A$. We define 
$\End_{A}^{00}({E})$ to be the endomorphisms of finite rank and are given by the set 
$$   
\End_{A}^{00}({E}) = \text{span}_\R\{\Theta_{e,f}\,:\, e,f\in{E}\}.  
$$
The compact operators $\End_{A}^{0}({E})$ are defined by 
$$  
\End_{A}^{0}({E}) = \overline{\text{span}}_\R\{\Theta_{e,f}\,:\, e,f\in{E}\} 
= \overline{\End_{A}^{00}({E})}, 
$$
where the closure is taken via the norm of $A$.

\begin{example} \label{example:A_compacts_itself}
Take $A$ as $C^{*}$-module over itself. Then for $a,b,c\in A$, $\theta_{a,b}c = ab^{*}c$. Hence, if we take the closure of all linear combinations of $\theta_{a,b}$ for all $a,b\in A$, then we clearly get $A$ back. Thus, $\End_A^0(A) = A$.
\end{example}

We will use $A,B, C$ to denote real  $C^*$-algebras.
\begin{defn}
A real unbounded Kasparov module for an algebra $\calA$ is a 4-tuple $(\calA, {}_\pi{E}_B, \calD, \gamma)$ where $\calA$ is dense in $A$ and
\begin{enumerate}
  \item $\gamma\in\End_B(E)$ is a  $\Z_2$-grading of the  real right-$B$ $C^*$-module ${E}_B$, 
  \item $\pi$ is a graded real endomorphism $\pi:\calA\to \End_B({E})$,
  \item $\calD:\Dom(\calD)\subset E_B\to E_B$ is an odd unbounded operator with $(1+\calD^2)$ dense in $E_B$ and for all $a\in \calA$,
$$  [\calD,\pi(a)]_\pm \in \End_B({E}),  \qquad \qquad \qquad \pi(a)(1+\calD^2)^{-1/2}\in \End_B^0({E}), $$
where $[\cdot,\cdot]_\pm$ is the graded commutator.
\end{enumerate}
\end{defn}
We will write Kasparov modules as $(\calA, E_B, \calD)$ when the representation $\pi$ is unambiguous.

\begin{remark}[Real Kasparov modules]
Kasparov modules for Real $C^*$-algebras require extra structure. 
Namely an anti-linear involution on the space $E$ and algebra $B$ 
such that $e^{\tau_E} \cdot b^{\tau_B} = (e\cdot b)^{\tau_E}$ and 
that $(e^{\tau_E}\mid f^{\tau_E})_B = (e\mid f)_B^{\tau_B}$. One 
also requires the unbounded operator $\calD: \Dom(\calD)\to E_B$ to be 
invariant under the induced involution $\calD^\tau(e) = (\calD(e^\tau))^\tau$ 
for $e\in\Dom(\calD)$.

Because we do not require the extra structure that comes with $KKR$-theory 
in this paper, we will not investigate its properties further. We do however 
remark that Real algebras and modules appear to be the correct setting to 
study systems with parity/spatial involution symmetry. More generally, while
complexifications of real Kasparov modules yield Real Kasparov modules, 
the possibility of more complicated Real structures on a Real Kasparov 
module means that the two notions are distinct, \cite{AR}.
\end{remark}

The results of Baaj and Julg~\cite{BJ83} continue to hold for real 
Kasparov modules, so given an unbounded module 
$(\calA,{E}_B,\calD)$ we apply the Riesz map (bounded transform) 
$\calD\to \calD(1+\calD^2)^{-1/2}$ to obtain the real 
Kasparov module $(A,{E}_B,\calD(1+\calD^2)^{-1/2})$, 
where $A$ is the $C^*$-closure of the dense subalgebra $\calA$.

One can define notions of unitary equivalence, 
homotopy and degenerate Kasparov modules in the real 
setting (see~\cite[\S{4}]{Kasparov80}) just as in the complex case. Hence we can 
define the group $KKO(A,B)$ as the equivalence classes 
of real (bounded) Kasparov modules modulo the equivalence 
generated by these relations.

The generality of the constructions and proofs in~\cite{Kasparov80} 
mean that all the central results in complex $KK$-theory carry 
over into the real (and Real) setting. In particular, the intersection product
$$  KKO(A,B) \times KKO(B,C) \to KKO(A,C) $$
is still a well-defined map and other important properties such as stability 
$$  
KKO(A\hat\otimes\calK(\calH),B)\cong KKO(A,B)  
$$
continue to hold, where $\calK(\calH)$ is the algebra of 
compact operators on a separable real Hilbert space. 

When we consider the unbounded picture and the product, we note that 
Kucerovsky's theorem, which gives us checkable conditions as to 
whether an unbounded module represents the 
Kasparov product, continues to hold in $KKO$-setting \cite[Theorem 13]{Kucerovsky97}. 
While such a result is implicit in Kucerovsky and Kasparov's work, 
the modules and products we consider are simple enough that these 
technicalities will not play a role, and all computations are explicit.

\subsection{Higher-order groups}
Clifford algebras are used to define higher $KKO$-groups and encode periodicity. In the real setting, we define
$$   C\ell_{p,q} = \text{span}_\R\!\left\{\left.\gamma^1,\ldots,\gamma^p,\, \rho^1,\ldots,\rho^q\,\right\vert\,(\gamma^i)^2 = 1,\,(\gamma^i)^* = \gamma^i,\,(\rho^i)^2=-1,
\,(\rho^i)^*=-\rho^i \right\} $$
with generating elements mutually anti-commuting.

\begin{example}
Consider the real space $\R^{p,q}$ with basis $\{ e_1,\ldots,e_p,\,\epsilon_1,\ldots,\epsilon_q \}$ 
from which we construct the exterior algebra $\bigwedge^*\R^{p,q}$. 
We can define an action of $C\ell_{p,q}$ on $\bigwedge^*\R^{p,q}$ by 
Clifford multiplication. We define 
$\eta^j(\omega) = e_j\wedge \omega +\iota(e_j)\omega$ for $1\leq j\leq p$ 
and $\nu^j(\omega) = \epsilon_j\wedge\omega +\iota(\epsilon_j)\omega$ 
for $1\leq j\leq q$, where $\iota(v)$ denotes the contraction of a form along $v$. 
One readily checks that the $\eta^j$ and $\nu^j$ satisfy the requirements to be generators
of $C\ell_{p,q}$.

Similarly, given $\R^d$ we can construct $\bigwedge^*\R^d$ and define 
representations of $C\ell_{d,0}$ and $C\ell_{0,d}$ with the generators
\begin{align*}
  &\gamma^j(\omega) = e_j \wedge \omega + \iota(e_j)\omega,   
  && \rho^j(\omega) = e_j\wedge \omega - \iota(e_j)\omega
\end{align*}
respectively. We note the sign change that ensures $(\rho^j)^2 = -1$.
\end{example}

We define higher-order  $KKO$-groups by tensoring with real Clifford algebras. Kasparov defines
\begin{equation} \label{eq:KKO_higher_order_def}
  K_{p,q}K^{r,s}O(A,B) := KKO(A\hat\otimes C\ell_{p,q}, B\hat\otimes C\ell_{r,s}). 
\end{equation}
The definition from Equation \eqref{eq:KKO_higher_order_def} simplifies immediately with the following result.
\begin{thm}[\S5, Theorem 4 of~\cite{Kasparov80}] \label{thm:Kasparov_simplify_K_pK^q}
Given real algebras $A$ and $B$, then for a fixed difference $(p-q)-(r-s)$ the groups $K_{p,q}K^{r,s}O(A,B)$ are canonically isomorphic.
\end{thm}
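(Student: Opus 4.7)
The plan is to derive the isomorphism from elementary moves on the indices $(p,q,r,s)$ that preserve the integer $(p-q)-(r-s)$, thereby reducing the general statement to a pair of ``building block'' identities. Two ingredients drive the argument: the $\Z_2$-graded algebra isomorphism $C\ell_{1,1} \cong \End_\R(\R^{1,1}) \cong M_2(\R)$ (with its canonical off-diagonal odd grading), and the Clifford fusion rule $C\ell_{p,q}\hat\otimes C\ell_{p',q'} \cong C\ell_{p+p',q+q'}$. Together with stability of $KKO$ under tensoring with graded matrix algebras, these let us treat the $(p,q,r,s)$-parameter space as a lattice of canonically isomorphic groups.

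\textbf{Step 1 (Diagonal stability).} Since $C\ell_{1,1}$ is graded Morita equivalent to $\R$, the standard stability axiom of $KKO$ yields the canonical isomorphisms
$$K_{p+1,q+1}K^{r,s}O(A,B) \;=\; KKO(A \hat\otimes C\ell_{p,q}\hat\otimes C\ell_{1,1},\, B\hat\otimes C\ell_{r,s}) \;\cong\; K_{p,q}K^{r,s}O(A,B),$$
and an analogous statement shifting $(r,s)$ by $(1,1)$. Iterating reduces the problem to the case $\min(p,q)=\min(r,s)=0$.

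\textbf{Step 2 (Clifford shift).} The remaining reductions come down to establishing the canonical isomorphism
$$KKO(A \hat\otimes C\ell_{0,1},\, B) \;\cong\; KKO(A,\, B \hat\otimes C\ell_{1,0})$$
(together with its three variants for the other generator-swaps, preserving the fixed difference). These follow from the existence of an invertible class in $KKO(C\ell_{0,1},\, C\ell_{1,0})$, represented by $C\ell_{1,0}$ viewed as a $(C\ell_{0,1},\, C\ell_{1,0})$-bimodule with zero Dirac operator and natural grading. Its inverse is the mirror bimodule, and the two composites evaluate to the identity classes via the fusion $C\ell_{0,1}\hat\otimes C\ell_{1,0} \cong C\ell_{1,1}$ together with Step 1. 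Combining the moves of Steps 1 and 2 connects any two quadruples sharing the same difference.

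\textbf{Main obstacle.} The technical heart of the argument is verifying the Clifford shift in Step 2: both the construction of the invertible class and the identification of the two Kasparov products with the respective units. The product computations are most efficiently handled via Kucerovsky's unbounded criterion, or by exhibiting explicit homotopies to degenerate representatives after invoking $C\ell_{1,1}$-stability. A secondary subtlety is canonicity of the cumulative isomorphism: different chains of moves between $(p,q,r,s)$ and $(p',q',r',s')$ must induce the same map on $KKO$, which reduces to checking that the basic stability and shift isomorphisms satisfy the expected coherence relations, ultimately traceable to associativity of the graded Clifford fusion and of the Kasparov product itself.
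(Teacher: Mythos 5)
Your overall architecture (reduce everything to $C\ell_{1,1}$-stability plus a ``Clifford shift'' that moves Clifford factors between the two slots of $KKO$) matches the paper's, and your Step 1 is correct: $C\ell_{1,1}\cong M_2(\R)$ is graded Morita equivalent to $\R$, so tensoring either argument with $C\ell_{1,1}$ does nothing on $KKO$.

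However, Step 2 contains a genuine error. You claim the shift
$KKO(A\hat\otimes C\ell_{0,1},B)\cong KKO(A,B\hat\otimes C\ell_{1,0})$
is realised by an invertible class in $KKO(C\ell_{0,1},C\ell_{1,0})$, represented by $C\ell_{1,0}$ as a $(C\ell_{0,1},C\ell_{1,0})$-bimodule. Neither part of this survives scrutiny. First, there is no such bimodule: a left $C\ell_{0,1}$-action on the right $C\ell_{1,0}$-module $C\ell_{1,0}$ amounts to a graded $*$-homomorphism $C\ell_{0,1}\to\End_{C\ell_{1,0}}(C\ell_{1,0})\cong C\ell_{1,0}$, which would send the odd skew-adjoint generator $\rho$ (with $\rho^2=-1$) to an odd self-adjoint $t\gamma$ (with $(t\gamma)^2=t^2\geq 0$) --- impossible. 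Second, and independently, the group $KKO(C\ell_{0,1},C\ell_{1,0})$ is actually zero: by the very theorem being proved (or by direct computation) it is isomorphic to $KO_{-2}(\R)\cong KO_6(\R)=0$, so it contains no invertible element. The same obstruction rules out the ``three variants'' you mention: $KKO(C\ell_{1,0},C\ell_{0,1})\cong KO_2(\R)\cong\Z_2$, again with no invertible element because any candidate inverse would live in the zero group.

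The reason the shift nevertheless holds is that it is not implemented by a Kasparov product against a single bivariant class of this form. Rather, one uses the external product with identity classes: the map $\sigma_C:KKO(A,B)\to KKO(A\hat\otimes C,B\hat\otimes C)$, $\phi\mapsto 1_C\hat\otimes\phi$, applied with $C=C\ell_{r,s}$ and $C=C\ell_{s,r}$, together with $C\ell_{1,1}$-stability from Step 1. Explicitly, starting from $KKO(A\hat\otimes C\ell_{p,q}\hat\otimes C\ell_{s,r},B)$, apply $\sigma_{C\ell_{r,s}}$ to land in $KKO(A\hat\otimes C\ell_{p,q}\hat\otimes C\ell_{r+s,r+s},B\hat\otimes C\ell_{r,s})$, then cancel the matrix algebra $C\ell_{r+s,r+s}$ by stability; the resulting composite is the isomorphism, and the only invertible class used is the Morita equivalence $\mu\in KKO(C\ell_{1,1},\R)$. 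This is what the paper's ``Hence'' step is abbreviating, and it is also the step the ``secondary subtlety'' of coherence in your main-obstacle paragraph should be attached to. You should rewrite Step 2 around the external product plus $\mu$ rather than a nonexistent bimodule class.
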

\begin{proof}
We note that $C\ell_{n,n}\cong\End_\R(\bigwedge^*\R^n)$. 
Therefore $C\ell_{n,n}
\cong \calK(\calH)$ for $\calH = \bigwedge^*\R^n$ 
and by stability $KKO(A\hat\otimes C\ell_{r,s},B) 
\cong KK(A\hat\otimes C\ell_{r+1,s+1},B)$, since 
$C\ell_{r+1,s+1}\cong C\ell_{r,s}\hat\otimes C\ell_{1,1}$. Hence
$$ 
KKO(A\hat\otimes C\ell_{p,q}, B\hat\otimes C\ell_{r,s}) 
\cong KKO(A\hat\otimes C\ell_{p,q}\hat\otimes C\ell_{s,r},B) \cong KK(A\hat\otimes C\ell_{p+s,q+r},B). 
$$
Up to stable isomorphism, the algebra $C\ell_{p+s,q+r}$ depends solely on $(p+s)-(q+r) = (p-q)-(r-s)$.
\end{proof}

\begin{remark}
Theorem \ref{thm:Kasparov_simplify_K_pK^q} 
implies that it is sufficient to define higher 
$KKO$-groups by tensoring by real Clifford algebras 
of the form $C\ell_{n,0}$ or $C\ell_{0,n}$ 
(though cases like $C\ell_{r,s}$ may still arise in examples). 

By Theorem \ref{thm:Kasparov_simplify_K_pK^q}, we find that
$$  
KKO(A\hat\otimes C\ell_{n,0},B) \cong KKO(A, B\hat\otimes C\ell_{0,n}). 
$$
It is clear that in the real picture  the placement of a Clifford algebra 
on the left or right in the bivariant group $KKO(\cdot,\cdot)$ is important. 
Furthermore, there is a difference between the algebra $C\ell_{n,0}$ and 
$C\ell_{0,n}$ that does not occur in the complex theory.
\end{remark}

We now clarify the relation between real $KK$-groups and real $K$-theory.
\begin{prop}[\cite{Kasparov80}, \S{6} Theorem 3] 
\label{prop:real_Real_k_with_kasparov_equivalence}
For trivially graded, $\sigma$-unital real algebras $A$, 
we have the isomorphism $KKO(C\ell_{n,0},A) \cong KO_n(A)$.
\end{prop}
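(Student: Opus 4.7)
My plan is to reduce the general $n$ to the base case $n=0$ via the Clifford periodicity from Theorem 2.3, and then prove the fundamental identification $KKO(\R,B)\cong KO_0(B)$ directly at the level of Kasparov representatives.

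For the reduction: since $KKO(C\ell_{n,0},A)$ has invariant $(p-q)-(r-s)=n-0=n$ and $KKO(\R,\,A\hat\otimes C\ell_{0,n})$ has invariant $0-(0-n)=n$, Theorem 2.3 gives a canonical isomorphism
\[
KKO(C\ell_{n,0},A) \;\cong\; KKO(\R,\,A\hat\otimes C\ell_{0,n}).
\]
Combined with the Karoubi/Atiyah--Bott--Shapiro definition of higher real K-theory, $KO_n(A)\cong KO_0(A\hat\otimes C\ell_{0,n})$, the proposition reduces to the claim $KKO(\R,B)\cong KO_0(B)$ for trivially graded $\sigma$-unital $B$, applied to $B=A\hat\otimes C\ell_{0,n}$.

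For the base case I would construct mutually inverse maps. A $KO_0$-class $[P]-[Q]$, with $P,Q$ projections in some $M_N(B^\sim)$, lifts to the Kasparov module $(\R,\,PB^N\oplus QB^N,\,0,\,\mathrm{diag}(1,-1))$; being finite projective, all endomorphisms lie automatically in $\End_B^0(E)$, so this is a legitimate Kasparov module. Conversely, given $(\R,E_B,F,\gamma)$, the trivial $\R$-action collapses the Kasparov axioms to: $F$ odd self-adjoint with $1-F^2\in \End_B^0(E)$. With respect to the grading $E_B = E^+\oplus E^-$, the off-diagonal piece $F_+\colon E^+\to E^-$ is then an abstract Fredholm operator (its defects $F_+^*F_+-1,\,F_+F_+^*-1$ are compact), and one assigns to it a formal-index class $[\ker F_+]-[\ker F_+^*]\in KO_0(B)$ after stabilising so that the kernels are finite projective.

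The main obstacle is verifying invariance of this assignment under homotopy and the addition of degenerate modules, together with the ability always to compress to a finite projective representative. Here I would invoke the real version of Kasparov's stabilisation theorem together with his technical lemma, both of which transfer from the complex to the real setting without modification, to homotope any Kasparov module into one supported on finite projective kernels with $F=0$ off the kernel. Once invariance is established, checking that the two maps are mutually inverse is routine: starting from $[P]-[Q]$ produces a module whose formal index is manifestly $[P]-[Q]$, while any Kasparov module is homotopic (modulo degenerates) to one of the standard form above. Combined with the Clifford periodicity reduction, this yields the desired isomorphism $KKO(C\ell_{n,0},A)\cong KO_n(A)$.
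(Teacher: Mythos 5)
The paper itself gives no proof of this proposition --- it is cited directly from Kasparov's 1980 paper (\S{6}, Theorem~3) --- so there is no in-text argument of the paper to compare against; your outline instead follows the standard modern route for proving this kind of statement (the real analogue of the proof of $KK(\C,B)\cong K_0(B)$ in, e.g., Blackadar). That overall strategy --- reduce to a base case via Clifford periodicity and then build explicit mutually inverse maps at the level of Kasparov representatives, invoking stabilisation and Kasparov's technical theorem --- is sound, and your identification of the formal index $[\ker F_+]-[\ker F_+^*]$, together with the acknowledgement that compressing to a partial isometry with finitely generated projective kernel and cokernel is where the real work lies, is the correct picture.

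There is, however, a genuine gap in your reduction step. For $n\geq 1$ the algebra $A\hat\otimes C\ell_{0,n}$ is \emph{not} trivially graded: it carries the Clifford grading, and indeed the Clifford periodicity isomorphism $KKO(C\ell_{n,0},A)\cong KKO(\R, A\hat\otimes C\ell_{0,n})$ uses that grading in an essential way. So your base case ``$KKO(\R,B)\cong KO_0(B)$ for trivially graded $B$'' cannot be applied to $B=A\hat\otimes C\ell_{0,n}$ as written, and the Karoubi/Atiyah--Bott--Shapiro identification you invoke actually describes $KO_n(A)$ in terms of \emph{graded} finitely generated projective $A\hat\otimes C\ell_{0,n}$-modules modulo those extending to $C\ell_{0,n+1}$-actions, not the ordinary $K_0$-group of the ungraded algebra underlying $A\hat\otimes C\ell_{0,n}$. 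To close the gap you should either prove a graded version of the base case (identifying $KKO(\R,B)$ for graded $B$ with the graded $K$-group built from graded f.g.p.\ modules), or, closer to Kasparov's own argument, bypass the reduction entirely and work directly with $C\ell_{n,0}$-equivariant Fredholm modules over the trivially graded $A$, identifying them with the van Daele/Karoubi picture of $KO_n(A)$; the Clifford-index discussion in Section~\ref{subsec:KKO_Clifford_index} of the paper is carrying out exactly this kind of identification. Relatedly, your explicit map from $[P]-[Q]$ to $(\R,PB^N\oplus QB^N,0,\mathrm{diag}(1,-1))$ is the right construction when $B$ is trivially graded, but for graded $B$ the grading on $PB^N\oplus QB^N$ must be made compatible with the grading of $B$, which your construction does not yet do.
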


Proposition \ref{prop:real_Real_k_with_kasparov_equivalence} implies that if $A\cong C(X)$  for some compact Hausdorff space $X$, then the group $KKO(C\ell_{n,0},C(X))\cong KO^{-n}(X)$ (note the sign change that one can ignore in the complex setting) and so we are back in the setting of Atiyah's $KO$-theory for spaces (see, for example,~\cite{SpinGeometry} for more on topological $KO$-theory). The reader may also consult~\cite{BL15} for a useful characterisation of $KO_n(A)$ for real $C^*$-algebras $A$ in terms of unitaries and involutions.

Like the complex case, there is an equivalence between 
short exact sequences of real $C^*$-algebras and certain real Kasparov modules, where
\begin{align} \label{eq:Ext_KKO_equivalence}
  \mathrm{Ext}_\R(A,B) &\cong KKO(A\hat\otimes C\ell_{0,1},B) \cong KKO(A, B\hat\otimes C\ell_{1,0}) 
\end{align}
for real, nuclear and separable algebras $A$ and $B$~\cite[\S{7}]{Kasparov80}.

We also briefly consider Bott periodicity. Because $KK$-groups 
are stable and Clifford algebras encode an algebraic periodicity 
with $C\ell_{0,8}\cong C\ell_{8,0}\cong M_{16}(\R)$, it follows 
that $KKO(A\hat\otimes C\ell_{8,0},B) \cong KKO(A, B\hat\otimes C\ell_{8,0}) \cong KKO(A,B)$. 
We would like to relate the algebraic periodicity of the $KK$-groups 
to a topological periodicity. Kasparov defines the suspension of an 
algebra $A$ by $\Sigma A = C_0(\R) \otimes A$, where $\Sigma A$ has the $\Z_2$-grading coming from $A$
only. A complicated argument (involving the product) shows that
$$  
KKO(\Sigma^n A\hat\otimes C\ell_{n,0}, B) \cong KKO(A, B) \cong KKO(A,\Sigma^n B \hat\otimes C\ell_{n,0}), 
$$
which relates algebraic periodicity to the more familiar topological periodicity 
(see~\cite[\S 5]{Kasparov80} for a proof). In the real setting, one also has the 
`dual suspension' coming from the 
real algebra 
$$
C_0(i\R) = \left\{f\in C_0(\R,\C)\,:\,  \ol{f(x)} = f(-x)\right\}, 
$$ 
where we define $\wh{\Sigma}A = C_0(i\R)\otimes A$ (with $C_0(i\R)$ trivially 
graded). The two suspensions arise 
because the complex algebra $C_0(\R,\C)$ has two different Real involutions, 
namely pointwise complex conjugation and $f^\tau(x) = \ol{f(-x)}$. Taking the 
real subalgebras of $C_0(\R,\C)$ invariant under the Real involutions gives $C_0(\R)$ and 
$C_0(i\R)$ respectively.
One finds that
$$
  KKO(\wh{\Sigma}^n A\hat\otimes C\ell_{0,n}, B) \cong KKO(A, B) \cong KKO(A,\wh{\Sigma}^n B \hat\otimes C\ell_{0,n})
$$
or $KKO(\Sigma \wh{\Sigma} A, B) \cong KKO(A,B) \cong KKO(A, \Sigma \wh{\Sigma}B)$.
See~\cite{SchroderKTheory} for more information.


\section{Simplifying the product spectral triple} \label{sec:Prod_mod_degenerate_off_kernel}
Recall from Lemma \ref{lemma:real_index_pairing_product} that the real index pairing 
$[H^G]\hat\otimes_A [\lambda]$ for $G\subset\{1,T\}$ is represented by the unbounded module 
\begin{equation} \label{eq:product_spec_trip}
 \left(C\ell_{n,0}\hat\otimes C\ell_{0,d},\, P_\mu(E^{\oplus N} \otimes_A \calH_b) \otimes \bigwedge\nolimits^{\!*} \R^d,\, \sum_{j=1}^d P_\mu(1\otimes_\nabla X_j)P_\mu \otimes\gamma^j,\, (\Gamma \otimes 1)\hat\otimes \gamma_{\bigwedge^* \R^d} \right). 
\end{equation}
We let $P\wt{X}P = \sum_{j=1}^d P_\mu(1\otimes_\nabla X_j)P_\mu \otimes\gamma^j$ and $\calH = P_\mu(E^{\oplus N} \otimes_A \calH_b)$. Associated to the real spectral triple of Equation \eqref{eq:product_spec_trip} is the real Fredholm module
$$  \left( C\ell_{n,d}, \calH, \wt{F}, \gamma \right), $$
where $\wt{F} = P\wt{X}P(1+(P\wt{X}P)^2)^{-1/2}$~\cite{BJ83}. Because $P\wt{X}P$ is self-adjoint and graded-commutes with the Clifford action, so does $\wt{F}$. Hence $[\pi(c),\wt{F}]_{\pm} = \pi(c)(\wt{F}-\wt{F}^*)=0$ for any $c\in C\ell_{n,d}$. What stops the Fredholm module being degenerate is that $(1-\wt{F}^2)\in\calK(\calH)$ is not necessarily zero. 

We use the (real) polar decomposition of $\wt{F} = V|\wt{F}|$ from~\cite[Theorem 1.2.5]{LiRealOpAlgs} and note that $\Ker(V) = \Ker(\wt{F}) = \Ker(P\wt{X}P)$. Because $\Ker(V) = \Ker(\wt{F})$, we can take the operator homotopy $F_t = V|\wt{F}|^t$, $t\in[0,1]$ to obtain the Fredholm module $\left( C\ell_{n,d},\calH, V, \gamma \right)$, which represents a class in $KKO(C\ell_{n,d},\R)$. The partial isometry $V$ is a real Fredholm operator as $1_\calH - V^*V$ is a finite-rank projection and so $V$ has a pseudo-inverse. 

Finally we write $\left( C\ell_{n,d},\calH, V, \gamma \right)
=\left( C\ell_{n,d},\Ker(P\wt{X}P), 0, \gamma \right)\oplus \left( C\ell_{n,d},V^*V\calH, V, \gamma \right)$, 
and the second summand is degenerate. Thus the 
$KK$-class and so the index depends only on the former module.

Summing up our discussion, the topological properties of the 
product spectral triple of Equation \eqref{eq:product_spec_trip} 
are wholly contained in the real Fredholm index of $V$ and, 
hence, are determined by $\Ker(V) = \Ker(P\wt{X}P)$. 
Therefore it suffices to consider the Clifford module 
properties of $\Ker(P\wt{X}P)$ as we have done in Section \ref{subsec:KKO_Clifford_index}.

\end{document}